\theoremstyle{plain}
\newtheorem{thm}{Theorem}[section]
\newtheorem{prop}[thm]{Proposition}
\newtheorem{lm}[thm]{Lemma}
\newtheorem{dfn}[thm]{Definition}
\newtheorem{remarks}[thm]{Remarks}
\newtheorem{example}{Example}
\def\N{\mathbb{N}}
\def\R{\mathbb{R}}
\def\P{\mathbb{P}}
\def\Tr{\mathrm{Tr}}
\def\Id{\mathrm{Id}}
\def\Im{\mathrm{Im}}
\def\Re{\mathrm{Re}}
\begin{document}
\title{On the mean-field approximation of many-boson dynamics.}
\author{Quentin~Liard \thanks{LAGA, Universit\'e de Paris 13, UMR-CNRS 7539, Campus de Villetaneuse, France.}}

\maketitle

\begin{abstract}
We show under general assumptions that the mean-field  approximation for quantum many-boson systems is accurate. Our contribution unifies and improves most of the known results.
The proof uses general properties of quantization in infinite dimensional spaces,
phase-space analysis and measure transportation techniques.
\end{abstract}

\noindent
\footnotesize {{\it Keywords}: Mean-field limit, second quantization,
  Wigner measures, continuity equation, \\ 2010 Mathematics subject  classification: 81S05, 81T10, 35Q55, 28A33}


\section{Introduction}
The mean-field theory provides a fair approximation  of  the dynamics and the ground state
energies of many-body quantum systems. For nearly two decades, the subject has attracted a significant  attention from the mathematical physics community and has become widely studied to this day.   Mainly  one can  distinguish  between  boson  and fermion systems with several  physically motivated scaling regimes \cite{ErScYau4}. In particular, for bosonic systems there are at least three different regimes that depend on the range of the interatomic interaction and which yield the following mean-field equations:
\begin{description}
  \item[(i)] The Gross-Pitaevskii equation  (see for instance \cite{CHPS,ErScYau1,ErScYau2,GSS,KlMa}).
  \item[(ii)] The NLS equation discussed for instance in \cite{ABGT,AmBr,ErScYau4,Hol}.
  \item[(iii)] The Hartree equation which is our main interest here.
\end{description}
The first contributions on the derivation of the Hartree dynamics  were achieved in \cite{Give1,Hep,Spo}. Several
methods and results have since been elaborated, see for instance \cite{BGGM,BGM,ElSc,ErScYau4,FGS,KnPi,RoSc}.
While the  question of accuracy of the mean-field approximation is now well-understood for the most significant examples of quantum mechanics, it has no satisfactory general mathematical answer. In fact, most of the known works in the subject deal with  a specific model or a specific choice of quantum states. Our aim here is to show that the mean-field approximation for bosonic systems is rather a general principle that depends very little on these above-mentioned specifications.

To enlighten  the discussion, we recall the meaning of  the convergence of many-body quantum dynamics towards the Hartree evolution in a concrete example and postpone the abstract framework to the previous section.  Generally, the Hamiltonian of many-boson systems has the following form,
\begin{equation}
\label{model0}
H_{N}=\sum_{i=1}^{N}-\Delta_{x_{i}}+V(x_i)+\frac{1}{N}\sum_{1\leqslant i<j\leqslant N}W(x_i-x_j)\,,\quad x_i,x_j\in \R^d\,,
\end{equation}
where $V$ is a real measurable function, $W$ is a real potential, satisfying $W(-x)=W(x),\,x \in \R^{d}$. It is in principle  meaningful to include multi-particles interactions but to keep the presentation simple we avoid to do so (see \cite{AmNi3,BQ,CP2}). Since we are dealing with bosons we assume that $H_N$ is a self-adjoint operator on the symmetric tensor product space $L_s^{2}(\R^{dN})$. Recall that $L_s^{2}(\R^{dN})$ is the space of square integrable functions which are invariant with respect to any permutations of the coordinates $(x_1,\cdots,x_n)\in \R^{dN}$ with $x_i\in\R^d$ for $i=1,\cdots,N$.
 Suppose that the many-body quantum system is prepared at an initial state  $|\Psi^{(N)}\rangle \langle \Psi^{(N)}|$, such that $\Psi^{(N)}\in L_s^{2}(\R^{dN})$, then according to the Heisenberg equation the time-evolved state is given by,
$$
\varrho_N(t):=|e^{-itH_{N}}\Psi^{(N)}\rangle \langle e^{-itH_{N}}\Psi^{(N)}|\,.
$$
The mean-field approximation  provides the first asymptotics of physical measurements  in the state $\varrho_N(t)$ when the number of particles $N$ is sufficiently large. Precisely, the approximation
deals with the following quantities,
\begin{eqnarray}
\label{mflim}
\lim_{N\to\infty}\Tr[\varrho_N(t)\, B\otimes 1^{\otimes (N-k)}]\,,
\end{eqnarray}
where $B$ is a given observable on the $k$ first particles ($k$ is kept fixed while $N\to +\infty$). It is common to express the above mean-field limit \eqref{mflim} with the language of  reduced density matrices. However, here we use the point of view of Wigner measures introduced by Ammari and Nier in  \cite{AmNi1}. In fact  one can prove that, up to extracting a subsequence, there exists a Borel probability measure $\mu_0$ on $L^{2}(\R^{d})$ such that
\begin{equation}
\label{lim0}
\lim_{N\to\infty}\Tr[\varrho_N(0) \,B\otimes 1^{\otimes (N-k)}]=\int_{L^{2}(\R^{d})} \langle z^{\otimes k} , B z^{\otimes k}\rangle_{L^{2}(\R^{dk})} \, d\mu_0(z)\,,
\end{equation}
for any compact operator $B\in\mathcal{L}(L^{2}(\R^{dk}))$, $1\leq k\leq N$.  Thus, this allows to understand the structure of the limit \eqref{mflim} at time $t=0$. Moreover, there is no loss of generality if we suppose that \eqref{lim0} holds true for the sequence of states $(\varrho_N(0))_{N\in\N}$.
Once this is observed then the accuracy of the mean-field approximation is equivalent to say that for all times $t\in\R$,
\begin{equation}
\label{cvrdm}
\lim_{N\to\infty}\Tr[\varrho_N(t) \,B\otimes 1^{\otimes (N-k)}]=\int_{L^{2}(\R^{d})} \langle z^{\otimes k} , B z^{\otimes k}\rangle_{L^{2}(\R^{dk})} \, d\mu_t(z)\,,
\end{equation}
 with $\mu_t=\Phi(t,0)_\sharp\mu_0$ is the push-forward or the image measure of $\mu_0$ by the nonlinear flow $\Phi(t,0)$ which solves the nonlinear Hartree equation,
\begin{equation}
\label{field-eq}
      i  \partial_{t}z=(-\Delta+V) z+W*|z|^{2}z\,.
\end{equation}
For more details about the relationship between Wigner measures and reduced density matrices, we refer to \cite{AmNi3,AmFaPa}. Actually, the assertion  \eqref{cvrdm} implies the convergence of reduced density matrices in the trace-class norm if the probability measure
$\mu_0$ is concentrated on the unit sphere of $L^2(\R^d)$. Notice also that the convergence \eqref{lim0} was reformulated later on as a non-commutative weak de Finetti Theorem in \cite{LNR}.

The main result of the present article is stated in Theorem \ref{thm.main}. It shows the statement \eqref{cvrdm}-\eqref{field-eq} in an abstract framework and for  arbitrary states $\varrho_N(0)$ with $-\Delta+V$  replaced by a self-adjoint operator $A$ and $W$ by a general two-body interaction.  Of course one needs some assumptions on $\varrho_N(0)$, $A$ and the interaction. But these assumptions  seem quite general and allow to recover most of the known results in the literature and to prove new ones. We give below a brief comment on the hypothesis of Theorem \ref{thm.main}. We assume that the quantum dynamics are well defined by quadratic form perturbation argument (see assumptions \eqref{A1}-\eqref{A2}) and suppose directly that the Cauchy problem is well-posed in a suitable sense given by assumption \eqref{C1}. On the other one needs two essential requirements on $\varrho_N(0)$ and  the interaction respectively. In fact, we assume that the quantum states have asymptotically finite kinetic energy at time $t=0$, i.e.:
\begin{equation}
\label{regu}
\Tr[\varrho_N(0) \, A\otimes 1^{\otimes (N-1)}]\leq C\,,
\end{equation}
uniformly in $N$.  This is a natural assumption and in some sense a minimal one if we use energy type methods to deal with the quantum and the mean-field dynamics. The second requirement concerns the two-body interaction and it is given by the abstract conditions \eqref{D1} or \eqref{D2} explained in the next section. If we stick to the example \eqref{model0} then the condition \eqref{D1} says that $A=-\Delta+V$ should have a compact resolvent and there exists a dense subspace $D\subset D(|A|^{1/2})$ such that for any $\xi\in D$,
\begin{equation*}
\begin{aligned}
\lim_{\lambda\to+\infty} ||\langle \xi|\otimes (A+1)^{-\frac{1}{2}} \,W(x_1-x_2)
(A_1+A_2+\lambda)^{-\frac{1}{2}}||_{\mathcal{L}(L_s^2(\R^{2d}), L^2(\R^d))}=0\,,\\
\lim_{\lambda\to+\infty} ||\langle \xi|\otimes (A+\lambda)^{-\frac{1}{2}} \,W(x_1-x_2)\, (A_1+A_2+1)^{-\frac{1}{2}}||_{\mathcal{L}(L_s^2(\R^{2d}),L^2(\R^d))}=0\,;
\end{aligned}
\end{equation*}
while \eqref{D2} says that there exists a dense subspace $D\subset D(|A|^{1/2})$ such that for any $\xi\in D$, the operator $
\langle \xi|\otimes (A+1)^{-\frac{1}{2}} \,W(x_1-x_2)
(A_1+A_2+1)^{-\frac{1}{2}}$ is compact from $L_s^2(\R^{2d})$ into $L^2(\R^d)$. Here $A_1,A_2$ denote the operator $A$ acting on the variable $x_1$ and $x_2$ respectively.
In particular, the mean-field approximation, with $H_N$ given in
 \eqref{model0}, is accurate in the following typical cases:
\begin{itemize}
  \item {\it{Trapped bosons}:}  If $V$ is a confining potential, for example
  $V(x)=|x|^{2}$, and  $W$ is the  potential:
$$\,\,W(x)=\frac{\lambda}{|x|^{2}}, \text{ with } |\lambda|<\frac{1}{4} \text{ and } \,d=3.$$
  \item {\it{Non-trapped bosons}:} If $V$  is infinitesimally   $-\Delta$-form bounded, for example
   $V=0$ or $V=\frac{\pm 1}{|x|}$, and $W$ is the potential:
$$\,\,W(x)=\frac{\lambda}{|x|^{\alpha}},\, \text{ with } \lambda\in\mathbb{R}\,, \; 0\leq \alpha<2 \text{ and } \,d=3.$$
\end{itemize}
 Our result emphasizes in particular the fact that the accuracy of the mean-field approximation depends very much on the criticality of the interaction and the regularity of the initial states rather than the structure of the initial states or the exact model considered. Here
 the criticality means whether the interaction is proportional  or dominated by the kinetic energy.  The method we use follows the one introduced in \cite{AmNi4} which is based on  general properties of Wick quantization in infinite dimensional spaces, Wigner measures and measure transportation techniques. We improve and simplify this method at several steps. For instance, we consider states $\varrho_N(0)$ in the symmetric tensor product $\bigvee^N\mathcal Z_0,$ where $\mathcal Z_0$ is a separable Hilbert space and avoid to work with states in the symmetric Fock space.  Moreover, the key argument related to the convergence is clarified and generalized to an abstract setting.
 The adaptation of measure transportation techniques in \cite{AGS} to handle the mean-field convergence problem was done in \cite{AmNi4} under somewhat restrictive conditions. The above techniques are recently extended in \cite{W1}  to a wider setting  which we briefly recall in Appendix B. This improvement plays a curial rule in the proof of our main result.

\bigskip
\noindent
\emph{Overview:} The main result of this article is presented in detail and illustrated with several examples in the following section. Self-adjointness and existence of the quantum dynamics
is discussed in Section \ref{se.Quantum}. The proof of our main Theorem \ref{thm.main} goes through three steps: A Duhamel-type formula in Section \ref{sub.defqj}, a convergence
argument in Section \ref{se.convarg} and a uniqueness result for a Liouville equation in Section
\ref{se.liouville}. The technical tools used along the article are explained in Appendix A and B and concern the Wick quantization, Wigner measures and transport along characteristics curves.

\section{Preliminaries and results}
\label{se.prem-resut}
In this section we introduce a general setting suitable for the study  of Hamiltonians of  many-boson systems. Then we briefly recall the notion of Wigner measures and state the main results of the present article. We will often use conventional notations.  In particular, the Banach space of bounded (resp.~compact)  operators from one Hilbert space $\mathfrak{h}_1$ into another one $\mathfrak{h}_2$ is denoted by $\mathcal{L}( \mathfrak{h}_1,\mathfrak{h}_2)$  (resp.~$\mathcal{L}^\infty( \mathfrak{h}_1,
\mathfrak{h}_2))$. If $C$ (resp.~$q$) is an operator (resp.~a quadratic form) on a Hilbert space then $D(C)$ (resp.~$Q(q)$) denotes its  domain. In particular, if
$C$ is a self-adjoint operator then $Q(C)$ denotes its form domain (i.e. the subspace $D(|C|^{\frac{1}{2}})$).

\bigskip
\noindent
\textit{General framework:}
Let $\mathcal{Z}_{0}$ be a separable Hilbert space. The $n$-fold tensor product of $\mathcal{Z}_{0}$ is denoted by
$\otimes^{n}\mathcal{Z}_{0}$.  There is an action $\sigma\in\varSigma_{n} \mapsto \Pi_{\sigma}$ of  the $n-$th symmetric group $\varSigma_{n}$ on $\otimes^{n}\mathcal Z_{0}$  verifying
\begin{equation}
\label{inter-op}
\Pi_{\sigma} (f_{1}\otimes\cdots\otimes f_{n})=f_{\sigma_{1}}\otimes\cdots\otimes f_{\sigma_{n}}\,.
\end{equation}
Hence, each $\Pi_{\sigma}$ extends to an unitary operator  on $\otimes^{n}\mathcal Z_{0}$ with the relation
$\Pi_{\sigma} \Pi_ {\sigma'}=\Pi_{\sigma\circ\sigma'}$ satisfied for any $\sigma,\sigma'\in\varSigma_{n}$.
Furthermore, the average of all  these operators  $(\Pi_\sigma)_{\sigma\in\varSigma_n}$ yields
an orthogonal projection  on $\otimes^{n}\mathcal Z_{0}$,
\begin{equation}
\label{sym}
\mathcal{S}_{n}=\frac{1}{n!}\sum_{\sigma\in
  \varSigma_{n}}\Pi_{\sigma} \,.
\end{equation}
The symmetric $n$-fold tensor product of $\mathcal{Z}_0$ is the Hilbert subspace
$$
\bigvee^{n}
\mathcal{Z}_{0}=\mathcal{S}_{n}(\mathcal{Z}_{0}^{\otimes n})\,.
$$
Consider now an  operator $A$ on $\mathcal{Z}_0$ and suppose that:\\
\\
{\bf Assumption \eqref{A1}}:
\\
\begin{align*}
\label{A1}
\tag{\bf{A1}}
A \textit{ is a non-negative and self-adjoint operator on $\mathcal{Z}_0$}.
\end{align*}
\\
For $i=1,\cdots,n$, let
$$
A_i=1^{\otimes (i-1)} \otimes
A\otimes 1^{\otimes (n-i)},
$$
where the operator $A$ in the right hand side acts on the $i^{th}$ component.  The  free Hamiltonian of a many-boson system,
\begin{equation}
\label{ham0}
H_{N}^{0}=\sum_{i=1}^{N}A_{i}\,,
\end{equation}
is  a self-adjoint non-negative operator on $\bigvee^{N}\mathcal Z_{0}$.   In order to introduce a two particles interaction in an abstract setting we consider a symmetric quadratic form
$q$ on  $Q(A_{1}+A_{2})\subset \otimes^{2}\mathcal Z_{0}$. Here $A_1+A_2$ is considered as an operator on $\otimes^{2}\mathcal Z_{0}$ and the subspace $Q(A_{1}+A_{2})$ contains in particular non-symmetric vectors. Throughout this paper we suppose:\\
\\
{\bf Assumption \eqref{A2}}:
\\
\begin{equation*}
\tag{\bf{A2}}
\label{A2}
\begin{aligned}
&\textit{$q$ is a symmetric sesquilinear form on
$Q(A_{1}+A_{2})$ satisfying  :}&\\
&\exists \,0<a<1
,\,b>0, \;\;\forall u \in Q(A_{1}+A_{2}) ,\;|q(u,u)|\leq a \langle
u,(A_{1}+A_{2})u \rangle_{\otimes^{2}\mathcal
  Z_{0}}\,+b \|u\|^{2}_{\otimes^{2}\mathcal
  Z_{0}}\,.
\end{aligned}
\end{equation*}
As a consequence of the above assumption, $q$ can be identified with a bounded  operator
$\tilde q,$ satisfying the relation:
\begin{equation}
\label{qtilde}
q(u,v)=\langle u, \tilde q \,v\rangle_{\otimes^{2}\mathcal Z_{0}}\,, \quad \forall u,v\in Q(A_{1}+A_{2})\,,
\end{equation}
and $\tilde q$ acts from the Hilbert space $Q(A_{1}+A_{2})$  equipped with the graph norm into its dual $Q'(A_{1}+A_{2})$ with respect to the inner product of $\otimes^{2}\mathcal Z_{0}$. \\
Now, we define a collection of  quadratic forms $(q_{i,j}^{(n)})_{1\leq i<j\leq n}$ by
\begin{equation}
\label{qij}
q_{i,j}^{(n)}(\varphi_{1}\otimes \varphi_{2}\otimes \cdots \otimes
\varphi_{n},\psi_{1}\otimes \psi_{2}\otimes \cdots \otimes
\psi_{n})=q(\varphi_{i}\otimes \varphi_{j},\psi_{i}\otimes
\psi_{j})\; \prod_{k\neq i,j}\langle \varphi_{k},\psi_{k}\rangle\,,
\end{equation}
for  any $\varphi_{1},\cdots,\varphi_n, \psi_1,\cdots,\psi_n$  in $Q(A)$.
By linearity all the $q_{i,j}^{(n)}$ extend to well defined quadratic forms  on the algebraic tensor product $\otimes^{alg,n}Q(A)$.
 Using the assumptions \eqref{A1}-\eqref{A2}, we prove in Lemma \ref{ses-qij} that each   $q_{i,j}^{(n)}$, $1\leq i<j\leq n$, extends uniquely to  a symmetric quadratic form on $Q(H_n^0)\subset\bigvee^n\mathcal{Z}_0$.\\
We now consider the \textit{many-boson Hamiltonian} to be the quadratic form on $Q(H_N^0)$
given by
\begin{equation}
\label{eq.hn}
H_{N}=\sum_{i=1}^{N}A_{i}+\frac{1}{N}\sum_{1\leqslant i<j\leqslant N}q_{i,j}^{(N)}=H_{N}^{0}+q_{N}.
\end{equation}
The assumptions \eqref{A1}-\eqref{A2} imply the existence of the many-boson
dynamics. In fact there exists a unique self-adjoint
operator, denoted again by $H_N$, associated to the quadratic
form \eqref{eq.hn} (see  Proposition \ref{pr.aa}).

\bigskip
\noindent
\textit{The mean-field dynamics:}
Let $(Q(A), \|\cdot\|_{Q(A)})$ be the domain form of the non-negative self-adjoint operator
$A$ equipped with the graph norm,
\begin{align*}
\|u\|_{Q(A)}^{2}=\langle u,(A+1)u\rangle,\,u \in Q(A)\,,
\end{align*}
and $Q'(A)$ its dual with respect to the inner product of $\mathcal Z_0$.
The quadratic form $q$ defines a quartic monomial,
$$
z\in Q(A)\mapsto q_0(z):=\frac{1}{2} q(z^{\otimes 2},z^{\otimes 2})\,,
$$
which is G\^ateaux differentiable on $Q(A)$. Hence, one can define the G\^ateaux derivative of $q_0$ with respect to $\bar{z}$ according to the formula:
\begin{equation}
\label{q0}
\partial_{\bar z}q_0(z)[u]=\frac{1}{2}\bar{\partial}_{\lambda}q((z+\lambda u)^{\otimes 2},(z+\lambda u)^{\otimes 2})_{|_{\lambda=0}}\,.
\end{equation}
For each $z\in Q(A)$, the map $u \mapsto \partial_{\bar z}q_0(z)[u]$ is a anti-linear continuous form on $Q(A)$ and hence $\partial_{\bar{z}}q_0(z)$ can be identified with a vector $\partial_{\bar z}q_0(z)\in Q'(A)$ by the Riesz representation theorem.\\
Consider now the following mean-field equation,
\begin{equation}
\label{field-eq2}
\left\{
    \begin{array}{l c l}
     i\partial_{t}\gamma(t)=A\gamma(t)+\partial_{\bar z}q_{0}(\gamma(t))\,, &&   \\ \\
      \gamma(0)=z_0 \in Q(A)\,.& \,&
   \end{array}
    \right.
\end{equation}
 We recall that a map $I\ni t\to \gamma(t)\in Q(A)$, where $I$ is an interval containing $0$, is a \emph{strong solution} of the Cauchy problem \eqref{field-eq2} if $\gamma\in \mathcal C(I,Q(A)) \cap \mathcal C^{1}(I,Q'(A))$ and \eqref{field-eq2} is satisfied for all $t \in I$.
Remark that \eqref{field-eq2} is    a nonlinear Hamiltonian equation  admitting  two formal conserved quantities, namely the charge $||z||_{\mathcal Z_0}$ and the classical energy:
\begin{equation}
\label{cl-enr}
h(z)=\langle z, A z\rangle+q_0(z)=\langle z, A z\rangle+
\frac{1}{2} q(z\otimes z, z\otimes z)\,.
\end{equation}
The main examples of the above mean-field equation \eqref{field-eq2} are the nonlinear Schr\"odinger and Hartree equations. For a reason that will be clear later, we are interested
only on solutions of \eqref{field-eq2} with initial data $z_0$ belonging to
$B_{\mathcal{Z}_0}(0,1)\cap Q(A)$ where $B_{\mathcal{Z}_0}(0,1)$ denotes the open unit ball of $\mathcal{Z}_0$.
 Assume that the Cauchy problem \eqref{field-eq2} is locally well-posed in the  following suitable sense.
 \\
 \\
\textbf{Assumption} (\textbf{C}):
\\
\begin{description}
  \item (i)   \textit{ There exists  an open interval $I$ containing $0$ such that for any $z_0\in B_{\mathcal{Z}_0}(0,1) \cap Q(A)$ there exists a strong solution of the Cauchy problem \eqref{field-eq2} defined on $I$.}
  \item (ii)  \textit{ For any $M>0$ there exists  $C(M)>0$ such that:}
  \begin{equation}
  \tag{{\bf C}}\label{C1}
  ||\partial_{\bar z}q_{0}(u)-\partial_{\bar z}q_{0}(v)||_{\mathcal{Z}_0} \leq C(M) \,(||u||^2_{Q(A)}+
||v||^2_{Q(A)}) \,||u-v||_{\mathcal{Z}_0}\,,
  \end{equation}
\textit{for all $u,v\in Q(A)$ such that $||u||_{\mathcal{Z}_0},||v||_{\mathcal{Z}_0}\leq M$.}
  \item (iii) \textit{Let $z_0\in B_{\mathcal{Z}_0}(0,1) \cap Q(A)$ and $z_0^{(n)}\underset{n\to\infty}{\to} z_0$ in $ Q(A)$ then for $n$ large enough the strong solutions $\gamma_n$ of \eqref{field-eq2} with $\gamma_n(0)=z_0^{(n)}$ satisfy $\gamma_n\underset{n\to\infty}{\rightarrow}\gamma$ in $\mathcal C(J, Q(A))$ for any compact interval $J\subset I$.}
\end{description}
\
\\
The assumption (i) ensures the existence of strong solutions on an interval $I$ while the assumption (ii) guaranties their uniqueness (see \cite{W1}, proof of Thm.~2.4). The last condition (iii) provides the continuous dependence of solutions of the Cauchy problem \eqref{field-eq2} with respect to  initial data. In particular, we have a well-defined continuous "local flow" map:
\begin{equation}
\label{flow}
\begin{aligned}
\Phi: I\times B_{\mathcal{Z}_0}(0,1) \cap Q(A)&\longrightarrow Q(A)\\
(t,z_0) &\longrightarrow \Phi(t,0)(z_0)=\gamma(t)\,,
\end{aligned}
\end{equation}
where $\gamma$ is the solution of \eqref{field-eq2} with the initial  datum $z_0$.

\bigskip
\noindent
\textit{The Wigner measures:}
The mean-field problem is tackled  here through the Wigner measures method elaborated in \cite{AmNi1,AmNi4}. The idea of these measures  has its roots in the finite dimensional semi-classical analysis. It allows to generalize  the notion of mean-field convergence to states that are neither coherent nor factorized. For ease of reading, we briefly recall their definition here while their main features are discussed in Appendix \ref{se.app}. A normal state $\varrho_{N}$ on $\vee^{N}\mathcal Z_{0}$  is a non-negative trace-class operator such that $\Tr [\varrho_{N}]=1$.
\begin{dfn}
\label{de.wigmeas}
Let $\{\varrho_{N}:=|\Psi^{(N)}\rangle \langle \Psi^{(N)}|\}_{N \in \N}$ be a sequence  of normal states on
$\vee^{N}\mathcal Z_{0}.$ The set $\mathcal{M}(\varrho_{N}, N \in \N)$ of
Wigner measures of $(\varrho_{N})_{N \in \N}$ is the set of Borel probability
measures $\mu$ on $\mathcal Z_{0},$ such that there exists a
subsequence $(N_{k})_{k \in \N}$ satisfying:
\begin{equation}
\label{eq.wignercv}
\forall \xi\in \mathcal Z_{0}\,,\quad
\lim_{k \to +\infty}\langle \Psi^{(N_{k})},\mathcal{W}(\sqrt{2}\pi
\xi)\Psi^{(N_{k})}\rangle =\int_{\mathcal{Z}_{0}}e^{2i\pi
\Re \langle \xi,z \rangle}\;d\mu(z)\,,
\end{equation}
where $\mathcal{W}(\sqrt{2}\pi
\xi)$ is the Weyl operator in the symmetric Fock space defined in Appendix \ref{se.app},\,Definition \eqref{weyl} with
$\varepsilon=\frac{1}{N_k}$.
\end{dfn}
The above definition extends easily to mixed states. Observe that the right hand side of \eqref{eq.wignercv} is the inverse Fourier transform of
the measure $\mu$. So Wigner measures are identified through their characteristic functions.  Moreover, it was proved in \cite[Theorem 6.2]{AmNi1} that the set $\mathcal{M}(\varrho_{N}, N \in \N)$ is non-empty and according to \cite{AmNi1,AmNi2,AmNi3,AmNi4} it is a convenient tool for the study of the mean-field approximation. In particular, it allows to understand the convergence of
reduced density matrices \eqref{cvrdm}, which are  the main analyzed quantities in other
approaches, see for instance \cite{Spo}.

\subsection{Results}
\label{se.result}
\textit{Dynamical result:}
Our main result concerns the effectiveness of the mean-field approximation for general $N$-particle states and under general assumptions \eqref{D1} or \eqref{D2}. Precisely, we show that the time-dependent Wigner measures of evolved states $\varrho_N(t):=|e^{-itH_{N}}\Psi^{(N)}\rangle \langle e^{-itH_{N}}\Psi^{(N)}|$ are the push-forward, by the flow of the mean-field equation \eqref{field-eq2}, of the initial Wigner measures of $\varrho_N(0)$. Eventually, if $\varrho_N(0)$ has only one Wigner measure then  $\varrho_N(t)$
will have also one single Wigner measure described as above. Moreover, the result is applicable to either trapped (assumption \eqref{D1}) or untrapped (assumption \eqref{D2}) systems of bosons.
\\
\\
{\bf Assumption \eqref{D1}}:
\\
\\
 \textit{
$A$ has compact resolvent and there exists a subspace $D$ dense in $Q(A)$ such that for any $\xi\in D$,
\begin{equation}
\label{D1}
\tag{\bf{D1}}
\begin{aligned}
\lim_{\lambda\to+\infty} ||\langle \xi|\otimes (A+1)^{-\frac{1}{2}} \,\mathcal{S}_2\,\tilde q
(A_1+A_2+\lambda)^{-\frac{1}{2}}||_{\mathcal{L}(\bigvee^2\mathcal{Z}_0, \mathcal{Z}_0)}=0\,,\\
\lim_{\lambda\to+\infty} ||\langle \xi|\otimes (A+\lambda)^{-\frac{1}{2}} \,\mathcal{S}_2\,\tilde q (A_1+A_2+1)^{-\frac{1}{2}}||_{\mathcal{L}(\bigvee^2\mathcal{Z}_0, \mathcal{Z}_0)}=0\,.
\end{aligned}
\end{equation}}
Actually, by Assumption \eqref{A2}, the operator
$(A_{1}+A_{2}+1)^{-\frac{1}{2}}\tilde q
(A_{1}+A_{2}+1)^{-\frac{1}{2}}$ is bounded but usually not compact in
applications. Our second main assumption is given below and it implies the two limits in \eqref{D1} (see Lemma \ref{infD2}).\\
\\
{\bf Assumption \eqref{D2}}:
\\
\\
\textit{There exists a subspace $D$ dense in $Q(A)$ such that for any $\xi\in D$,
\begin{equation}
\label{D2}
\tag{\bf{D2}}
\langle \xi|\otimes (A+1)^{-\frac{1}{2}} \,\mathcal{S}_2\,\tilde q
(A_1+A_2+1)^{-\frac{1}{2}}\in \mathcal{L}^{\infty}(\bigvee^2\mathcal{Z}_0, \mathcal{Z}_0)\,.
\end{equation}}

Consider the abstract setting explained in this section with $\mathcal
Z_{0}$ a separable Hilbert space, $A$ a one-particle self-adjoint operator, $q$ a two-body interaction and $H_N$ the many-body Hamiltonian given in \eqref{eq.hn}. Recall that \eqref{A1}-\eqref{A2} and \eqref{C1} guaranty respectively the existence  of quantum and mean-field dynamics; and remember  that $\Phi(t,0)$ denotes the flow that solves the mean-field equation \eqref{field-eq2} and that  $B_{\mathcal Z_0}(0,1)$ denotes the open unit ball of $\mathcal Z_0$.
\begin{thm}
\label{thm.main}
Assume \eqref{A1}-\eqref{A2}-\eqref{C1} and suppose that either \eqref{D1} or \eqref{D2} holds true. Let  $\{\varrho_N=|\Psi^{(N)}\rangle \langle \Psi^{(N)}|\}_{N \in \N}$ be a sequence of normal states on $\bigvee^{{N}}\mathcal Z_{0}$ admitting a unique Wigner measure $\mu_{0}$ and satisfying in addition:
\begin{equation}
\displaystyle
\label{eq.apriori}
\exists C>0, \forall N\in\mathbb{N}, \;\;\langle \Psi^{(N)},H_{N}^{0}\Psi^{(N)}\rangle\leq{C N}\,.
\end{equation}
Then for any time $t\in I$, where $I$ is provided by \eqref{C1}, the family
$$\varrho_N(t)=|e^{-itH_{N}}\Psi^{(N)}\rangle \langle e^{-itH_{N}}\Psi^{(N)}|\,, \quad N \in \N\,,
$$
has a unique Wigner measure $\mu_{t}$ which is a Borel probability measure on $B_{\mathcal{Z}_0}(0,1)\cap Q(A)$. Moreover, $\mu_{t}=\Phi(t,0)_{\sharp}\mu_{0}$ is the image measure or the push-forward of $\mu_0$ by the local flow  $\Phi(t,0)$ that solves the mean-field  equation \eqref{field-eq2} and given in \eqref{flow}.
\end{thm}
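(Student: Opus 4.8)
The plan is to follow the three-step scheme indicated in the overview — a Duhamel identity, a convergence argument, and a uniqueness theorem for a Liouville equation — after a compactness reduction. First I would propagate the a priori bound \eqref{eq.apriori}. Since $H_N$ is conserved by the quantum flow and, by \eqref{A2}, $(1-a)H_N^0-bN\le H_N\le(1+a)H_N^0+bN$ as quadratic forms, one gets
\begin{equation*}
\langle\Psi^{(N)}(t),H_N^0\Psi^{(N)}(t)\rangle\le\frac{1}{1-a}\big(\langle\Psi^{(N)},H_N\Psi^{(N)}\rangle+bN\big)\le C'N
\end{equation*}
uniformly in $t\in I$ and $N$, where $\Psi^{(N)}(t)=e^{-itH_N}\Psi^{(N)}$. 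By the compactness and tightness results recalled in Appendix \ref{se.app} (see \cite{AmNi1}), this forces every Wigner measure $\mu$ of $(\varrho_N(t))_N$ to be a Borel probability measure carried by $Q(A)$ with $\int_{\mathcal Z_0}\langle z,Az\rangle\,d\mu(z)\le C'$. Using in addition the equicontinuity in $t$, uniform in $N$, of the Weyl functionals $t\mapsto\langle\Psi^{(N)}(t),\mathcal W(\sqrt{2}\pi\xi)\Psi^{(N)}(t)\rangle$ and a diagonal extraction over a countable dense subset of $I$, one produces a subsequence (still denoted $N$) and a weak-$\ast$ continuous family $(\mu_t)_{t\in I}$ of probability measures on $Q(A)$, of uniformly bounded kinetic energy, with $\mu_t\in\mathcal M(\varrho_N(t),N\in\N)$ for every $t\in I$.

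Next I would write, for a cylindrical, smooth, compactly supported symbol $b$ on $\mathcal Z_0$ with Wick quantization $b^{\mathrm{Wick}}$ (Appendix \ref{se.app}), the Duhamel identity furnished by the Heisenberg equation,
\begin{equation*}
\langle\Psi^{(N)}(t),b^{\mathrm{Wick}}\Psi^{(N)}(t)\rangle-\langle\Psi^{(N)},b^{\mathrm{Wick}}\Psi^{(N)}\rangle=\int_0^t\langle\Psi^{(N)}(s),i[H_N,b^{\mathrm{Wick}}]\Psi^{(N)}(s)\rangle\,ds ,
\end{equation*}
and expand the commutator by the Wick calculus: writing $H_N$ in Wick form from $a(z)=\langle z,Az\rangle$ and $q_0$ as in \eqref{q0}, the operator $i[H_N,b^{\mathrm{Wick}}]$ has Wick symbol $\{h,b\}+\varepsilon\,r_\varepsilon$ with $\varepsilon=1/N$, where $h$ is the classical energy \eqref{cl-enr}, $\{\cdot,\cdot\}$ is the Poisson bracket normalized so that its Hamiltonian flow is \eqref{field-eq2}, and $r_\varepsilon$ collects the subleading symbols built from $a$, $\tilde q$ and the derivatives of $b$. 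Already here a delicate point is that $\tilde q$ is only $(A_1+A_2)$-form bounded, so these symbols are genuinely unbounded; giving them a precise meaning and a usable bound requires inserting the resolvents $(A_1+A_2+\lambda)^{-\frac12}$ — as in the extension of the $q_{i,j}^{(N)}$ to $Q(H_N^0)$ (Lemma \ref{ses-qij}) — and retaining the kinetic-energy factors for later absorption. This is the content of Section \ref{sub.defqj}.

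Then I would let $N\to\infty$ in the Duhamel identity — the convergence argument of Section \ref{se.convarg}. The left-hand side converges to $\int_{\mathcal Z_0}b\,d\mu_t-\int_{\mathcal Z_0}b\,d\mu_0$ by the defining property of the $\mu_t$. On the right, the remainder $\varepsilon\int_0^t\langle\Psi^{(N)}(s),r_\varepsilon^{\mathrm{Wick}}\Psi^{(N)}(s)\rangle\,ds$ is $o(1)$ thanks to the explicit factor $\varepsilon$ and the uniform kinetic-energy bound, while the leading term $\int_0^t\langle\Psi^{(N)}(s),\{h,b\}^{\mathrm{Wick}}\Psi^{(N)}(s)\rangle\,ds$ has to converge to $\int_0^t\int_{\mathcal Z_0}\{h,b\}(z)\,d\mu_s(z)\,ds$. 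I expect this last convergence to be the main obstacle: $\{h,b\}$ is an unbounded symbol, since it involves $\tilde q$, so the passage to the limit is not the plain testing of the rescaled reduced density matrices against a compact observable. Instead one first compresses the $\tilde q$-dependent part by the resolvents $(A+1)^{-\frac12}$ and $(A_1+A_2+\lambda)^{-\frac12}$ — the compressed states being controlled by the kinetic-energy estimate — and then disposes of the compressed interaction using \eqref{D1}, whose norm is negligible as $\lambda\to\infty$, or \eqref{D2}, whose compactness reduces to the case of \eqref{D1} by Lemma \ref{infD2}; this is precisely where the present hypotheses improve on the more restrictive ones of \cite{AmNi4}. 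Carrying this out yields, for every such $b$ and every $t\in I$, the weak (integrated) Liouville equation
\begin{equation*}
\int_{\mathcal Z_0}b\,d\mu_t-\int_{\mathcal Z_0}b\,d\mu_0=\int_0^t\int_{\mathcal Z_0}\{h,b\}(z)\,d\mu_s(z)\,ds ,
\end{equation*}
i.e.\ $(\mu_s)_{s\in I}$ is transported along the Hartree vector field $v(z)=-i\big(Az+\partial_{\bar z}q_0(z)\big)$ of \eqref{field-eq2}.

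Finally I would invoke the uniqueness theorem for this Liouville equation established in Section \ref{se.liouville}, an adaptation of the transport theory of \cite{AGS} in the generalized form recalled (from \cite{W1}) in Appendix B. Assumption \eqref{C1} supplies the local Lipschitz control of $\partial_{\bar z}q_0$ and the well-posedness of the flow \eqref{flow} that this uniqueness requires, while the propagated kinetic-energy bound yields the integrability $\int_0^t\int_{\mathcal Z_0}\big(\|Az\|+\|\partial_{\bar z}q_0(z)\|\big)\,d\mu_s(z)\,ds<\infty$. Hence the only solution of the continuity equation with initial datum $\mu_0$ is $\mu_t=\Phi(t,0)_\sharp\mu_0$. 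Since this limit does not depend on the extracted subsequence, the whole family $\varrho_N(t)$ admits $\Phi(t,0)_\sharp\mu_0$ as its unique Wigner measure; and since the mean-field flow preserves the charge it maps $B_{\mathcal Z_0}(0,1)\cap Q(A)$ into itself, so $\mu_t$ is a Borel probability measure on $B_{\mathcal Z_0}(0,1)\cap Q(A)$, which is the assertion of Theorem \ref{thm.main}.
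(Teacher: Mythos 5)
Your three-step skeleton (propagation of the a priori bound, extraction of a subsequence good for all times, Duhamel identity, passage to the limit, uniqueness for a continuity equation) is indeed the paper's, but the final step as you wrote it has a genuine gap: you keep the full Hamiltonian vector field $v(z)=-i\big(Az+\partial_{\bar z}q_0(z)\big)$ in the limiting continuity equation and then invoke the uniqueness theorem of \cite{W1} (Appendix B), justifying it by the integrability $\int_0^t\int_{\mathcal Z_0}\big(\|Az\|+\|\partial_{\bar z}q_0(z)\|\big)\,d\mu_s\,ds<\infty$ \emph{deduced from the kinetic bound}. This deduction is false: the propagated estimate only gives $\int\|z\|^2_{Q(A)}\,d\mu_s\leq C$, i.e.\ control of $\|A^{1/2}z\|_{\mathcal Z_0}$, while the measures are carried by $Q(A)$ and not by $D(A)$, so $\|Az\|_{\mathcal Z_0}$ need not be $\mu_s$-integrable (in general $Az$ only makes sense in $Q'(A)$). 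More fundamentally, the uniqueness result recalled in Appendix B (Theorem \ref{pr.D5}, adapted from \cite{W1}) applies to vector fields $I\times Q(A)\to\mathcal Z_0$ enjoying the Lipschitz estimate of assumption \eqref{C1}; the unbounded linear part $-iAz$ lies outside its scope and \eqref{C1} gives no control of it. This is precisely why the paper works in the interaction representation: one studies $\widetilde\Psi^{(N)}_t=e^{itH_N^0}e^{-itH_N}\Psi^{(N)}$, the limiting Liouville equation then involves only $v_t(z)=-ie^{itA}\partial_{\bar z}q_0(e^{-itA}z)$, which does map $Q(A)$ into $\mathcal Z_0$ with the bound supplied by \eqref{C1}, Theorem \ref{pr.D5} yields $\tilde\mu_t=\tilde\Phi(t,0)_\sharp\mu_0$, and one returns to $\varrho_N(t)$ via $e^{itH_N^0}\mathcal W(\xi)e^{-itH_N^0}=\mathcal W(e^{itA}\xi)$ and $\Phi(t,0)=e^{-itA}\circ\tilde\Phi(t,0)$. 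Without this gauge step (or a new uniqueness theorem for $Q'(A)$-valued velocity fields, which you would have to prove) your concluding paragraph does not go through.

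Two secondary points. First, your Duhamel identity is written for $b^{\mathrm{Wick}}$ with $b$ a smooth compactly supported cylindrical function, but the Wick calculus used in the paper (Appendix A) is defined only for polynomial symbols in $\mathcal Q_{p,q}(A)$, so the expansion of $i[H_N,b^{\mathrm{Wick}}]$ with symbol $\{h,b\}+\varepsilon r_\varepsilon$ is not available as stated; the paper instead differentiates $\langle\widetilde\Psi_t^{(N)},\mathcal W(\sqrt2\pi\xi)\widetilde\Psi_t^{(N)}\rangle$ and uses the exact Weyl--Wick commutation formula of Proposition \ref{c-prop-wickA} (v), producing the four monomials $q_j(\xi,s)$, whose leading term is then treated by the cutoff argument together with \eqref{D1} or \eqref{D2}, much as you describe. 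Second, the concentration $\tilde\mu_t(B_{\mathcal Z_0}(0,1))=1$ is a hypothesis of Theorem \ref{pr.D5}, obtained beforehand from Proposition \ref{pr.transportapriori} applied to the evolved states, not a consequence of charge conservation of the flow as in your last sentence (also, Lemma \ref{infD2} concerns infinitesimally form-bounded interactions, not a reduction of \eqref{D2} to \eqref{D1}).
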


\bigskip
\noindent
Here some useful  comments on the above theorem.
\begin{remarks}\ \\
1) The result remains true if we assume that $A$ is semi-bounded from below.\\
2) It is not necessary to suppose that $\varrho_N$ admits a unique Wigner measure $\mu_0$. In general the result says that for every time $t \in I$:
$$
\mathcal{M}(\varrho_N(t), N\in\N)=\{ \Phi(t,0)_{\sharp}\mu_{0}, \mu_{0}\in \mathcal{M}(\varrho_N, N\in\N)\}\,.
$$
3) Without any essential change in the proof of Theorem \ref{thm.main}, one can suppose that $\varrho_N$ is an arbitrary sequence of non-negative trace-class operator on $\bigvee^N\mathcal{Z}_0$ satisfying:
\begin{equation*}
\exists C_1>0, \forall N\in\mathbb{N}, \;\;\Tr[\varrho_N H_{N}^{0}]\leq{C_1},\,\text{and}\,\,\exists C_2>0, \Tr[\varrho_N]\leq C_2.
\end{equation*}
\end{remarks}

\subsection{Examples}
\label{se.examples}
In this section, we provide several examples to which the general
result of Theorem \ref{thm.main} is applicable. In particular, in all these examples the  mean-field dynamics are globally well-posed in $Q(A)$, in the sense that the assumption \eqref{C1} holds true with $I=\R$.

But first we observe that the two limits in \eqref{D1} are satisfied whenever
$q$ is infinitesimally  $A_1+A_2$-form bounded. This indeed allows to handle the situation when
the interaction  is {\it{subcritical}}. When  the interaction is comparable to the kinetic energy we rely directly on \eqref{D1} which seems to be the appropriate assumption in this case.

\begin{lm}
\label{infD2}
Assume \eqref{A1}-\eqref{A2} and suppose that the quadratic form $q$ is infinitesimally $A_1+A_2$-form bounded. Then for any $\xi\in Q(A)$,
\begin{eqnarray*}
\lim_{\lambda\to+\infty} ||\langle \xi|\otimes (A+1)^{-\frac{1}{2}} \,\mathcal{S}_2\,\tilde q
(A_1+A_2+\lambda)^{-\frac{1}{2}}||_{\mathcal{L}(\bigvee^2\mathcal{Z}_0, \mathcal{Z}_0)}=0\,,\\
\lim_{\lambda\to+\infty} ||\langle \xi|\otimes (A+\lambda)^{-\frac{1}{2}} \,\mathcal{S}_2\,\tilde q (A_1+A_2+1)^{-\frac{1}{2}}||_{\mathcal{L}(\bigvee^2\mathcal{Z}_0, \mathcal{Z}_0)}=0\,.
\end{eqnarray*}
\end{lm}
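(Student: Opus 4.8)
The plan is to reduce both limits to a single bilinear estimate for $q$ and then to prove that estimate by a soft compactness argument. Throughout write $H:=A_1+A_2$ (nonnegative, self‑adjoint on $\otimes^2\mathcal Z_0$) and $\|u\|_H^2:=\langle u,(H+1)u\rangle$.

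\emph{Reduction.} First I would test the two operators against unit vectors $v\in\mathcal Z_0$ and $w\in\bigvee^2\mathcal Z_0$. Using $\mathcal S_2^\ast=\mathcal S_2$ and $(\langle\xi|\otimes B)^\ast(\cdot)=\xi\otimes B^\ast(\cdot)$ one gets
\[
\big\langle v,\ \langle\xi|\otimes(A+\mu_1)^{-\frac12}\,\mathcal S_2\,\tilde q\,(H+\mu_2)^{-\frac12}w\big\rangle=q(\phi,\psi),\qquad \phi:=\mathcal S_2\big(\xi\otimes(A+\mu_1)^{-\frac12}v\big),\quad \psi:=(H+\mu_2)^{-\frac12}w .
\]
Expanding $\langle\xi\otimes g,(H+1)\,\xi\otimes g\rangle$ with $g=(A+\mu_1)^{-1/2}v$ and using $\|A^{1/2}(A+\mu_1)^{-1/2}\|\le1$, $\|(A+\mu_1)^{-1/2}\|\le\mu_1^{-1/2}$, $\xi\in Q(A)$ and $\|\mathcal S_2\|\le1$, one finds (for $\mu_i\ge1$) $\|\phi\|_H\le\sqrt2\,\|\xi\|_{Q(A)}$, $\|\phi\|_{\otimes^2\mathcal Z_0}\le\mu_1^{-1/2}\|\xi\|$, and similarly $\|\psi\|_H\le1$, $\|\psi\|_{\otimes^2\mathcal Z_0}\le\mu_2^{-1/2}$. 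Taking $(\mu_1,\mu_2)=(1,\lambda)$ for the first limit and $(\mu_1,\mu_2)=(\lambda,1)$ for the second, we see that in both cases $\phi$ and $\psi$ stay bounded in $Q(H)$ uniformly in $v,w$ and $\lambda$, while one of them has $\otimes^2\mathcal Z_0$‑norm $\le C_\xi\lambda^{-1/2}$. Hence both limits follow from the claim: \emph{if $\theta_k,\eta_k\in Q(H)$ with $\sup_k(\|\theta_k\|_H+\|\eta_k\|_H)=:C<\infty$ and $\|\theta_k\|_{\otimes^2\mathcal Z_0}\to0$, then $q(\theta_k,\eta_k)\to0$} (apply it with $\theta_k$ the vanishing factor, using $q(u,v)=\overline{q(v,u)}$).

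\emph{Proof of the claim.} Two inputs. (i) Infinitesimal $H$‑form boundedness: for each $\epsilon>0$ there is $b_\epsilon>0$ with $|q(u,u)|\le p_\epsilon(u):=\epsilon\langle u,Hu\rangle+b_\epsilon\|u\|^2$; since $p_\epsilon$ is a nonnegative form, completing $Q(H)$ in $p_\epsilon^{1/2}$ makes $q$ a self‑adjoint contraction, whence $|q(u,v)|\le p_\epsilon(u)^{1/2}p_\epsilon(v)^{1/2}$ with $p_\epsilon(u)\le\epsilon\|u\|_H^2+b_\epsilon\|u\|_{\otimes^2\mathcal Z_0}^2$. (ii) The embedding $Q(H)=Q(A_1+A_2)\hookrightarrow\otimes^2\mathcal Z_0$ is compact, because $A$ — hence $A_1+A_2$ — has compact resolvent (the hypothesis carried along with \eqref{D1}). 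Now suppose $|q(\theta_k,\eta_k)|\ge\delta>0$ along a subsequence. By (ii) we may assume $\eta_k\to\eta_\infty$ in $\otimes^2\mathcal Z_0$, hence $\eta_k\rightharpoonup\eta_\infty$ in $Q(H)$ and (from $\|\theta_k\|_{\otimes^2\mathcal Z_0}\to0$) $\theta_k\rightharpoonup0$ in $Q(H)$, with $\|\eta_\infty\|_H\le C$. Split $q(\theta_k,\eta_k)=q(\theta_k,\eta_k-\eta_\infty)+q(\theta_k,\eta_\infty)$. By (i),
\[
|q(\theta_k,\eta_k-\eta_\infty)|\le\big(\epsilon C^2+b_\epsilon\|\theta_k\|_{\otimes^2\mathcal Z_0}^2\big)^{\frac12}\big(4\epsilon C^2+b_\epsilon\|\eta_k-\eta_\infty\|_{\otimes^2\mathcal Z_0}^2\big)^{\frac12}\ \xrightarrow[k\to\infty]{}\ 2\epsilon C^2 ,
\]
while $q(\theta_k,\eta_\infty)$ is the pairing of the fixed functional $\tilde q\,\eta_\infty\in Q'(H)$ with $\theta_k\rightharpoonup0$ in $Q(H)$, hence $\to0$. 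Thus $\limsup_k|q(\theta_k,\eta_k)|\le2\epsilon C^2$ for every $\epsilon>0$, i.e.\ $=0$, contradicting $\ge\delta$. This proves the claim and the lemma.

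\emph{Where the difficulty lies.} The heart of the matter is the last step. Attacking $q(\phi,\psi)$ directly by one Cauchy–Schwarz with a single $\epsilon$ does not suffice: it leaves a residue of order $\epsilon b_\epsilon$, which need not go to $0$ for a genuinely (only) infinitesimally form‑bounded $q$. One must therefore feed in the compactness of the form domain, extract the limit $\eta_\infty$ of the non‑vanishing factor, and kill its contribution by weak convergence — and it is exactly at this point that compactness of the resolvent of $A$ is used (it cannot be dispensed with: perturbations of the form $\tilde q=(H+1)^{1/2}B+B^\ast(H+1)^{1/2}$ with $B$ unitary are infinitesimally $H$‑form bounded, yet the stated limits fail when $A$ has infinitely many eigenvalues below some bound).
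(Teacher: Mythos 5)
Your argument is internally coherent, but it does not prove Lemma \ref{infD2} as stated: in input (ii) you invoke compactness of the embedding $Q(A_1+A_2)\hookrightarrow\otimes^2\mathcal Z_0$, i.e.\ that $A$ has compact resolvent. This is not among the hypotheses of the lemma, which assumes only \eqref{A1}--\eqref{A2} and infinitesimal $A_1+A_2$-form boundedness; the compact resolvent condition is a separate clause inside \eqref{D1} and is nowhere used in the paper's proof of this lemma. The paper's argument is compactness-free and much shorter: after the same reduction to the matrix element $q(\phi,\psi)$ with $\phi=\mathcal S_2\big(\xi\otimes(A+1)^{-1/2}\Phi\big)$ and $\psi=(A_1+A_2+\lambda)^{-1/2}\Psi$, it applies Cauchy--Schwarz to $q$ itself, $|q(\phi,\psi)|\le|q(\phi,\phi)|^{1/2}|q(\psi,\psi)|^{1/2}$, bounds the first diagonal value by the \emph{fixed} constants $a,b$ of \eqref{A2} (uniformly in the unit vector $\Phi$, since $\xi\in Q(A)$), and only the second one by the infinitesimal bound, which yields $|q(\psi,\psi)|\le\alpha\|\Psi\|^2+C(\alpha)\lambda^{-1}\|\Psi\|^2$; letting $\lambda\to\infty$ and then $\alpha\to0$ concludes. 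This is exactly the device that removes your worry about a residue of order $\epsilon b_\epsilon$: Cauchy--Schwarz is performed with one fixed form, and only afterwards are the two diagonal values estimated with \emph{different} constants, so the $\epsilon$-dependent bound never touches the non-vanishing factor. Your weak-compactness detour is therefore unnecessary for the situations in which the paper invokes the lemma (Examples \ref{examp1}, \ref{examp2}, \ref{examp4}), and, as written, your proof establishes a different statement.

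That said, your closing remark does touch a genuine subtlety of the paper's own proof: the inequality $|q(u,v)|\le|q(u,u)|^{1/2}|q(v,v)|^{1/2}$ is legitimate only when $q$ is semi-definite, or more generally dominated by a nonnegative form which is itself infinitesimally $(A_1+A_2)$-form bounded (in the concrete examples one replaces $W$ by $|W|$, and then the two-parameter estimate above goes through verbatim). For a genuinely indefinite abstract $q$, your mechanism $\tilde q=(H+1)^{1/2}B+B^{*}(H+1)^{1/2}$ does indicate that this Cauchy--Schwarz step can fail and that some additional structure (definiteness, domination, or compactness as you use it) is then required. But observing this amounts to questioning the lemma in full abstract generality, not to proving it: if you want to keep your route, you must either add the compact resolvent hypothesis explicitly, or replace input (ii) by the domination argument just described, which recovers the stated lemma in the regime the paper actually works in.
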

\begin{proof}
Let $\Phi\in \mathcal Z_0$ and $\Psi\in \bigvee^2\mathcal Z_0$ then by Cauchy-Schwarz
inequality,
\begin{eqnarray*}
|\langle \Phi, \langle \xi|\otimes (A+1)^{-\frac{1}{2}} \,\mathcal{S}_2\,\tilde q
(A_1+A_2+\lambda)^{-\frac{1}{2}} \,\Psi\rangle|&=&
|q(\mathcal{S}_2 \xi\otimes (A+1)^{-\frac{1}{2}}\Phi, (A_1+A_2+\lambda)^{-\frac{1}{2}} \,\Psi)|\\
&\leq& |q(\mathcal{S}_2 \xi\otimes (A+1)^{-\frac{1}{2}}\Phi)|^{\frac{1}{2}}\; \; |q((A_1+A_2+\lambda)^{-\frac{1}{2}} \,\Psi)|^{\frac{1}{2}}\,,
\end{eqnarray*}
with $q(u)=q(u,u)$.  Remark that $|q(\mathcal{S}_2\xi\otimes (A+1)^{-\frac{1}{2}}\Phi)|$ is bounded thanks to \eqref{A2} and the fact that $\xi\in Q(A)$. Since $q$ is infinitesimally $A_1+A_2$-form bounded, then for any $\alpha>0$ there exists $C(\alpha)>0$ such that
\begin{eqnarray*}
|q((A_1+A_2+\lambda)^{-\frac{1}{2}} \,\Psi)|&\leq&
\alpha \langle \Psi, (A_1+A_2+\lambda)^{-1} (A_1+A_2+\frac{C(\alpha)}{\alpha}) \,\Psi\rangle\\
&\leq& \max(\alpha,\frac{C(\alpha)}{\lambda}) \,\|\Psi\|\,.
\end{eqnarray*}
This proves the first limit in \eqref{D1} when $\lambda\to\infty$. The second one follows by a
similar argument.
\end{proof}

\begin{example}[The two-body delta interaction]
\label{examp1}
Non-relativistic systems of trapped bosons  with a two-body
point interaction,
\begin{equation}
\label{delta}
H_{N}=\sum_{i=1}^{N}-\Delta_{x_{i}}+V(x_{i})+\frac{\kappa}{N}\sum_{1\leqslant
  i<j\leqslant N}\delta(x_{i}-x_{j}),\quad\,x_{i},x_{j}\in \R, \;\kappa \in \R,
\end{equation}
where $\delta$ is the Dirac distribution and $V$ is a real-valued potential which splits  into two parts $V=V_{1}+V_{2}$ such that
\begin{eqnarray*}
&&V_{1}\in L_{\text{loc}}^{1}(\R),\;V_{1}\geq 0 \,,
 \lim_{|x|\to
  +\infty}V_{1}(x)=+\infty\;,\\
&&V_{2} \;\text{is}\; -\Delta\text{-form bounded with a relative bound less than one.}
\end{eqnarray*}
This model  has been studied for instance in  \cite{ABGT,AmBr}.
The operator  $A=-\Delta+V$ is self-adjoint semi-bounded
from below and $A$ has compact resolvent according to  \cite[Theorem X19]{RS2}. The two-body interaction $q$  is given by $q(z^{\otimes
2},z^{\otimes 2})=\kappa\langle z^{\otimes
2},\delta(x_{1}-x_{2})z^{\otimes
2}\rangle=\kappa\|z\|_{L^{4}(\R)}^{4}$ and satisfies for any $u\in Q(A_1+A_2)$,
\begin{equation*}
\forall \alpha>0,\,|q(u,u)| \leq
\frac{\alpha\kappa}{2\sqrt{2}}\langle u, A_{1}+A_2\,u \rangle+\frac{\kappa}{4\alpha \sqrt{2}}\|u\|_{L^{2}(\R^{2})}^{2}.
\end{equation*}
For a detailed proof of the latter inequality see \cite[Lemma A.1]{AmBr}. Hence
\eqref{A1}-\eqref{A2} are verified and by Lemma \ref{infD2}
the assumption \eqref{D1} holds true.
The nonlinearity $\partial_{\bar z}q_0(z)=\kappa |z|^{2}z:\,Q(A) \to Q(A)$
satisfies the inequalities,
\begin{equation}
\label{eq.nlsll}
\forall z,y \in Q(A),\, \exists C:=C(\|z\|_{Q(A)},\|y\|_{Q(A)})>0,\;\||z|^{2}z-|y|^{2}y\|_{Q(A)} \leq C
\|z-y\|_{Q(A)}\,,
\end{equation}
and
\begin{equation}
\label{ineq-d=1}
\forall z,y \in Q(A),\,\||z|^{2}z-|y|^{2}y\|_{L^{2}(\R)}\leq C(\|z\|^2_{H^{1}(\R)}+ \|y\|^2_{H^{1}(\R)})\|z-y\|_{L^{2}(\R)},
\end{equation}
since the inclusion $Q(A)\subset H^{1}(\R) \subset L^{\infty}(\R)$
holds by Sobolev embedding and the fact that $Q(A)=\{u\in L^2(\R), u'\in L^2(\R),
 V_1^{\frac{1}{2}}u\in L^2(\R)\}$. Therefore the vector field  $\partial_{\bar z}q_0(z)$ is locally Lipschitz in $Q(A)$ and the \eqref{NLS} equation
\begin{equation}
\tag{{\bf{NLS}}}
\label{NLS}
\left\{
      \begin{aligned}
      i  \partial_{t}z&=-\Delta z+Vz+\kappa |z|^{2}z&\\
        z_{|t=0}&=z_{0},
         \end{aligned}
    \right.
\end{equation}
is locally well-posed in $Q(A)$ and we have a continuous dependence on the initial data . The estimate \eqref{ineq-d=1} gives (ii) in \eqref{C1}.  Furthermore, using the energy and
charge conservation one shows the global well-posedness of the
\eqref{NLS} equation and Theorem \ref{thm.main} holds true.
\end{example}

\begin{example}[Trapped bosons]
\label{examp2}
Non relativistic trapped  many-boson systems with singular two-body potentials:
\begin{equation}
\label{manybody}
H_{N}=\sum_{i=1}^{N}-\Delta_{x_{i}}+V(x_{i})+\frac{1}{N}\sum_{1\leqslant i<j
  \leqslant N}W(x_{i}-x_{j}),\quad \,x_{i},x_{j} \in \R^{d}.
\end{equation}
where $V$ is a real-valued potential which splits into two parts $V=V_{1}+V_{2}$ such that:
\begin{eqnarray*}
&&V_1\in C^{\infty}(\R^d, \R), \,V_1\geq 0,\, D^\alpha V_1\in L^{\infty}(\R^d),\; \forall
\alpha\in \N^d, |\alpha|\geq 2\,,\\
&&V_1(x)\to\infty,\, \text{ when } |x|\to\infty\,,\\
&&V_2 \in  L^{p}(\R^{d})+L^{\infty}(\R^{d}),\;p\geq 1,\; p> \frac{d}{2}\,,
\end{eqnarray*}
and  $W:\,\R^{d}\to \R$ is an even measurable function verifying:
\begin{equation}
\label{E2d}
 W \in  L^{q}(\R^{d})+L^{\infty}(\R^{d}),\;q\geq 1,\; q\geq \frac{d}{2}, (\text{and } q>1 \text{ if } d=2).
\end{equation}
By Gagliardo-Nirenberg inequality we know that $\eqref{E2d}$ implies
 that $W$ is infinitesimally $-\Delta$-form bounded. So, the assumptions \eqref{A1}-\eqref{A2} and \eqref{D1} are satisfied. Moreover, the vector field  $[\partial_{\bar z}q_0](z)=W*|z|^{2}z:\,Q(A) \to L^2(\R^d)$ satisfies  for any $z,y\in Q(A)$,
\begin{equation}
\label{est.ll}
\|W*|z|^2 z-W*|y|^2y\|_{L^2(\R^d)}\lesssim \, (\|z\|_{H^1(\R^d)}^2+ \|y\|_{H^1(\R^d)}^2)\; \|z-y\|_{L^{2}(\R^{d})}\,,
\end{equation}
by using Young, H\"{o}lder and Sobolev inequalities. So the estimate (ii) of  \eqref{C1} holds true. The global well-posedness in $Q(A)$, conservation of energy and charge of the Hartree equation
\begin{equation}
\tag{{\bf{Hartree}}}
\left\{
      \begin{aligned}
      i  \partial_{t}z&=-\Delta z+Vz+W*|z|^{2}z&\\
        z_{t=0}&=z_{0},
         \end{aligned}
    \right.
\end{equation}
are proved in \cite{Caz1} Theorem 9.2.6 and Remark 9.2.8. Observe that the assumption on $W$ are satisfied by the  Coulomb type potentials
$\frac{\lambda}{|x|^{\alpha}}$ when $\alpha<2$, $\lambda \in \R$ and $d=3$.
\end{example}

\begin{example}[Untrapped bosons]
\label{examp3}
Non-relativistic untrapped many-boson systems,
\begin{equation*}
H_{N}=\sum_{i=1}^{N}-\Delta_{x_{i}}+V(x_{i})+\frac{1}{N}\sum_{1\leqslant i<j
  \leqslant N}W(x_{i}-x_{j}),\quad \,x_{i},x_{j} \in \R^{d}.
\end{equation*}
where the potentials $V$  and $W$ satisfy the following assumptions for some $p$ and $q$,
\begin{equation}
\label{E3d}
\begin{aligned}
&V \in L^{p}(\R^{d})+L^{\infty}(\R^{d}), \,p\geq 1,\;\,p>\frac{d}{2},\\
&W \in L^{q}(\R^{d})+L_{0}^{\infty}(\R^{d}),  \,q\geq 1,\;\,q\geq\frac{d}{2}\,,(\text{and } q>1 \text{ if } d=2)\,.
\end{aligned}
\end{equation}
Here $L_{0}^{\infty}(\R^{d})$ denotes the space of bounded measurable functions going to $0$ at infinity. For instance Coulomb type potentials $\frac{\lambda}{|x|^{\alpha}}$ for $\alpha<2$, $\lambda \in \R$ and $d=3$ satisfy \eqref{E3d}.  As in the previous example \eqref{A1}-\eqref{A2} are satisfied and
\eqref{D2} is verified if we check that $(1-\Delta_x)^{-\frac{1}{2}} W(x) (1-\Delta_x)^{-\frac{1}{2}}$  is compact (see the proof of \cite[Lemma 3.10]{AmNi4}). $W$ decomposes as $W=W_{1}+W_{2}$ with $W_1\in L^{q}(\R^{d})$ and $W_2\in L_{0}^{\infty}(\R^{d})$.
We know that $W_{2}(1-\Delta)^{-\frac{1}{2}} \in
\mathcal{L}^{\infty}(L^{2}(\R^{d}))$ (see for instance
\cite[Proposition 3.21]{LoHiVo}).  Therefore we only need to check that $(1-\Delta_x)^{-\frac{1}{2}} W_{1}(x) (1-\Delta_x)^{-\frac{1}{2}}$ is compact.
Let $\chi \in \mathcal{C}_{0}^{\infty}(\R^{d})$ such that $0\leq \chi \leq 1$ and
$\chi=1$ in a neighborhood of $0$. We denote $\chi_{m}(x):=\chi(\frac{x}{m})$, for $x \in \R^{d}$ and $m \in \N^{*}$. For a given measurable function $g$ let $(g^{\delta})_{\delta>0}$ denotes
\begin{equation}
g^{\delta}=
\left\{
      \begin{aligned}
      &g, \text{ if } \,|g|<\delta&\\
      &\delta,  \text{ if } \,g\geq \delta&\\
      &-\delta,  \text{ if } \,g\leq -\delta\,. &
         \end{aligned}
    \right.
\end{equation}
Writing the decomposition
$$W_{1}=\underbrace{(\chi_{m}W_{1})^{\delta}}_{L_{0}^{\infty}(\R^{d})}
+\underbrace{W_{1}-(\chi_{m}W_{1})^{\delta}}_{L^{q}(\R^{d})}\,,$$
we observe that
$$(1-\Delta)^{-\frac{1}{2}}(\chi_{m}W_{1})^{\delta}(1-\Delta)^{-\frac{1}{2}}
\in \mathcal{L}^{\infty}(L^{2}(\R^{d})),$$
and for $\delta \to +\infty$ and $m \to +\infty$,
\begin{equation}
\label{co}
(1-\Delta)^{-\frac{1}{2}}(\chi_{m}W_{1})^{\delta}(1-\Delta)^{-\frac{1}{2}}
\underset{m\to\infty}{\longrightarrow}
(1-\Delta)^{-\frac{1}{2}}W_{1}(1-\Delta)^{-\frac{1}{2}}\,,
\end{equation}
in the norm topology. Hence \eqref{D2} holds true. The convergence \eqref{co} is justified
by the Gagliardo-Nirenberg's inequality,
\begin{equation*}
|\langle u,\big[(\chi_{m}W_{1})^{\delta}
-W_{1}\big] u\rangle|\leq C \|(\chi_{m}W_{1})^{\delta}-W_{1}\|_{L^{q}(\R^{d})}\;
\|\nabla u\|_{L^{2}(\R^d)}^{2\alpha} \,\|u \|^{2(1-\alpha)}_{L^{2}(\R^d)}\,, \qquad \alpha=\frac{d}{2q}\,.
\end{equation*}
As in Example  \ref{examp2}, the vector field $\partial_{\bar z}q_0(\cdot)$ satisfies the inequality (ii) of \eqref{C1}. The global well-posedness in $H^1(\R^d)$, conservation of energy and charge of the Hartree equation,
\begin{equation*}
\left\{
      \begin{aligned}
      i  \partial_{t}z&=-\Delta z+Vz+W*|z|^{2}z&\\
        z_{|t=0}&=z_{0},
         \end{aligned}
    \right.
\end{equation*}
hold true according to \cite{Caz1} Corollary 4.3.3 and Corollary 6.1.2.
\end{example}

\begin{example}[Non-relativistic Bosons with magnetic field]
\label{examp4}
Non-relativistic many-boson systems with an external
magnetic  field $\mathcal{A}:\R^{d} \to \R^{d}$ and  an external electric field
$V:\R^{d}\to \R$ are described by the Hamiltonian,
\begin{equation}
H_{N}=\sum_{j=1}^{N}\big[
(-i\nabla_{x_{j}}+\mathcal{A}(x_{j}))^{2}+V(x_{j})\big]+\frac{1}{N}\sum_{1\leqslant i<j \leqslant N}W(x_{i}-x_{j}),
\end{equation}
where  $W(x)$ is an even measurable function satisfying with $\mathcal A$ and $V$  the assumptions:
\begin{eqnarray*}
&&d \geq 3,\\
&&\mathcal A\in L^2_{loc}(\R^d,\R^d),\\
\label{AM1}
&&V\in L^1_{loc}(\R^d,\R),\;V_+(x)\to\infty, \text{ when } |x|\to \infty\,,\\
&&V_- \text{ is } -\Delta\text{-form bounded with relative bound less than } 1,\\
&&W\in L^q(\R^d,\R) + L^\infty(\R^d,\R), \; \nabla W\in L^{p}(\R^d)+ L^\infty(\R^d)
\text{ for some } q>\frac{d}{2},\, p\geq\frac{d}{3}.
\end{eqnarray*}
Here $V_\pm$ denotes the positive and  negative part of the potential $V$. Let $\nabla_{\mathcal{A}}:=\nabla+i\mathcal{A}$ then the quadratic form
\begin{equation*}
H_V(\mathcal{A})[f,g]:=\int_{\R^d} \overline{\nabla_{\mathcal A}f(x)}\,
 \nabla_{\mathcal A}g(x) \,dx+ \int_{\R^d} V(x) \overline{f(x)} g(x)\,dx\,,
\end{equation*}
defined on the form domain
\begin{equation*}
\mathcal{H}_{\mathcal{A},V}^{1}(\R^{d}):=\{ \varphi \in
L^{2}(\R^{d}),\,\nabla_{\mathcal{A}}\,\varphi,\,V_+^{\frac{1}{2}}\varphi \in L^{2}(\R^{d}) \},
\end{equation*}
is closed and bounded from below and hence it defines a unique semi-bounded from below self-adjoint operator denoted $H_V (\mathcal{A})$ (see \cite{AvHeSi},\cite{LeSi}).
Moreover, $\mathcal C_0^\infty(\R^d)$ is a form core for $H_V (\mathcal{A})$.
Hence \eqref{A1} is true and since $W$ satisfies  the condition \eqref{E2d}  of
Example \ref{examp2} we know that $W(x_1-x_2)$ is infinitesimally $-\Delta_{x_1}-\Delta_{x_2}$-form bounded.
Applying  \cite[Theorem 2.5]{AvHeSi} one concludes that $W(x_1-x_2)$ is infinitesimally
$H_0(\mathcal A)\otimes 1+ 1\otimes H_0(\mathcal A)$-form bounded and subsequently it is infinitesimally $H_V(\mathcal A)\otimes 1+1\otimes H_V(\mathcal A)$-form bounded. Hence \eqref{A2} is also true. Moreover, according to \cite[Theorem 2.7]{AvHeSi}
$H_V(\mathcal A)$ has a compact resolvent and so assumption  \eqref{D1} is satisfied. \\
The global well-posedness in $\mathcal{H}_{\mathcal{A},V}^{1}(\R^{d})$ of the Hartree equation
with  magnetic field
\begin{equation}
\label{eq.hartreem}
\left\{
      \begin{aligned}
      i  \partial_{t}z&=(-i\nabla+\mathcal A)^{2} z+Vz+W*|z|^{2}z&\\
        z_{|t=0}&=z_{0},
         \end{aligned}
    \right.
\end{equation}
is proved in \cite{Mi} together with energy, charge conservation and continuous dependence on initial data. The estimate (ii) of  \eqref{C1} holds true since by Young, H\"{o}lder and Sobolev inequalities,
\begin{equation}
\label{eq.estc2}
\|\big[W_{1}*\bar{z}z\big]z-\big[W_{1}*\bar{y}y\big]y\|_{L^{2}(\R^{d})}\leq \|W_{1}\|_{L^{q}(\R^{d})}(\|z\|^2_{\mathcal{H}_{\mathcal{A},V}^{1}(\R^{d})}+
\|y\|^2_{\mathcal{H}_{\mathcal{A},V}^{1}(\R^{d})})\|z-y\|_{L^{2}(\R^{d})}\,,
\end{equation}
 where $W=W_1+W_2$ with $W_1\in L^q(\R^d)$ and $W_2\in L^\infty(\R^d)$. A similar estimate  holds true for $W_2$ by Young inequality. The mean-field problem for this type of model was studied in  \cite{Lu}.
\end{example}

\begin{example}[Semi-relativistic bosons with critical interaction]
\label{examp5}
This model has been presented in \cite{ElSc,MiSc} to
describe \textit{boson stars}. Semi-relativistic systems of bosons have the many-body Hamiltonian
\begin{equation*}
H_{N}=\sum_{j=1}^{N}\sqrt{-\Delta_{x_j}+m^2}+V(x_{j})+\frac{\kappa}{N}
\sum_{1\leqslant i<j \leqslant N}\frac{1}{|x_{i}-x_{j}|},\quad x_i,x_j\in\R^3\,,
\end{equation*}
with $-\kappa_{cr}<\kappa<\kappa_{cr}$, $\kappa_{cr}^{-1}:=\displaystyle 2\lim_{\alpha\to\infty}  ||\frac{1}{|x|} (-\Delta+\alpha)^{-\frac{1}{2}}||$, $m\geq 0$ and $V$ is real-valued measurable function $V=V_1+V_2$ satisfying,
\begin{eqnarray*}
&&V_1\in L^1_{loc}(\R^3), \, V_1\geq 0,\, V_1(x)\to\infty \text{ when } |x|\to\infty\,,\\
&&V_2 \text{ is } \sqrt{-\Delta}-\text{form bounded with a relative bound less than } 1\,.
\end{eqnarray*}
The quadratic form
\begin{eqnarray*}
&&A[u,u]=\langle u, \sqrt{-\Delta+m^2}\,u\rangle +\langle u, V u\rangle\,,\\
&&Q(A)=\{ u\in L^2(\R^3), \,(-\Delta+m^2)^{\frac{1}{4}}u\in L^2(\R^3), \,V_1^{\frac{1}{2}} u\in L^2(\R^3)\}\,,
\end{eqnarray*}
is semi-bounded from below and closed. So, it defines a unique self-adjoint operator denoted by $A$. In particular assumption \eqref{A1} is verified and \eqref{A2} is satisfied thanks to a Hardy type inequality (see for  instance \cite[Proposition D.3]{AmNi4}). Hence the critical value $\kappa_{cr}$ is finite and we have the following inequality for any $z,y \in H^{1/2}(\R^{3}),$
$$
||\frac{1}{|x|}*|z|^2\, z-\frac{1}{|x|}*|y|^{2}\,y\|_{L^2(\R^3)}\leq C(||z||_{H^{1/2}(\R^3)}^2 +||y||_{H^{1/2}(\R^3)}^2) \,||z-y||_{L^2(\R^3)}\,,
$$
by the weak Young inequality, Hardy inequality and the Sobolev embedding $H^{1/2}(\R^{3}) \subset L^{3}(\R^{3}).$ Furthermore, Rellich's criterion shows that $A$ has compact resolvent.  To prove the two limits in \eqref{D1}, we use the following argument. For any $\xi,\Phi\in C_0^\infty(\R^3)$ and
$\Psi\in C_0^\infty(\R^3)$,
\begin{eqnarray}
\label{cr-eq2}
|\langle \Phi, \langle \xi|\otimes 1 \mathcal{S}_2 \frac{1}{|x-y|} \Psi\rangle|\leq
||\Phi||_{L^3(\R^3)} \, ||T\Psi||_{L^{3/2}(\R^3)} \leq ||\Phi||_{H^{1/2}(\R^3)} \, ||T\Psi||_{L^{3/2}(\R^3)}\,,
\end{eqnarray}
where $T$ is the operator given by
$$
T\Psi(y):=\int_{\R^3} \overline{\xi}(x) \,\frac{1}{|x-y|} \,\Psi(x,y)\,dx\,.
$$
Using H\"older's inequality twice with the pairs $(p,q)$, $2<q<3$, $\frac{3}{2}<p<2$ and $(4,\frac{4}{3})$,
\begin{eqnarray*}
||T\Psi(y)||_{L^{3/2}(\R^3)}^{3/2}&\leq&
\int_{\R^3} \bigg| \,|\xi|^p*\frac{1}{|\cdot|^p}\bigg|^{\frac{3}{2p}}\times
\bigg(\int_{\R^3} |\Psi(x,y)|^q \,dx\bigg)^{\frac{3}{2q}} \;dy\\
&\leq &\bigg\|\,|\xi|^p*\frac{1}{|\cdot|^p}\bigg\|_{L^{6/p}(\R^3)}^{3/2p} \;
\bigg(\int_{\R^3} ||\Psi(\cdot,y)||^2_{L^q(\R^3)} \,dy\bigg)^{\frac{3}{4}}\,.
\end{eqnarray*}
By the fractional Gagliardo-Nirenberg's inequality in \cite[Corollary 2.4]{HMOW}, we see
for $0<\alpha<1$ and  $q=\frac{6}{3-\alpha}$,
\begin{eqnarray*}
||T\Psi(y)||_{L^{3/2}(\R^3)}^{3/2}&\leq&
\bigg\|\,|\xi|^p*\frac{1}{|\cdot|^p}\bigg\|_{L^{6/p}(\R^3)}^{3/2p} \;
||\Psi(.,y)||_{L^2(\R^3)}^{2(1-\alpha)} \;\;||(-\Delta)^{\frac{1}{4}} \Psi(.,y)||_{L^2(\R^3)}^{2\alpha}\,.
\end{eqnarray*}
Therefore, using the inequality $a^{\alpha} b^{(1-\alpha)}\leq
\varepsilon a+\varepsilon^{-\frac{\alpha}{1-\alpha}} b$ for any $\varepsilon,a,b>0$, we get
\begin{eqnarray}
\label{cr-eq3}
||T\Psi(y)||_{L^{3/2}(\R^3)}^{3/2}&\leq&
\bigg\|\,|\xi|^p*\frac{1}{|\cdot|^p}\bigg\|_{L^{6/p}(\R^3)}^{3/2p} \;
\bigg(\varepsilon \,\langle \Psi, \sqrt{-\Delta_{x}} \Psi\rangle_{L^2(\R^6)}+
\varepsilon^{-\frac{\alpha}{1-\alpha}} ||\Psi||^2_{L^2(\R^6)} \bigg)\,.
\end{eqnarray}
Remark that Hardy-Littlewood-Sobolev's inequality yields
\begin{eqnarray}
\label{cr-eq1}
\bigg\|\,|\xi|^p*\frac{1}{|\cdot|^p}\bigg\|_{L^{6/p}(\R^3)}
\leq C\, || \xi||_{L^\frac{6p}{6-p}}^p<\infty.
\end{eqnarray}
So the inequalities  \eqref{cr-eq2},\eqref{cr-eq3},\eqref{cr-eq1}, provide
\begin{eqnarray*}
|\langle \Phi, \langle \xi|\otimes (A+1)^{-\frac{1}{2}} \mathcal{S}_2 \frac{1}{|x-y|} (A+\lambda)^{-\frac{1}{2}}\otimes 1\Psi\rangle|\leq C \big[\varepsilon+\frac{\varepsilon^{-\frac{\alpha}{1-\alpha}}}{\lambda}\big] \,|| \xi||_{L^\frac{6p}{6-p}}^p\;
||\Phi||_{L^2(\R^3)} \; ||\Psi||_{L^2(\R^6)}\,.
\end{eqnarray*}
This proves the first limit when $\lambda\to\infty$, the second one is
similar and it is left to the reader. \\
The global well-posedness in $Q(A)$, conservation of energy and charge of the semi-relativistic Hartree equation
\begin{equation*}
\left\{
      \begin{aligned}
      &i  \partial_{t}z=\sqrt{-\Delta+m^2}\; z+V(x) z+\frac{\kappa}{|x|}*|z|^2 z&\\
        &z_{|t=0}=z_0,
         \end{aligned}
    \right.
\end{equation*}
are proved in \cite[Theorem 4]{Lenz} for all $\kappa\geq 0$. But if $0>\kappa > -\kappa_{cr},$ then \cite[Theorem 4]{Lenz} ensures the existence and uniqueness of strong solutions for
initial data in $B_{\mathcal{Z}_0}(0,1)\cap Q(A)$. So, Theorem \ref{thm.main} applies in this case for all $\kappa$ such that $|\kappa|<\kappa_{cr}$. Remark that the upper bound on $\kappa$ is due to the use of the KLMN theorem (assumption \eqref{A2}) in order to construct the quantum dynamics. This is can be avoided if we rely on  the Friedrichs extension.
\end{example}

\section{Properties of the Quantum Dynamics}
\label{se.Quantum}
In this section we show that under the assumptions \eqref{A1}-\eqref{A2} the quadratic form  \eqref{eq.hn} defines a unique self-adjoint operator $H_N$. Thereafter,  a useful  regularity property of the related quantum dynamics is stated in Proposition \ref{pr.invhn}.

\subsection{Selfadjointness}
Remember that  the quadratic form $q$ satisfies \eqref{A2} and $q_{i,j}^{(n)}$, $q_N$ are defined respectively by \eqref{qij} and \eqref{eq.hn}.
\begin{lm}
\label{ses-qij}
Assume \eqref{A1}-\eqref{A2}. Then, for any $1\leq i<j\leq n$, $q_{i,j}^{(n)}$ extends to a  symmetric quadratic form on $Q(A_i+A_j)\subset\otimes^n\mathcal{Z}_0$. Moreover,  for any   $\Phi\in Q(A_i+A_j)$,
\begin{eqnarray}
\label{est-q0}
|q_{i,j}^{(n)}(\Phi^{(n)},\Phi^{(n)})| &\leq &  a\,\langle \Phi^{(n)},A_i+A_j\,\Phi^{(n)}\rangle +b \,\|\Phi^{(n)}\|_{\otimes^{n}\mathcal Z_{0}}^{2}\,.
\end{eqnarray}
\end{lm}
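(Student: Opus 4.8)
\emph{Proof idea.} First I would reduce to the case $i=1,\,j=2$. If $\sigma\in\varSigma_n$ is a permutation sending $\{1,2\}$ onto $\{i,j\}$, then the unitary $\Pi_\sigma$ of \eqref{inter-op} preserves $\otimes^{alg,n}Q(A)$, satisfies $\Pi_\sigma(A_1+A_2)\Pi_\sigma^{-1}=A_i+A_j$, and — reading off \eqref{qij} on elementary tensors — transforms $q_{i,j}^{(n)}$ into $q_{1,2}^{(n)}$ after relabeling. Since $\Pi_\sigma$ is an isometry for the graph norms of $A_1+A_2$ and $A_i+A_j$, the statement for $(i,j)$ follows from the one for $(1,2)$, so from now on I work with $q_{1,2}^{(n)}$ and identify $\otimes^n\mathcal Z_0\cong(\otimes^2\mathcal Z_0)\otimes(\otimes^{n-2}\mathcal Z_0)$, with $A_1+A_2$ on $\otimes^n\mathcal Z_0$ identified with $(A_1+A_2)\otimes 1^{\otimes(n-2)}$.

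Second, I would prove \eqref{est-q0} on the algebraic tensor product $\otimes^{alg,n}Q(A)$. Given such a $\Phi$, the components appearing in its last $n-2$ slots span a finite-dimensional subspace of $\otimes^{n-2}\mathcal Z_0$; choosing an orthonormal basis $(e_l)_{l=1}^m$ of that span, one may write $\Phi=\sum_{l=1}^m\psi_l\otimes e_l$ with $\psi_l\in\otimes^{alg,2}Q(A)$. Then, expanding by sesquilinearity and using \eqref{qij}, $q_{1,2}^{(n)}(\psi_l\otimes e_l,\psi_{l'}\otimes e_{l'})=q(\psi_l,\psi_{l'})\,\langle e_l,e_{l'}\rangle=q(\psi_l,\psi_{l'})\,\delta_{ll'}$, so that $q_{1,2}^{(n)}(\Phi,\Phi)=\sum_l q(\psi_l,\psi_l)$; likewise $\|\Phi\|^2=\sum_l\|\psi_l\|^2$ and $\langle\Phi,(A_1+A_2)\Phi\rangle=\sum_l\langle\psi_l,(A_1+A_2)\psi_l\rangle$ because the $e_l$ are orthonormal. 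Applying \eqref{A2} to each $\psi_l$ and summing gives precisely \eqref{est-q0} on $\otimes^{alg,n}Q(A)$, and symmetry of the resulting form is inherited from that of $q$.

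Third, I would extend $q_{1,2}^{(n)}$ to all of $Q(A_1+A_2)\subset\otimes^n\mathcal Z_0$. The bound just obtained shows $|q_{1,2}^{(n)}(\Phi,\Phi)|\le\max(a,b)\,\|\Phi\|_{Q(A_1+A_2)}^2$ on $\otimes^{alg,n}Q(A)$, hence by polarization $q_{1,2}^{(n)}$ is continuous for the graph norm of $A_1+A_2$ and extends uniquely to the closure of $\otimes^{alg,n}Q(A)$ in that norm. That closure is all of $Q(A_1+A_2)$ by the standard fact that, for non-negative self-adjoint operators, the algebraic tensor product of form domains is a form core for the operator sum; equivalently, and more concretely, one fixes an orthonormal basis $(e_l)_{l}$ of $\otimes^{n-2}\mathcal Z_0$, writes a general $\Phi\in Q(A_1+A_2)$ as $\Phi=\sum_l\psi_l\otimes e_l$, observes that $\Phi\in Q(A_1+A_2)$ is equivalent to each $\psi_l\in Q(A_1+A_2)\subset\otimes^2\mathcal Z_0$ together with $\sum_l\langle\psi_l,(A_1+A_2+1)\psi_l\rangle<\infty$, and sets $q_{1,2}^{(n)}(\Phi,\Phi):=\sum_l q(\psi_l,\psi_l)$, the series converging absolutely by \eqref{A2}. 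One then checks this value is basis-independent, agrees with the original form on $\otimes^{alg,n}Q(A)$, is symmetric, and satisfies \eqref{est-q0} by summing the per-$l$ estimates.

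The main obstacle is exactly this last point: one must know that membership of $\Phi$ in the form domain $Q(A_1+A_2)$ of the operator $A_1+A_2$ acting on $\otimes^n\mathcal Z_0$ is faithfully detected, one basis vector at a time, by summability of $\langle\psi_l,(A_1+A_2)\psi_l\rangle$ — that is, that $Q(A_1+A_2)$ really is the natural Hilbertian tensor product of $Q(A_1+A_2)$ (inside $\otimes^2\mathcal Z_0$) with $\otimes^{n-2}\mathcal Z_0$, equipped with the graph norm, and that the quadratic forms $\|\cdot\|^2$, $\langle\cdot,(A_1+A_2)\cdot\rangle$, and $q_{1,2}^{(n)}$ all decompose accordingly along $(e_l)$. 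Once this tensor-product description of the form domain is in hand, the permutation reduction, the finite-dimensional computation, and the continuous extension by polarization are all routine.
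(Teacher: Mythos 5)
Your proposal is correct and follows essentially the same route as the paper: reduction to $(i,j)=(1,2)$ via the interchange unitary $\Pi_{(i,j)}$, an orthonormal decomposition along the remaining $n-2$ factors so that \eqref{A2} can be applied block by block and summed, and extension to $Q(A_i+A_j)$ by graph-norm continuity since $\otimes^{alg,n}Q(A)$ is a form core for $A_i+A_j$. The ``main obstacle'' you flag is precisely this standard form-core (tensorized form domain) fact, which the paper invokes without further elaboration.
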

\begin{proof}
Once the estimate \eqref{est-q0} is proved for any $\Phi^{(n)}\in \otimes^{alg,n}Q(A)$, the  extension of $q_{i,j}^{(n)}$ to the domain $Q(A_i+A_j)$  is  straightforward since  $\otimes^{alg,n}Q(A)$ is a form core for $A_i+A_j$.
A simple computation yields for any $\Phi^{(n)},\Psi^{(n)} \in \otimes^{alg,n}Q(A)$,
 \begin{equation}
 \label{sym-rel}
 q_{i,j}^{(n)}(\Phi^{(n)}, \Psi^{(n)} )= q_{1,2}^{(n)}(\Pi_{(i,j)}\Phi^{(n)}, \Pi_{(i,j)}\Psi^{(n)} )\,,
 \end{equation}
where $\Pi_{(i,j)}$ is the interchange operator defined in \eqref{inter-op} with
$\sigma=(i,j)$ is the particular permutation
\begin{eqnarray*}
(i,j)=\left(
\begin{matrix}
1&2& \cdots & i &\cdots &j&\cdots &n\\
i&j& \cdots & 1 &\cdots &2&\cdots &n
\end{matrix}
\right)\,.
\end{eqnarray*}
Moreover, one remarks that
$$
\langle\Pi_{(i,j)}\Phi^{(n)}, A_1+A_2 \,\Pi_{(i,j)}\Psi^{(n)} \rangle=
\langle\Phi^{(n)}, A_i+A_j \,\Psi^{(n)} \rangle \,.
$$
Hence, it is enough to prove \eqref{est-q0} for $i=1$ and $j=2$ and $\Phi^{(n)}\in \otimes^{alg,n}Q(A)$. Let $\{e_k\}_{k\in\N}$ be an O.N.B of
$\mathcal{Z}_0$ such that $e_k\in Q(A)$ for all $k\in\N$. For $r\in\N^n$, $r=(r_1,\cdots,r_n)$, we denote
$$
e(r):=e_{r_1}\otimes\cdots\otimes e_{r_n}\in \otimes^n\mathcal{Z}_0\,.
$$
 Remark that $\{e(r)\}_{r\in\N^n}$ is an O.N.B of $\otimes^n\mathcal{Z}_0$ and for any $\Phi^{(n)}\in \otimes^{alg,n}Q(A)$ one can write $\Phi^{(n)}=\sum_{r\in\N^n} \lambda(r) e(r)$ (we may assume without  loss of generality that the sum is finite).  Hence
\begin{align*}
|q_{1,2}^{(n)}(\Phi^{(n)},\Phi^{(n)})|&=
|\sum_{r,s\in \N^n} \overline{\lambda(r)} \lambda(s) \;q_{1,2}^{(n)} \big(e(r),e(s)\big)|&\\
&\leq \left|\sum_{r_3,\cdots,r_n} \,q\left( \sum_{r_1,r_2}  \lambda(r_1,r_2,r_3,\cdots,r_n)  e_{r_1}\otimes e_{r_2}\,;  \sum_{s_1,s_2} \lambda(s_1,s_2,r_3,\cdots,r_n) e_{s_1}\otimes e_{s_2}\right)   \right|&\\
&\leq a\;\sum_{r_3,\cdots,r_n}  \big\langle \sum_{r_1,r_2}  \lambda(r_1,r_2,r_3,\cdots,r_n)  e_{r_1}\otimes e_{r_2} ; A_1+A_2 \, \sum_{s_1,s_2} \lambda(s_1,s_2,r_3,\cdots,r_n) e_{s_1}\otimes e_{s_2}\big\rangle &\\&+b\,
\sum_{r_1,r_2}  |\lambda(r_1,r_2,r_3,\cdots,r_n)|^2&\\
&\leq  a\,\langle \Phi^{(n)} , A_1+A_2 \, \Phi^{(n)}\rangle +b\, \|\Phi^{(n)}\|_{\otimes^{n}\mathcal Z_{0}}^{2}\,.
\end{align*}
The second inequality follows  using \eqref{A2}.
\end{proof}

\begin{remarks}
\label{sym-rek}
A consequence of the last proof is that for any $\Psi^{(N)} ,\Phi^{(N)} \in \vee^{alg,N}Q(A)=\mathcal{S}_N \otimes^{alg,N}Q(A)$,
$$
q_{i,j}^{(N)}(\Psi^{(N)}, \Phi^{(N)} )= q_{1,2}^{(N)}(\Psi^{(N)}, \Phi^{(N)} )\,.
$$
\end{remarks}

\begin{lm}
\label{est-qlm}
Assume \eqref{A1}-\eqref{A2}. Then $q_{N}$ extends to  a symmetric quadratic form on  $Q(H_N^0)\subset\vee^N\mathcal{Z}_0$. Moreover, for any   $\Psi^{(N)}\in Q(H_{N}^{0})$,
\begin{eqnarray}
\label{est-q}
|q_{N}(\Psi^{(N)},\Psi^{(N)})| &\leq & a\,\langle \Psi^{(N)},H_N^0\Psi^{(N)}\rangle +b N\|\Psi^{(N)}\|_{\vee^{N}\mathcal Z_{0}}^{2}\,.
\end{eqnarray}
\end{lm}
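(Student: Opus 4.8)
The goal is to show that the interaction $q_N = \frac{1}{N}\sum_{i<j} q_{i,j}^{(N)}$ is relatively form-bounded by $H_N^0$ with the same constant $a$ and a factor $N$ in front of $b$. The plan is to reduce to the per-pair estimate already established in Lemma \ref{ses-qij}.

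\medskip

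\textbf{Plan.} First I would reduce to $\Psi^{(N)}\in \vee^{alg,N}Q(A)$, which is a form core for $H_N^0$; once \eqref{est-q} holds there, the extension to all of $Q(H_N^0)$ is automatic by density and the closability that \eqref{A2} provides. Next, for a fixed symmetric vector $\Psi^{(N)}$, I would use the triangle inequality together with the estimate \eqref{est-q0} from Lemma \ref{ses-qij} applied with $i,j$ and $n=N$: for each pair $(i,j)$,
$$
|q_{i,j}^{(N)}(\Psi^{(N)},\Psi^{(N)})| \leq a\,\langle \Psi^{(N)}, (A_i+A_j)\Psi^{(N)}\rangle + b\,\|\Psi^{(N)}\|^2.
$$
Summing over the $\binom{N}{2}$ pairs and dividing by $N$, the error term contributes $\frac{1}{N}\binom{N}{2} b = \frac{N-1}{2}b \leq b N$ (in fact one gets the sharper $\frac{N-1}{2}b$, but $bN$ suffices), while the main term is $\frac{a}{N}\sum_{i<j}\langle \Psi^{(N)},(A_i+A_j)\Psi^{(N)}\rangle$. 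The key combinatorial identity is that $\sum_{1\le i<j\le N}(A_i+A_j) = (N-1)\sum_{i=1}^N A_i = (N-1)H_N^0$, since each index $i$ appears in exactly $N-1$ pairs. Hence $\frac{a}{N}\sum_{i<j}\langle \Psi^{(N)},(A_i+A_j)\Psi^{(N)}\rangle = \frac{a(N-1)}{N}\langle \Psi^{(N)},H_N^0\Psi^{(N)}\rangle \leq a\,\langle \Psi^{(N)},H_N^0\Psi^{(N)}\rangle$, which gives exactly \eqref{est-q}.

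\medskip

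\textbf{Main obstacle.} There is no serious analytic obstacle here: the estimate is essentially a bookkeeping argument built on Lemma \ref{ses-qij}. The one point requiring a little care is that Lemma \ref{ses-qij} was stated for $q_{i,j}^{(n)}$ acting on $Q(A_i+A_j)\subset \otimes^n\mathcal{Z}_0$ (not necessarily symmetric), so one must check that restricting to symmetric $\Psi^{(N)}$ and summing the individual bounds is legitimate — this is immediate since the symmetric vectors in $\otimes^{alg,N}Q(A)$ lie in each $Q(A_i+A_j)$, and by Remark \ref{sym-rek} all $q_{i,j}^{(N)}$ agree with $q_{1,2}^{(N)}$ on such vectors, so one could alternatively write $q_N = \frac{N-1}{2}q_{1,2}^{(N)}$ on the symmetric core and estimate a single term. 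Either route yields the bound with constants $a$ (unchanged, crucially infinitesimally less than $1$) and $bN$; one should also note symmetry of the extended form, which follows from symmetry of each $q_{i,j}^{(N)}$.
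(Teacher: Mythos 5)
Your proposal is correct and follows essentially the same route as the paper: both rest entirely on the per-pair bound of Lemma \ref{ses-qij} applied on the form core $\vee^{alg,N}Q(A)$, the only cosmetic difference being that the paper uses Remark \ref{sym-rek} to collapse $q_N$ to $\frac{N-1}{2}q_{1,2}^{(N)}$ and the identity $\langle \Psi^{(N)},(A_1+A_2)\Psi^{(N)}\rangle=\frac{2}{N}\langle \Psi^{(N)},H_N^0\Psi^{(N)}\rangle$, whereas you sum over the $\binom{N}{2}$ pairs and use $\sum_{i<j}(A_i+A_j)=(N-1)H_N^0$ — the same computation. Your constants and the density/extension step match the paper's argument.
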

\begin{proof}
As in the previous lemma, it is enough to prove the inequality \eqref{est-q} for any $\Psi \in \vee^{alg,N}Q(A)$. Lemma  \ref{ses-qij} with Remark \ref{sym-rek} yield the estimate:
\begin{eqnarray*}
|q_{N}(\Psi^{(N)},\Psi^{(N)})| &=&  \frac{N(N-1)}{2N} | q_{1,2}^{(N)}(\Psi^{(N)},\Psi^{(N)}) | \\
&\leq& \frac{N}{2}\,\big[ a \langle \Psi^{(N)},A_1+A_2\Psi^{(N)}\rangle  +b \|\Psi^{(N)}\|_{\vee^{N}\mathcal Z_{0}}^{2}\big]\,.
\end{eqnarray*}
Using the fact that $ \langle \Psi^{(N)},A_1+A_2\Psi^{(N)}\rangle=\frac{2}{N}  \langle \Psi^{(N)}, H_N^{0}\Psi^{(N)}\rangle$, we obtain the claimed inequality.
\end{proof}
The lemma above allows to use the KLMN Theorem \cite[Theorem X.17]{RS4} since $q_N$ is a small perturbation in the sense of quadratic forms of $H_N^0$ and therefore  one  obtains the selfadjointness of $H_N$.
\begin{prop}[Self-adjoint realization of $H_{N}$]
\label{pr.aa}
Assume \eqref{A1}-\eqref{A2}, then there exists a unique self-adjoint operator $H_{N}$ with $Q(H_{N})=Q(H_{N}^{0})$ satisfying for any $\Psi^{(N)},\Phi^{(N)} \in Q(H_{N}^{0})$
\begin{equation*}
\langle \Psi^{(N)},H_{N}\Phi^{(N)}\rangle = \langle\Psi^{(N)},H_{N}^{0}\Phi^{(N)}\rangle+q_{N}(\Psi^{(N)},\Phi^{(N)}).
\end{equation*}
\end{prop}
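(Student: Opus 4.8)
The plan is to invoke the KLMN theorem (\cite[Theorem X.17]{RS4}) with unperturbed operator $H_N^0$ and perturbation $q_N$. The two ingredients needed for this are already available. First, by Assumption \eqref{A1} the operator $A$ is non-negative and self-adjoint, so $H_N^0=\sum_{i=1}^N A_i$ is a non-negative self-adjoint operator on $\bigvee^N\mathcal Z_0$ whose form domain is $Q(H_N^0)$. Second, by Lemma \ref{est-qlm}, $q_N$ is a symmetric quadratic form defined on all of $Q(H_N^0)$ and satisfies
\[
|q_N(\Psi^{(N)},\Psi^{(N)})|\leq a\,\langle \Psi^{(N)},H_N^0\Psi^{(N)}\rangle+bN\,\|\Psi^{(N)}\|^2_{\vee^N\mathcal Z_0}\,,\qquad \Psi^{(N)}\in Q(H_N^0)\,,
\]
with the same constant $0<a<1$ as in Assumption \eqref{A2}. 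In other words $q_N$ is $H_N^0$-form bounded with relative bound strictly less than one, which is exactly the hypothesis of KLMN.

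I would then conclude as follows. By the KLMN theorem the sesquilinear form
\[
(\Psi^{(N)},\Phi^{(N)})\longmapsto \langle\Psi^{(N)},H_N^0\Phi^{(N)}\rangle+q_N(\Psi^{(N)},\Phi^{(N)})\,,\qquad \Psi^{(N)},\Phi^{(N)}\in Q(H_N^0)\,,
\]
is closed, bounded from below by $-bN$, and is the form of a unique self-adjoint operator $H_N$ with $Q(H_N)=Q(H_N^0)$ and
\[
\langle\Psi^{(N)},H_N\Phi^{(N)}\rangle=\langle\Psi^{(N)},H_N^0\Phi^{(N)}\rangle+q_N(\Psi^{(N)},\Phi^{(N)})
\]
for all $\Psi^{(N)},\Phi^{(N)}\in Q(H_N^0)$. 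Uniqueness of $H_N$ is part of the statement of KLMN and ultimately rests on the first representation theorem for closed semibounded forms.

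There is no real obstacle here: the substantive work has already been done in Lemmas \ref{ses-qij} and \ref{est-qlm}, where the combinatorial reduction of each $q_{i,j}^{(N)}$ to the two-body form $q$ (Remark \ref{sym-rek}) together with Assumption \eqref{A2} produces the decisive relative bound. The only point deserving a line of care is that the relative bound $a$ in Lemma \ref{est-qlm} does \emph{not} depend on $N$ --- only the additive constant $bN$ grows linearly --- so the requirement $a<1$ of the KLMN theorem is met uniformly in $N$, which is precisely why the estimate of Lemma \ref{est-qlm} is stated in that form.
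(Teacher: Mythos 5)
Your proposal is correct and follows exactly the paper's route: the paper also deduces Proposition \ref{pr.aa} by applying the KLMN theorem \cite[Theorem X.17]{RS4} to the form perturbation $q_N$ of $H_N^0$, with the relative bound $a<1$ supplied by Lemma \ref{est-qlm}. Your additional remark that $a$ is independent of $N$ (only the additive constant grows like $bN$) is precisely the point the paper's estimate is designed to deliver, so nothing is missing.
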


\subsection{Invariance property}
A straightforward consequence of Proposition \ref{pr.aa} is that the
form domain $Q(H_{N}^{0})$ is invariant with respect to the dynamics
of $H_{N}$. However, we would like  to have a  quantitative uniform  bound  on
$\langle\Psi_{t}^{(N)},H_{N}^{0}\Psi_{t}^{(N)}\rangle$ for every $t
\in \R$ with 
$$
\Psi_{t}^{(N)}:=e^{-itH_N}\Psi^{(N)}\,.
$$

\begin{prop}[Propagation of states on $Q(H_{N}^{0})$]
\label{pr.invhn}
Let $\Psi^{(N)} \in Q(H_{N}^{0})$ such that $\|\Psi^{(N)}\|_{\vee^{N}\mathcal Z_{0}}=1$ and satisfying:
$$\exists C>0,\, \forall N\in\N,\,\langle \Psi^{(N)},H_{N}^{0}\Psi^{(N)}\rangle \leq C N.$$
Then there exists a constant $C_{a,b}>0$ independent of $N$ such that for any
$t \in \R$ and $N\in\N$,
$$\langle {\Psi}_{t}^{(N)},H_{N}^{0} {\Psi}_{t}^{(N)}\rangle \leq C_{a,b}N.$$
\end{prop}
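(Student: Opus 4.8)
The plan is to play conservation of the quantum energy $\langle\,\cdot\,,H_{N}\,\cdot\,\rangle$ against the relative form bound of Lemma \ref{est-qlm}. First I would record that the dynamics preserves the form domain. By Lemma \ref{est-qlm} one has $q_{N}(\Psi,\Psi)\geq -a\langle\Psi,H_{N}^{0}\Psi\rangle-bN\|\Psi\|^{2}$, hence $H_{N}\geq (1-a)H_{N}^{0}-bN\geq -bN$; in particular $H_{N}$ is bounded below, and with $c:=bN+1$ the operator $(H_{N}+c)^{1/2}$ is well defined and, by the spectral theorem, commutes with $e^{-itH_{N}}$. Therefore, for $\Psi^{(N)}\in Q(H_{N})=Q(H_{N}^{0})$ (the equality being Proposition \ref{pr.aa}), we get $\Psi_{t}^{(N)}=e^{-itH_{N}}\Psi^{(N)}\in Q(H_{N})=Q(H_{N}^{0})$ and
\[
\langle \Psi_{t}^{(N)},H_{N}\Psi_{t}^{(N)}\rangle=\langle \Psi^{(N)},H_{N}\Psi^{(N)}\rangle\qquad\text{for all }t\in\R,
\]
which is conservation of energy. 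In particular $\langle \Psi_{t}^{(N)},H_{N}^{0}\Psi_{t}^{(N)}\rangle<\infty$ for every $t$, so every quantity below is finite.

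Next I would apply Lemma \ref{est-qlm} twice. At $t=0$, using $\langle \Psi^{(N)},H_{N}^{0}\Psi^{(N)}\rangle\leq CN$,
\[
\langle \Psi^{(N)},H_{N}\Psi^{(N)}\rangle=\langle \Psi^{(N)},H_{N}^{0}\Psi^{(N)}\rangle+q_{N}(\Psi^{(N)},\Psi^{(N)})\leq (1+a)\langle \Psi^{(N)},H_{N}^{0}\Psi^{(N)}\rangle+bN\leq\big((1+a)C+b\big)N.
\]
At time $t$, bounding $-q_{N}(\Psi_{t}^{(N)},\Psi_{t}^{(N)})$ from above and rearranging (using $0<a<1$),
\[
(1-a)\,\langle \Psi_{t}^{(N)},H_{N}^{0}\Psi_{t}^{(N)}\rangle\leq \langle \Psi_{t}^{(N)},H_{N}\Psi_{t}^{(N)}\rangle+bN.
\]
Inserting the conservation of energy and the $t=0$ estimate gives $(1-a)\langle \Psi_{t}^{(N)},H_{N}^{0}\Psi_{t}^{(N)}\rangle\leq\big((1+a)C+2b\big)N$, so the statement holds with $C_{a,b}:=\dfrac{(1+a)C+2b}{1-a}$, which is independent of $N$ and of $t$.

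The only genuinely delicate step is the first one: justifying invariance of the form domain under $e^{-itH_{N}}$ together with conservation of $\langle\,\cdot\,,H_{N}\,\cdot\,\rangle$ on it. If one prefers not to invoke the spectral-theorem commutation directly, a clean alternative is to prove the whole chain of inequalities first for $\Psi^{(N)}\in D(H_{N})$, where $t\mapsto\langle \Psi_{t}^{(N)},H_{N}\Psi_{t}^{(N)}\rangle$ is literally constant, and then pass to the limit along a sequence in $D(H_{N})$ converging to $\Psi^{(N)}$ in the form norm of $H_{N}^{0}$; continuity of $q_{N}$ with respect to that norm (Lemma \ref{est-qlm}) and lower semicontinuity of the free energy make the passage to the limit harmless. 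No analytic input beyond Lemma \ref{est-qlm} and Proposition \ref{pr.aa} is required.
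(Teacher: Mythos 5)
Your proposal is correct and follows essentially the same route as the paper: the relative form bound of Lemma \ref{est-qlm} rearranged into $(1-a)H_{N}^{0}\leq H_{N}+bN$, combined with conservation of $\langle\,\cdot\,,H_{N}\,\cdot\,\rangle$ along the flow and the $t=0$ bound, yielding exactly the constant $C_{a,b}=\frac{(1+a)C+2b}{1-a}$. The only difference is that you spell out the justification of form-domain invariance and energy conservation (spectral-theorem commutation of $(H_{N}+c)^{1/2}$ with $e^{-itH_{N}}$), which the paper treats as an immediate consequence of Proposition \ref{pr.aa}.
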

\begin{proof}
Since $0<a<1$ the inequality $\pm q_{N}\leq aH_{N}^{0}+bN$ implies that $H_N^{0}\leq \frac{1}{1-a} H_N +
\frac{b}{1-a} N$ in the form sense. Let $\Psi^{(N)}\in Q(H_{N}^{0})$ then for any $t \in \R$,
\begin{eqnarray*}
\langle
\Psi_{t}^{(N)},H_{N}^{0}\Psi_{t}^{(N)}\rangle &\leq& \frac{1}{1-a} \langle \Psi_{t}^{(N)},H_{N}\Psi_{t}^{(N)}\rangle
+ \frac{b}{1-a} N \\
&\leq&
\frac{1+a}{1-a}\langle \Psi^{(N)},H^{0}_{N}\Psi^{(N)}\rangle+\frac{2b}{1-a} N \\
&\leq&
\frac{(1+a)C+2b}{1-a}  N\,.
\end{eqnarray*}
The second inequality follows using  the fact that $\langle
\Psi_{t}^{(N)},H_{N}\Psi_{t}^{(N)}\rangle=\langle
\Psi^{(N)},H_{N}\Psi^{(N)}\rangle$ and  Lemma \ref{est-qlm}.
\end{proof}

\section{Duhamel's formula}
\label{sub.defqj}
The main result provided by Theorem \ref{thm.main} is the identification of the Wigner measures
of  the time-evolved states $\varrho_{N}(t)$. According to the Definition
 \ref{de.wigmeas} of Wigner measures one needs simply to compute the limit when $N\to\infty$ of
$$
\mathcal  I_{N}(t):=\Tr[\varrho_{N}(t)\,\mathcal W(\sqrt{2}\pi\xi) ]=\langle
{\Psi}_{t}^{(N)},\mathcal W(\sqrt{2}\pi\xi){\Psi}_{t}^{(N)}\rangle\,.
$$
This task may seems quite simple but since the quantum dynamics are non trivial it is unlikely that one can compute explicitly the above limits. Therefore, it is reasonable to rely on the dynamical properties of $\mathcal  I_{N}(t)$ as for non-homogenous PDE and  write a Duhamel's formula satisfied by $\mathcal  I_{N}(t)$. The point here is  that all the possible limits of $\mathcal  I_{N}(t)$ have to satisfy a limiting integral equation. And if one can solve the latter equation then it is possible to identify the Wigner measures of $\varrho_{N}(t)$. This strategy was introduced in \cite{AmNi4} for Schr\"odinger dynamics with singular potential. Here we improve it and extend it to a more general setting.

\subsection{Commutator computation}
In order to derive the aforementioned Duhamel's formula, we differentiate the quantity
 $\mathcal I_N(t)$ with respect to time. This  roughly leads to the analysis of the commutator $[\mathcal W(\sqrt{2}\pi\xi), H_N-H_N^0]$. Since the Weyl operators do not conserve the number of particles the latter quantity has to be expanded in the symmetric Fock space.
To handle  this computation  efficiently,  we use the  Wick
quantization procedure explained in Appendix \ref{se.app} and rely
particularly in the properties of the class of symbols $\mathcal
Q_{p,q}(A)$. We suggest the reading of Appendix  \ref{se.app} before going through this subsection.\\
Recall that $\mathfrak{Q}_n=Q(H_n^0)$ is a Hilbert space equipped with the inner product
\eqref{inner}. The class of monomials $\mathcal Q_{p,q}(A)$ is defined by \eqref{def.symbq} and the energy functional satisfies:
\begin{equation*}
h(z)=\langle z, A z\rangle+\frac{1}{2} q(z^{\otimes 2},z^{\otimes 2})\in  Q_{1,1}(A)+Q_{2,2}(A)\,,
\end{equation*}
with the following relation holding for all $\Psi^{(N)},\Phi^{(N)}\in Q(H_N^0)$,
$$
\langle \Psi^{(N)}, H_N\Phi^{(N)}\rangle = \langle \Psi^{(N)},\varepsilon^{-1}
h^{Wick}\Phi^{(N)}\rangle\,, \quad \text{ when } \quad \varepsilon=\frac 1 N\,.
$$
The above identity stresses the relationship between the many-body Hamiltonian $H_N$
and the Wick quantization of the energy functional $h(z)$. It allows to exploit the general properties of Wick calculus while we deal with the dynamics of $H_N$.\\
We define the following monomial $q_s$ for any $z\in Q(A)$, $s\in\mathbb{R}$,
\begin{equation}
\label{qs}
q_s(z):=\frac{1}{2}  q\big((e^{-isA}z)^{\otimes 2}, (e^{-isA}z)^{\otimes 2}\big)=\frac{1}{2}\langle (e^{-isA}z)^{\otimes 2}, \tilde q \,(e^{-isA}z)^{\otimes 2}\rangle\,,
\end{equation}
and check that under the assumption \eqref{A2},
$$
q_s\in \mathcal Q_{2,2}(A)\,\quad \text{ with } \quad \tilde q_s=\frac{1}{2}\,e^{isA}\otimes e^{isA} \mathcal S_2\,
\tilde q \mathcal S_2\,e^{-isA}\otimes e^{-isA}\in\mathcal{L}(\mathfrak{Q}_2,\mathfrak{Q}_2')\,.
$$
A simple computation yields for any $z\in Q(A)$ and $\xi\in Q(A)$,
$$
q_s(z+i\varepsilon \pi\xi)-q_s(z)=\sum_{j=1}^{4}
\varepsilon^{j-1} q_{j}(\xi,s)\,,
$$
with the  monomials $(q_{j}(\xi,s)[z])_{j=1,2,3,4}$ defined by:
\begin{equation}
\label{mon-qjs}
\begin{aligned}
&q_{1}(\xi,s)[z]=-\pi \,\Im \,q(z_s^{\otimes 2},\mathcal{S}_2\,\xi_s \otimes z_s)\,, \quad &q_{2}(\xi,s)[z]=&-\frac{\pi^{2}}{2}\Re \,q(z_s^{\otimes 2},\xi_s^{\otimes 2})+2\pi^{2}q(\mathcal{S}_2\xi_s \otimes z_s,\mathcal{S}_2\xi_s \otimes z_s)\,,\\
&q_{3}(\xi,s)[z]=\pi^{3}\Im \,q(\xi_s^{\otimes 2},\mathcal{S}_2\,\xi_s \otimes z_s)\,,\quad & q_{4}(\xi,s)[z]=& \frac{\pi^{4}}{4} \,q(\xi_s^{\otimes 2},\xi_s^{\otimes 2}),
\end{aligned}
\end{equation}
and the notation:
\begin{equation*}
\xi_s:=e^{-isA}\xi\,, \qquad z_s:=e^{-isA}z\,.
\end{equation*}

\begin{lm}
\label{lm.symb}
Assume \eqref{A1}-\eqref{A2}, then one checks that
\begin{align*}
&q_{1}(\xi,s)[z]\in\mathcal Q_{2,1}(A)+\mathcal Q_{1,2}(A)\,, \quad &q_{2}(\xi,s)[z]\in & \mathcal Q_{2,0}(A)+\mathcal Q_{0,2}(A)+\mathcal Q_{1,1}(A)\,,\\
&q_{3}(\xi,s)[z]\in \mathcal Q_{1,0}(A)+\mathcal Q_{0,1}(A)  \,,\quad & q_{4}(\xi,s)[z]\in &
\mathcal Q_{0,0}(A) .
\end{align*}
\end{lm}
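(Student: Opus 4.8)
Lemma \ref{lm.symb} is a bookkeeping statement about the symbol classes: for each $j$ one has to (a) read off the bidegree in $(z,\bar z)$ of the polynomial $q_j(\xi,s)[z]$, and (b) check that the associated kernel extends to a bounded operator between the Hilbert scale spaces $\mathfrak{Q}_p=Q(H_p^0)$ and the duals $\mathfrak{Q}_q'$, which is precisely the requirement for membership in $\mathcal{Q}_{p,q}(A)$ as defined in \eqref{def.symbq}. I would start by recalling from Appendix \ref{se.app} that $\mathcal{Q}_{p,q}(A)$ is stable under complex conjugation, which swaps the two indices; consequently, for a bidegree-$(p,q)$ monomial $b$ one has $\Re b,\Im b\in\mathcal{Q}_{p,q}(A)+\mathcal{Q}_{q,p}(A)$, and it suffices to treat the ``holomorphic representatives'' sitting under the $\Re$ and $\Im$ signs in \eqref{mon-qjs}.

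For (a), since $q(u,v)=\langle u,\tilde q\,v\rangle$ is antilinear in $u$ and linear in $v$, every $z_s$ appearing in the first slot of a bracket contributes a factor $\bar z$, every $z_s$ in the second slot a factor $z$, and each $\xi_s$ contributes nothing. Reading \eqref{mon-qjs} accordingly, $q(z_s^{\otimes 2},\mathcal{S}_2\xi_s\otimes z_s)$ has bidegree $(2,1)$, $q(z_s^{\otimes 2},\xi_s^{\otimes 2})$ bidegree $(2,0)$, $q(\mathcal{S}_2\xi_s\otimes z_s,\mathcal{S}_2\xi_s\otimes z_s)$ bidegree $(1,1)$, $q(\xi_s^{\otimes 2},\mathcal{S}_2\xi_s\otimes z_s)$ bidegree $(0,1)$, and $q(\xi_s^{\otimes 2},\xi_s^{\otimes 2})$ bidegree $(0,0)$; together with the conjugation stability this reproduces exactly the lists in the statement, once the boundedness in (b) is granted.

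For (b), I would realise each kernel as a partial contraction of the time-shifted kernel against the fixed vector $\xi_s=e^{-isA}\xi\in Q(A)$. First, by \eqref{qs} one has $\tilde q_s\in\mathcal{L}(\mathfrak{Q}_2,\mathfrak{Q}_2')$ with norm independent of $s$, since $e^{-isA}$ is unitary on $\mathcal{Z}_0$ and commutes with $A$, hence with $A_1+A_2$ and $\mathcal{S}_2$. Next, the contraction map $L_{\xi_s}:u\mapsto\mathcal{S}_2(\xi_s\otimes u)$ obeys, using that $\mathcal{S}_2$ is an orthogonal projection commuting with $A_1+A_2$,
\[
\langle L_{\xi_s}u,(A_1+A_2+1)L_{\xi_s}u\rangle\le\langle\xi_s\otimes u,(A_1+A_2+1)\,\xi_s\otimes u\rangle\le\|\xi\|_{Q(A)}^2\,\|u\|_{Q(A)}^2\,,
\]
so $L_{\xi_s}\in\mathcal{L}(\mathfrak{Q}_1,\mathfrak{Q}_2)$ uniformly in $s$, and similarly $\xi_s^{\otimes 2}\in\mathfrak{Q}_2$ with norm $\lesssim\|\xi\|_{Q(A)}^2$. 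Composing $\tilde q_s$ with one or two copies of $L_{\xi_s}$, and pairing the leftover leg(s) against $\xi_s^{\otimes 2}\in\mathfrak{Q}_2$ when needed, produces — up to the harmless unitaries $e^{\pm isA}$ on the free legs — bounded kernels $\mathfrak{Q}_1\to\mathfrak{Q}_2'$ (the $(2,1)$ piece of $q_1$), $\mathfrak{Q}_1\to\mathfrak{Q}_1'$ (the $(1,1)$ piece of $q_2$), an element of $\mathfrak{Q}_2'$ (the $(2,0)$ piece of $q_2$), $\mathfrak{Q}_1\to\C$ (the $(0,1)$ piece of $q_3$), and a constant (for $q_4$); these are precisely the kernels of the monomials in \eqref{mon-qjs}. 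They are moreover symmetric on the two-particle legs thanks to the $\mathcal{S}_2$'s already present in \eqref{mon-qjs}, so each $q_j(\xi,s)[z]$ indeed belongs to the asserted class.

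The only point that requires care, rather than being a genuine difficulty, is this combinatorial bookkeeping: correctly matching which tensor legs carry $z$, $\bar z$ or the fixed $\xi_s$ to the pair $(p,q)$, and checking that the contraction maps respect the Hilbert scale $(\mathfrak{Q}_n)_n$ together with its duals. All the analytic content reduces to Assumption \eqref{A2} through the boundedness $\tilde q\in\mathcal{L}(\mathfrak{Q}_2,\mathfrak{Q}_2')$, together with $\xi\in Q(A)$; the dependence on $s$ is immaterial because $A$ generates a unitary group that commutes with everything in sight.
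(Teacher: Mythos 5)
Your argument is correct and is essentially the paper's own proof: you identify each monomial's kernel as a composition of $\mathcal S_2\tilde q\,\mathcal S_2\in\mathcal L(\mathfrak Q_2,\mathfrak Q_2')$ with the contraction $u\mapsto\mathcal S_2(\xi_s\otimes u)$ (the paper writes this kernel as $\mathcal S_2\,\tilde q\,|\xi\rangle\otimes 1$ and checks boundedness $\mathfrak Q_1\to\mathfrak Q_2'$ by the same factorization using \eqref{A2} and $\xi\in Q(A)$), and you then invoke conjugation stability, Proposition \ref{c-prop-wickA}~(i), exactly as the paper does, the unitaries $e^{\pm isA}$ being harmless in both treatments. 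The only blemish is a cosmetic one: your bidegree labels are in the $(\bar z,z)$ order rather than the announced $(z,\bar z)$ order, but since each $q_j$ contains a monomial together with its conjugate this does not affect the stated classes.
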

\begin{proof} This result is a straightforward consequence of Proposition \ref{c-prop-wickA} (iv). However for reader convenience we provide a direct proof.
Remark that $q_{1}(\xi,s)[z]$ is a linear combination of two conjugate monomials. So it is enough to check that $q(z^{\otimes 2},\xi \otimes z)\in\mathcal Q_{1,2}(A)$. In fact, we have
\begin{eqnarray*}
b(z)=q(z^{\otimes 2},\xi \otimes z)&=&\langle z^{\otimes 2}, \mathcal S_2 \,\tilde q \,\xi \otimes z\rangle\\
&=& \langle z^{\otimes 2}, \mathcal S_2\, \tilde q \,(|\xi\rangle\otimes 1)\, z\rangle\,.
\end{eqnarray*}
This implies that there exists a unique operator $\tilde b= \mathcal S_2 \,\tilde q \,|\xi\rangle\otimes 1$ such that for any $z\in Q(A)$,
$$
b(z)=\langle z^{\otimes 2},\tilde b \,z\rangle\,.
$$
Moreover $\tilde b\in \mathcal{L}(\mathfrak{Q}_1,\mathfrak{Q}_2')$ (here $\mathfrak{Q}_n=Q(H_n^0)$) since $\xi\in Q(A)$ and
$$
\bigg((A_1+A_2+1)^{-\frac{1}{2}} \tilde q \;(A+1)^{-\frac{1}{2}}\otimes (A+1)^{-\frac{1}{2}}\bigg)\;
|(A+1)^{\frac 1 2}\xi\rangle\otimes 1\; \;
\in \mathcal{L}(\mathcal Z_0,\otimes^2\mathcal Z_0)\,.
$$
Hence $b\in\mathcal Q_{1,2}(A)$ and $\bar b\in \mathcal Q_{2,1}(A)$ according to Proposition \eqref{c-prop-wickA} (i).
\end{proof}

\begin{prop}
\label{pr.wickcommutator}
For $\xi\in Q(A)$ and  $\varepsilon=\frac{1}{N}$, we have the following equality in the sense
of quadratic forms on $Q(H_N^0)$,
\begin{equation}
\label{pr.wickunbounded}
\frac{1}{\varepsilon} \bigg[q_s^{Wick},
\mathcal W(\sqrt{2}\pi
\xi)\bigg ]=\mathcal W(\sqrt{2}\pi\xi)\big[\sum_{j=1}^{4}\varepsilon^{j-1} q_{j}(\xi,s)^{Wick}\big],
\end{equation}
where $q_{j}(\xi,s)$, $j=1,2,3,4$, are the monomials defined in \eqref{mon-qjs} and
$q_s$ is given by \eqref{qs}.
\end{prop}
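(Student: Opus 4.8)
The plan is to read the identity as an instance of the covariance of Wick quantization under Weyl conjugation (the phase-space translation formula), applied to the quartic symbol $q_s$ and combined with the binomial expansion recorded in \eqref{mon-qjs}. Before that one must check that all the objects involved are genuine quadratic forms on $Q(H_N^0)$. Using \eqref{A2} and the fact that $e^{\pm isA}$ commutes with $(A+1)^{1/2}$, one verifies that $\tilde q_s=\tfrac12\,e^{isA}\otimes e^{isA}\,\mathcal S_2\,\tilde q\,\mathcal S_2\,e^{-isA}\otimes e^{-isA}$ belongs to $\mathcal L(\mathfrak Q_2,\mathfrak Q_2')$, hence $q_s\in\mathcal Q_{2,2}(A)$ and $q_s^{Wick}$ is a well-defined quadratic form on $Q(H_N^0)$ by the Wick calculus of Appendix \ref{se.app} (Proposition \ref{c-prop-wickA}); likewise, Lemma \ref{lm.symb} places each $q_j(\xi,s)$ in a class $\mathcal Q_{p,q}(A)$ with $p+q\le 4$, so each $q_j(\xi,s)^{Wick}$ is a well-defined, possibly number-shifting, quadratic form dominated by a power of $H_N^0+1$. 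All these forms are continuous, in the graph norm of $H_N^0$, on the algebraic core $\vee^{alg,N}Q(A)$, which is a form core for $H_N^0$, and $\mathcal W(\sqrt2\pi\xi)$ is unitary; hence it suffices to establish the stated equality on $\vee^{alg,N}Q(A)$ and then pass to the closure.

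The key input is the translation formula for Wick symbols under Weyl conjugation: for $b$ in the symbol classes above one has, as quadratic forms on finite-particle vectors,
$$\mathcal W(\sqrt2\pi\xi)^{-1}\,b^{Wick}\,\mathcal W(\sqrt2\pi\xi)=\big(z\mapsto b(z+i\varepsilon\pi\xi)\big)^{Wick},$$
equivalently $b^{Wick}\,\mathcal W(\sqrt2\pi\xi)=\mathcal W(\sqrt2\pi\xi)\,\big(b(\cdot+i\varepsilon\pi\xi)\big)^{Wick}$; this is checked by writing $b^{Wick}$ in normal-ordered form and using that $\mathcal W(\sqrt2\pi\xi)$ conjugates the creation and annihilation operators by the constant shifts $a^\#(f)\mapsto a^\#(f)+\text{(a scalar)}$. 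Applying it with $b=q_s$ and subtracting $\mathcal W(\sqrt2\pi\xi)\,q_s^{Wick}$ yields
$$\big[q_s^{Wick},\mathcal W(\sqrt2\pi\xi)\big]=\mathcal W(\sqrt2\pi\xi)\,\Big(q_s(\cdot+i\varepsilon\pi\xi)-q_s\Big)^{Wick}.$$
One then expands $(z_s+i\varepsilon\pi\xi_s)^{\otimes 2}$ in the two slots of $\tfrac12 q(\cdot,\cdot)$, using the sesquilinearity of $q$ and the identity $a\otimes b+b\otimes a=2\mathcal S_2(a\otimes b)$, and groups the resulting terms by homogeneity in $\xi$ into the monomials $q_j(\xi,s)$ of \eqref{mon-qjs}; by linearity of $b\mapsto b^{Wick}$, after dividing by $\varepsilon$ the powers of $\varepsilon$ reorganize precisely into $\mathcal W(\sqrt2\pi\xi)\sum_{j=1}^4\varepsilon^{j-1}q_j(\xi,s)^{Wick}$, which is \eqref{pr.wickunbounded}. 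The equality, first obtained on $\vee^{alg,N}Q(A)$, extends to $Q(H_N^0)$ by the density and continuity remarks above.

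The main obstacle is the rigorous justification of the translation formula for unbounded Wick monomials tested against $Q(H_N^0)$: since $\mathcal W(\sqrt2\pi\xi)$ does not preserve the particle-number grading, the identity genuinely lives in the symmetric Fock space, so one must check that the pairings $\langle\Psi^{(N)},\cdot\,\rangle$ with $\Psi^{(N)}\in Q(H_N^0)$ converge absolutely and that the a priori formal rearrangement of the conjugated operator into the finite sum over $j$ is legitimate; this is exactly where the continuity estimates of Proposition \ref{c-prop-wickA} and the form-core property of $\vee^{alg,N}Q(A)$ enter. The bookkeeping of the powers of $\varepsilon$ arising jointly from the Weyl shift and from the $\varepsilon$-dependent normalization of $b\mapsto b^{Wick}$ is routine but must be carried out carefully. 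If one prefers, the whole computation can be reduced to the case $s=0$: since $q_s^{Wick}=e^{is\,d\Gamma(A)}\,q_0^{Wick}\,e^{-is\,d\Gamma(A)}$ and $e^{-is\,d\Gamma(A)}\,\mathcal W(\sqrt2\pi\xi)\,e^{is\,d\Gamma(A)}=\mathcal W(\sqrt2\pi\,\xi_s)$, the general identity follows from the $s=0$ one by conjugating with the free Fock evolution $e^{is\,d\Gamma(A)}$.
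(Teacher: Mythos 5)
Your argument is correct and is essentially the paper's own proof: the paper simply invokes Proposition \ref{c-prop-wickA} (v) (the Weyl--Wick translation identity, applied with shift $i\varepsilon\pi\xi$ so that $\mathcal W\big(\tfrac{\sqrt2}{i\varepsilon}\,i\varepsilon\pi\xi\big)=\mathcal W(\sqrt2\pi\xi)$) together with the expansion of $q_s(\cdot+i\varepsilon\pi\xi)-q_s$ into the monomials \eqref{mon-qjs} recorded just before the statement. The extra justifications you supply (normal-ordering/approximation for unbounded symbols, the form-core and continuity remarks, and the optional reduction to $s=0$ via $e^{is\,d\Gamma(A)}$-conjugation using Proposition \ref{c-prop-wickA} (ii)) are exactly the ingredients the paper delegates to Appendix \ref{se.app}, so no new route is involved.
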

\begin{proof}
This follows by applying Proposition \ref{c-prop-wickA} (v).
\end{proof}

\subsection{Integral equation}
 Let $(\Psi^{(N)})_{N\in\N}$ be  a sequence  of normalized vectors  in $Q(H_{N}^{0})\subset\bigvee^{N}\mathcal Z_0$ satisfying the hypothesis of Theorem \ref{thm.main}. The time evolved state is
$$
\varrho_{N}(t):=|\Psi_t^{(N)}\rangle \langle \Psi_t^{(N)}| \quad \text{ where } \quad {\Psi}_{t}^{(N)}:=e^{-itH_{N}}\Psi^{(N)}\,.
$$
Actually, it is convenient to work within the interaction representation
\begin{equation}
\label{varrhotilde}
\widetilde{\varrho}_{N}(t):=|\widetilde{\Psi}_t^{(N)}\rangle \langle \widetilde{\Psi}_t^{(N)}| \quad \text{ where  } \quad \widetilde{\Psi}_{t}^{(N)}:=e^{itH_{N}^{0}}e^{-itH_{N}}\Psi^{(N)} \,.
\end{equation}
Our aim in this subsection is to write an  integral  equation (or Duhamel's formula)
satisfied by the map
\begin{equation}
\label{Jnt}
t\mapsto\mathcal  J_{N}(t):=\Tr[\widetilde{\varrho}_{N}(t)\,\mathcal W(\sqrt{2}\pi\xi) ]
=\langle
\widetilde{\Psi}_{t}^{(N)},\mathcal W(\sqrt{2}\pi\xi)\,\widetilde{\Psi}_{t}^{(N)}\rangle\,,
\end{equation}
and to put it in a convenient form in order to carry on the limit $N\to\infty$.

\begin{prop}
\label{pr.ipp}
Assume \eqref{A1}-\eqref{A2} and consider a sequence  $(\Psi^{(N)})_{N\in\N}$ of normalized
vectors in $Q(H_{N}^{0})$.
Then for any $\xi \in D(A)$ the map $t\in\mathbb{R}\mapsto \mathcal J_{N}(t)$ defined in \eqref{Jnt} is $\mathcal C^{1}$ and satisfies for $\varepsilon=\frac{1}{N}$ and all $t\in\mathbb{R}$,
\begin{equation}
\label{eq.derivation}
\mathcal J_{N}(t)=\mathcal
J_{N}(0)+i\displaystyle\int_{0}^{t} \big\langle\widetilde{\Psi}_{s}^{(N)},\mathcal
W(\sqrt{2}\pi\xi)\bigg[\sum_{j=1}^{4}\varepsilon^{j-1} \bigg(q_{j}(\xi,{s})\bigg)^{Wick}\bigg]
\widetilde{\Psi}_{s}^{(N)}\big\rangle \,ds,
\end{equation}
where ${q}_{j}(\xi,s)$, $j=1,\cdots,4$, are the monomials given  in \eqref{mon-qjs}.
\end{prop}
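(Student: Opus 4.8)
The plan is to differentiate $\mathcal J_N(t)=\langle\widetilde\Psi_t^{(N)},\mathcal W(\sqrt2\pi\xi)\widetilde\Psi_t^{(N)}\rangle$ in time and to recognise, thanks to Proposition \ref{pr.wickcommutator}, that what comes out is precisely the integrand in \eqref{eq.derivation}; the only genuine work is the bookkeeping of domains that makes this rigorous. First I record that $Q(H_N)=Q(H_N^0)$ with equivalent graph norms: indeed $\pm q_N\le aH_N^0+bN$ with $a<1$ (Lemma \ref{est-qlm} and Proposition \ref{pr.aa}) gives $(1-a)(H_N^0+1)\le H_N+bN+1\le\big((1+a)+2bN\big)(H_N^0+1)$. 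Hence both $e^{-itH_N}$ and $e^{itH_N^0}$ restrict to strongly continuous one-parameter groups on $Q(H_N^0)$ endowed with its graph norm (each is isometric for the relevant graph norm, and $D(H_N)$, resp.\ $D(H_N^0)$, is a dense core), so for $\Psi^{(N)}\in Q(H_N^0)$ the curve $t\mapsto\widetilde\Psi_t^{(N)}=e^{itH_N^0}e^{-itH_N}\Psi^{(N)}$ is continuous with values in $Q(H_N^0)$; in particular $\widetilde\Psi_t^{(N)}\in Q(H_N^0)$ for every $t$.

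Next I would prove that $t\mapsto\widetilde\Psi_t^{(N)}$ is $\mathcal C^1$ as a map into $Q'(H_N^0)$, with
\[
i\,\partial_t\widetilde\Psi_t^{(N)}=\varepsilon^{-1}q_t^{Wick}\,\widetilde\Psi_t^{(N)}\qquad\text{in }Q'(H_N^0),\quad\varepsilon=\tfrac1N ,
\]
where $q_t$ is the monomial \eqref{qs} and $\varepsilon^{-1}q_t^{Wick}=e^{itH_N^0}q_N e^{-itH_N^0}$ seen as a bounded operator $Q(H_N^0)\to Q'(H_N^0)$. Here I use that $q_N=H_N-H_N^0$ is bounded from $Q(H_N^0)$ to $Q'(H_N^0)$ by \eqref{A2}, that conjugation by the isometry $e^{itH_N^0}$ extends to $Q'(H_N^0)$, and the Wick-calculus intertwining $(q_t)^{Wick}=e^{itH_N^0}(q_0)^{Wick}e^{-itH_N^0}$ on $\vee^N\mathcal Z_0$ (Appendix \ref{se.app}), which is exactly why $q_t$ was defined as in \eqref{qs}. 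To establish the displayed identity I would test against $u\in D(H_N^0)$ and take first $\Psi^{(N)}\in D(H_N)$: then $e^{-itH_N^0}u\in D(H_N^0)$ and $e^{-itH_N}\Psi^{(N)}\in D(H_N)$ are genuinely norm-differentiable, so applying the ordinary product rule to $\langle u,\widetilde\Psi_t^{(N)}\rangle=\langle e^{-itH_N^0}u,e^{-itH_N}\Psi^{(N)}\rangle$ and rewriting the $H_N$-term by means of the form identity $H_N=H_N^0+q_N$, the two $H_N^0$-contributions cancel and one is left with $\frac{d}{dt}\langle u,\widetilde\Psi_t^{(N)}\rangle=-i\varepsilon^{-1}\langle u,q_t^{Wick}\widetilde\Psi_t^{(N)}\rangle_{Q,Q'}$. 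Both sides of the corresponding integrated identity depend continuously on $\Psi^{(N)}$ for the $Q(H_N^0)$-topology, hence it persists for all $\Psi^{(N)}\in Q(H_N^0)$, and density of $D(H_N^0)$ in $Q(H_N^0)$ upgrades it to the boxed $Q'(H_N^0)$-valued statement, whose right-hand side is moreover continuous in $t$ (by strong continuity of $e^{\pm itH_N^0}$ on $Q(H_N^0)$ and $Q'(H_N^0)$ together with $\|q_N\|_{Q\to Q'}<\infty$).

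Finally I would differentiate $\mathcal J_N$. Since $\xi\in D(A)$, the Weyl operators $\mathcal W(\pm\sqrt2\pi\xi)$ map $Q(H_N^0)$ boundedly into itself, hence also extend to bounded operators on $Q'(H_N^0)$ (Appendix \ref{se.app}); so $\Xi_t:=\mathcal W(\sqrt2\pi\xi)\widetilde\Psi_t^{(N)}$ is continuous into $Q(H_N^0)$ and $\mathcal C^1$ into $Q'(H_N^0)$ with $i\partial_t\Xi_t=\varepsilon^{-1}\mathcal W(\sqrt2\pi\xi)q_t^{Wick}\widetilde\Psi_t^{(N)}$. Writing $\mathcal J_N(t)=\langle\widetilde\Psi_t^{(N)},\Xi_t\rangle$ and expanding the difference quotient --- using that the duality pairing $Q(H_N^0)\times Q'(H_N^0)\to\C$ is jointly continuous and that each factor is continuous in $Q(H_N^0)$ and $\mathcal C^1$ in $Q'(H_N^0)$ --- one obtains $\frac{d}{dt}\mathcal J_N(t)=\langle\partial_t\widetilde\Psi_t^{(N)},\Xi_t\rangle+\langle\widetilde\Psi_t^{(N)},\partial_t\Xi_t\rangle$, and, invoking the symmetry of $q_t^{Wick}$, this collapses to $\frac{d}{dt}\mathcal J_N(t)=i\varepsilon^{-1}\langle\widetilde\Psi_t^{(N)},[q_t^{Wick},\mathcal W(\sqrt2\pi\xi)]\widetilde\Psi_t^{(N)}\rangle$. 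Inserting Proposition \ref{pr.wickcommutator} turns the right-hand side into $i\langle\widetilde\Psi_t^{(N)},\mathcal W(\sqrt2\pi\xi)\big[\sum_{j=1}^4\varepsilon^{j-1}q_j(\xi,t)^{Wick}\big]\widetilde\Psi_t^{(N)}\rangle$; its continuity in $t$ --- which follows from continuity of $t\mapsto\widetilde\Psi_t^{(N)}$ into $Q(H_N^0)$, of $t\mapsto\xi_t=e^{-itA}\xi$ into $Q(A)$, from Lemma \ref{lm.symb}, and from the mapping properties of the classes $\mathcal Q_{p,q}(A)$ --- shows $\mathcal J_N\in\mathcal C^1(\R)$, and integrating from $0$ to $t$ yields exactly \eqref{eq.derivation}. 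The main obstacle is this last paragraph together with the second one: $t\mapsto\widetilde\Psi_t^{(N)}$ is differentiable only in the weak topology of $Q'(H_N^0)$ while the Weyl conjugation naturally acts on $Q(H_N^0)$, so one must be scrupulous about which pairing is used at each step; once the intertwining and the form identity $H_N=H_N^0+q_N$ are in place, the remaining manipulation is just the formal commutator computation.
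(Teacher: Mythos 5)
Your proposal follows essentially the same route as the paper: differentiate $\mathcal J_N(t)$ using the two unitary groups and the form identity $H_N-H_N^0=\varepsilon^{-1}q^{Wick}_{|\vee^N\mathcal Z_0}$, recognise the derivative as $\tfrac{i}{\varepsilon}\langle\widetilde\Psi_t^{(N)},[q_t^{Wick},\mathcal W(\sqrt2\pi\xi)]\widetilde\Psi_t^{(N)}\rangle$, insert Proposition \ref{pr.wickcommutator}, and integrate; your second paragraph merely spells out the Stone/density bookkeeping that the paper compresses into one sentence, and that part is fine.

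One statement in your last paragraph is not correct as written: $\mathcal W(\pm\sqrt2\pi\xi)$ does \emph{not} map $Q(H_N^0)\subset\bigvee^N\mathcal Z_0$ into itself, since Weyl operators do not conserve the particle number --- $\mathcal W(\sqrt2\pi\xi)\widetilde\Psi_t^{(N)}$ is a genuine Fock-space vector with components in every sector. The correct justification, and the one the paper uses, is Proposition \ref{pr.invsew}: $\mathcal W(\xi)$ with $\xi\in Q(A)$ leaves the form domain $Q(d\Gamma(A)+\mathbf N)$ invariant, so each fixed component $\big[\mathcal W(\sqrt2\pi\xi)\widetilde\Psi_t^{(N)}\big]^{(n)}$ lies in $Q(H_n^0)$, and in particular the $N$-th component lies in $Q(H_N^0)$; this is what gives meaning to the pairing of $\widetilde\Psi_t^{(N)}$ (and of $q_t^{Wick}\widetilde\Psi_t^{(N)}\in Q'(H_N^0)$) with the Weyl-translated vector, and to the commutator. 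Since you cite the appendix regularity result anyway, the repair is immediate: replace your claim of invariance of $Q(H_N^0)$ by the invariance of $Q(d\Gamma(A)+\mathbf N)$ together with extraction of the relevant components, and the rest of your argument goes through unchanged.
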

\begin{proof}
By Stone's theorem one can see that $\mathcal J_{N}(t)$ is continuously differentiable
since $\Psi^{(N)} \in Q(H_{N})=Q(H_{N}^{0})$. So one obtains
\begin{equation*}
i\frac{\mathrm{d}}{\mathrm{d}t}\mathcal J_{N}(t)=
\langle  \widetilde{\Psi}_{t}^{(N)},\mathcal W(\sqrt{2}\pi
\xi) e^{itH_{N}^{0}} (H_N-H_N^0) e^{-itH_{N}} \Psi^{(N)}\rangle-
\langle  e^{itH_{N}^{0}} (H_N-H_N^0) e^{-itH_{N}}\Psi^{(N)},\mathcal W(\sqrt{2}\pi
\xi)\widetilde{\Psi}_{t}^{(N)}\rangle\,.
\end{equation*}
Using the fact that $\varepsilon^{-1} q^{Wick}_{|\vee^N\mathcal Z_0}=H_N-H_N^0=q_N$ in the
sense of quadratic forms on $Q(H_N^0)$ and Proposition \ref{c-prop-wickA}, we see that
\begin{eqnarray*}
\frac{\mathrm{d}}{\mathrm{d}t}\mathcal J_{N}(t)
&=&\langle -\frac{i}{\varepsilon} e^{itH_{N}^{0}} q^{Wick} e^{-itH_{N}}\Psi^{(N)},\mathcal W(\sqrt{2}\pi
\xi)\tilde{\Psi}_{t}^{(N)}\rangle+
\langle  \tilde{\Psi}_{t}^{(N)},\mathcal W(\sqrt{2}\pi
\xi) -\frac{i}{\varepsilon} e^{itH_{N}^{0}} q^{Wick} e^{-itH_{N}} \Psi^{(N)}\rangle\\
&=&\frac{i}{\varepsilon} \langle  \widetilde{\Psi}_t^{(N)}, \bigg[q_t^{Wick},
\mathcal W(\sqrt{2}\pi
\xi)\bigg ]\widetilde{\Psi}_{t}^{(N)}\rangle\,,
\end{eqnarray*}
where $q_t(z)=\frac{1}{2}q( z_t^{\otimes^2},z_t^{\otimes^2})\in \mathcal{Q}_{2,2}(A)$.
The commutator and the duality bracket in the last equations make sense since $\mathcal W(\sqrt{2}\pi\xi)\widetilde{\Psi}_{t}^{(N)} \in
Q(d\Gamma(A)+{\bf N})$ by Proposition \ref{pr.invsew}. So, the $N^{\text{th}}$ component $\big[\mathcal W(\sqrt{2}\pi\xi)\widetilde{\Psi}_{t}^{(N)}\big]^{(N)}$ belongs to $Q(H_{N}^{0})$.
Now, we conclude by applying Proposition \ref{pr.wickcommutator}.
\end{proof}

\section{Convergence arguments}
\label{se.convarg}
We have established in the previous section an integral equation  \eqref{eq.derivation}  satisfied by  the quantity $\mathcal{J}_{N}(t)$. Here we consider its  limit when $N\to\infty$. The main steps are the analysis of $\partial_t\mathcal{J}_{N}(t)$ and the extraction
of subsequences $(N_k)_{k\in\N}$ that would lead to a convergent integral equation for all times. This is  achieved under the assumptions \eqref{D1} and \eqref{D2}.

\subsection{Convergence of $\partial_t\mathcal{J}_{N}(t)$}
\label{se.conv}
The following property is crucial for the proof of convergence.
\begin{prop}
\label{pr.compactness}
Let $\{\varrho_{N}=|\Psi^{(N)}\rangle \langle \Psi^{(N)}|\}_{N \in \N^{*}}$ be a sequence of normal states on
$\vee^{N}\mathcal Z_{0}$ such that $\mathcal M(\varrho_{N}, N \in
\N)=\{ \mu \}$ and
\begin{equation}
\label{assN}
\exists C>0, \forall N\in\N, \,\langle \Psi^{(N)},H_{N}^{0}\Psi^{(N)}\rangle \leq{CN}\,.
\end{equation}
Assume \eqref{A1}-\eqref{A2} and suppose that either \eqref{D1} or \eqref{D2} is true, then for  any $\xi \in Q(A)$ and for every $s \in \R$,
\begin{equation}
\label{eq.compactness}
\lim\limits_{\underset{ N\varepsilon=1}{N \to +\infty}}
\langle
\Psi^{(N)},\mathcal W(\sqrt{2}\pi\xi)\,[q_{1}(\xi,s)]^{Wick}\,\Psi^{(N)}\rangle=\int_{\mathcal
  Z_{0}}e^{2i\pi \Re
  \langle \xi,z\rangle}q_{1}(\xi,s)[z]\;d\mu(z)\,,
\end{equation}
where $z_{s}=e^{-isA}z,\;\xi_{s}=e^{-isA}z$ and
$q_{1}(\xi,s)[z]=-\pi\, \Im \,q(z_s^{\otimes 2},\mathcal{S}_2\,\xi_s \otimes z_s)$.
\end{prop}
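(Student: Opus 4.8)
The plan is to reduce the Wick-quantized expression on the left-hand side of \eqref{eq.compactness} to a form that the known convergence properties of Wigner measures can handle, and the chief difficulty will be controlling the unboundedness of the monomial $q_1(\xi,s)$ relative to the number operator — this is precisely where \eqref{D1} or \eqref{D2} enters.

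\medskip
\noindent
\textbf{Step 1: Reduction to bounded Wick symbols.} First I would note that $q_1(\xi,s)[z]=-\pi\,\Im\,q(z_s^{\otimes 2},\mathcal{S}_2\,\xi_s\otimes z_s)$ lies in $\mathcal Q_{2,1}(A)+\mathcal Q_{1,2}(A)$ by Lemma \ref{lm.symb}, so it is \emph{not} a bounded operator and Wigner-measure convergence cannot be applied directly. The standard remedy is to insert spectral cutoffs: replace $\xi$ and $z$ by $\chi(A/\Lambda)\xi$ and compose with $\chi(H_N^0/(N\Lambda))$ type projections, or more precisely use the a priori bound \eqref{assN} together with Proposition \ref{pr.invhn} to truncate the ``high-energy'' part of $\Psi^{(N)}$. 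Concretely, using $\langle\Psi^{(N)},H_N^0\Psi^{(N)}\rangle\le CN$, the contribution of the sector where, say, the one-particle energy exceeds $\Lambda$ is $O(C/\Lambda)$ uniformly in $N$; this is where the two limits in \eqref{D1} (or the compactness in \eqref{D2}, via Lemma \ref{infD2}) are used to show that the truncation error in the operator $q_1(\xi,s)^{Wick}$ — measured against $H_N^0/N+1$ — goes to $0$ as $\Lambda\to\infty$, uniformly in $N$. After truncation at level $\Lambda$ the relevant Wick symbol becomes a \emph{bounded} operator on $\mathcal Z_0$ (built from $\mathcal{S}_2\,\tilde q$ sandwiched between resolvents and the cutoff), hence lies in a class for which the convergence of reduced density matrices to moments of $\mu$ is available.

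\medskip
\noindent
\textbf{Step 2: Convergence for the truncated quantity.} For fixed $\Lambda$, the truncated operator $\mathcal W(\sqrt2\pi\xi)\,[q_1(\xi,s)]^{Wick}_\Lambda$ is (up to $\varepsilon$-dependent lower-order terms which vanish as $\varepsilon=1/N\to 0$) a finite sum of Weyl operators composed with Wick monomials of bounded symbols. By the fundamental property of Wigner measures — Definition \ref{de.wigmeas} together with the results of \cite{AmNi1} recalled in Appendix \ref{se.app}, in particular that $\mathcal M(\varrho_N,N\in\N)=\{\mu\}$ forces convergence of \emph{all} polynomial-in-$a,a^*$ expectations against the corresponding classical symbols integrated with $d\mu$ — one gets
\begin{equation*}
\lim_{\substack{N\to\infty\\ N\varepsilon=1}}
\langle\Psi^{(N)},\mathcal W(\sqrt2\pi\xi)\,[q_1(\xi,s)]^{Wick}_\Lambda\,\Psi^{(N)}\rangle
=\int_{\mathcal Z_0}e^{2i\pi\Re\langle\xi,z\rangle}\,q_1(\xi,s)_\Lambda[z]\,d\mu(z)\,,
\end{equation*}
where $q_1(\xi,s)_\Lambda$ is the truncated classical symbol. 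Here I would invoke the compactness built into \eqref{D1}/\eqref{D2}, which guarantees that the symbol, once the resolvents are in place, is genuinely compact and hence amenable to the weak-$*$ type convergence of the non-commutative de Finetti / Wigner-measure machinery.

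\medskip
\noindent
\textbf{Step 3: Removing the cutoff.} Finally I would let $\Lambda\to\infty$. On the quantum side the error is uniformly $O(\omega(\Lambda))$ with $\omega(\Lambda)\to 0$ by Step 1 (using \eqref{assN} and \eqref{D1}/\eqref{D2}); on the classical side one needs $\int e^{2i\pi\Re\langle\xi,z\rangle}q_1(\xi,s)_\Lambda[z]\,d\mu(z)\to\int e^{2i\pi\Re\langle\xi,z\rangle}q_1(\xi,s)[z]\,d\mu(z)$, which follows from dominated convergence once one knows $\mu$ is carried by $Q(A)$ with $\int\langle z,Az\rangle\,d\mu(z)<\infty$ — a consequence of \eqref{assN} and the lower-semicontinuity/Fatou argument for Wigner measures (as in \cite{AmNi1,AmNi4}), since $|q_1(\xi,s)[z]|$ is controlled by $\|\xi\|_{Q(A)}$ times a quantity dominated by $\langle z_s,(A_1+A_2)z_s\rangle=2\langle z,Az\rangle$ up to lower order, using \eqref{A2} and unitarity of $e^{-isA}$ on $\mathcal Z_0$. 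Combining the three limits via an $\varepsilon$–$\Lambda$ interchange (legitimate because both error terms are uniform) yields \eqref{eq.compactness}. The main obstacle, as noted, is Step 1: making the energy truncation quantitative and uniform in $N$, which is exactly the role played by assumptions \eqref{D1} and \eqref{D2} together with the propagation bound of Proposition \ref{pr.invhn}.
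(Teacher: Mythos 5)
Your proposal follows essentially the same route as the paper's proof: split $q_{1}(\xi,s)$ into its two monomials in $\mathcal Q_{2,1}(A)+\mathcal Q_{1,2}(A)$, truncate them with energy cutoffs so that under \eqref{D1} or \eqref{D2} the truncated symbols are compact and the quantum truncation error is uniform in $N$ thanks to the kinetic bound \eqref{assN}, pass to the limit for the compact symbols via the Wigner-measure convergence results of \cite{AmNi1}, and remove the cutoff on the classical side by dominated convergence using the a priori estimate of Proposition \ref{pr.transportapriori}. The only small deviations are that the uniform control of the truncation error uses Proposition \ref{pr.invsew} (Weyl operators preserve the form domain) rather than Proposition \ref{pr.invhn}, and that, since \eqref{D1}/\eqref{D2} are stated for $\xi$ in a dense set $D$, the paper adds a final approximation in $\xi$ to reach all of $Q(A)$, a step your sketch leaves implicit.
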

\begin{proof} For simplicity we assume $s=0$ since the proof goes exactly the same
when $s\neq 0$. The following expression holds for any $\xi,z \in
Q(A)$,
\begin{equation*}
2q_{1}(\xi,0)[z]=-2\pi \,\Im \,q(z^{\otimes 2},\mathcal S_2\,\xi \otimes z)=i\pi  B_{1}(z)-i\pi B_{2}(z),
\end{equation*}
with
\begin{equation*}
B_{1}(z)=\langle \xi\otimes{z},\mathcal S_2\,\tilde{q}z^{\otimes{2}}\rangle, \quad
B_{2}(z)=\langle z^{\otimes{2}},\tilde{q} \,\mathcal S_2\,(\xi \otimes z) \rangle.
\end{equation*}
By the assumption \eqref{A2}, the two symbols $B_{1}$ and $B_{2}$ belong to $\mathcal Q_{2,1}(A)$ and $\mathcal Q_{1,2}(A)$ respectively with
\begin{equation*}
\tilde{B}_{1}=\langle \xi| \otimes 1\;\mathcal{S}_2\;\tilde{q} \;\mathcal{S}_2\in \mathcal{L}(\mathfrak{Q}_2,\mathfrak{Q}_1')\,,\qquad \tilde{B}_{2}=\mathcal{S}_2\;\tilde{q}
\;\mathcal{S}_2\;|\xi \rangle \otimes 1\,\in \mathcal{L}(\mathfrak{Q}_1,\mathfrak{Q}_2')\,,
\end{equation*}
and for any $z\in Q(A)$, $B_{1}(z)=\langle z,\tilde{B}_{1} z^{\otimes
  2}\rangle$ and $ B_{2}(z)=\langle z^{\otimes
  2},\tilde{B}_{2}z\rangle$ with the property $\overline{B_1(z)}=B_{2}(z)$. \\
We will use an approximation argument.
Let $\chi \in \mathcal C_{0}^{\infty}(\R)$ such that $\chi(x)=1$ if $|x|\leq{1},$ $\chi(x)=0$ if $|x|\geq{2}$ and $0\leq{\chi}\leq{1}$. We denote for $m \in \N^*$,
$\chi_{m}(x)=\chi(\frac{x}{m})$ and $H_1^0=A$, $H_2^0=A_1+A_2$ and set
\begin{equation*}
\tilde{B}_{1,m}:=\chi_{m}(H_1^0) \,\tilde{B}_{1} \,\chi_{m}(H_2^0)\in
\mathcal{L}(\vee^2\mathcal{Z}_0, \mathcal{Z}_0) \,, \qquad
\tilde{B}_{2,m}:= \chi_{m}(H_2^0) \,
\tilde{B}_{2}\,  \chi_{m}(H_1^0)\in \mathcal{L}( \mathcal{Z}_0,\vee^2\mathcal{Z}_0) \,,
\end{equation*}
and
\begin{equation*}
B_{1,m}(z)=\langle z,\tilde{B}_{1,m} z^{\otimes
  2}\rangle\,,\qquad B_{2,m}(z)=\langle z^{\otimes
  2},\tilde{B}_{2,m}z\rangle\,.
\end{equation*}
Since \eqref{D1} says that $A$ has compact resolvent and both operators $(H_1^0+1)^{-\frac{1}{2}} \tilde{B_{1}} (H_2^0+1)^{-\frac{1}{2}}$ and $(H_2^0+1)^{-\frac{1}{2}} \tilde{B_{2}}(H_1^0+1)^{-\frac{1}{2}} $ are either compact or bounded, we see that $B_{j,m}$ are  compact operators once we assume \eqref{D1} or \eqref{D2}.  We now write the following inequalities for $j=1,2$,
\begin{align}
\label{eq.convergence}
|\langle \Psi^{(N)},\mathcal W(\sqrt{2}\pi\xi) \,B_{j}^{Wick}\,\Psi^{(N)}\rangle-\mu(e^{2i\pi \Re \langle\xi,z\rangle}B_{j}(z))|\leq{\mathcal{A}^{(m)}_{j}+\mathcal{B}^{(m)}_{j}+\mathcal{C}^{(m)}_{j}},
\end{align}
where $$\mathcal{A}^{(m)}_{j}=|\langle \Psi^{(N)},\mathcal W(\sqrt{2}\pi\xi)[B_{j}-B_{j,m}]^{Wick}\Psi^{(N)}\rangle|,$$
$$\mathcal{B}^{(m)}_{j}=|\langle\Psi^{(N)},\mathcal W(\sqrt{2}\pi\xi) \,B_{j,m}^{Wick}\Psi^{(N)}\rangle-\mu(e^{2i\pi \Re \langle\xi,z\rangle} B_{j,m}(z))|,$$
and
$$
\mathcal{C}^{(m)}_{j}=|\mu(e^{2i\pi\Re \langle \xi,z \rangle} B_{j,m}(z))-\mu(e^{2i\pi\Re\langle\xi,z\rangle} B_{j}(z))|.
$$
To prove the limit \eqref{eq.compactness}, we show that all the terms
$\mathcal{A}^{(m)}_{j},\mathcal{B}^{(m)}_{j}, \mathcal{C}^{(m)}_{j}$ can be made
arbitrary small for all $N$ larger enough by choosing a convenient $m\in\N$.

\medskip
\noindent
{\bf The term $\mathcal{C}^{(m)}_{j}$}:
By dominated convergence theorem the quantity  $\mathcal{C}^{(m)}_{j}$ tends to $0$
when $m\to\infty$ for $j=1,2$. In fact  $B_{j,m}(z)$ converges to $B_{j}(z)$ for all $z \in Q(A)$ since $s-\lim \chi_{m}(H_j^0)= \Id$. Moreover, we have for some $C'>0$  and any $z\in Q(A)$,
\begin{equation}
\label{bound1}
|B_{j,m}(z)|\leq
C'\|\xi\|_{Q(A)} \,\|z\|^{2}_{Q(A)} \,\|z\|_{\mathcal Z_{0}}\,,
\end{equation}
since $B_{j,m}$ are in $\mathcal Q_{1,2}(A)$ or $\mathcal Q_{2,1}(A)$ and by Proposition \ref{pr.transportapriori} we get the a priori estimate:
\begin{equation}
\label{bound2}
\int_{\mathcal{Z}_{0}} \|z\|^{2}_{Q(A)} \|z\|_{\mathcal Z_{0}}\;d\mu(z)\leq
C\,.
\end{equation}

\bigskip
\noindent
{\bf The term $\mathcal{B}^{(m)}_{j}$}:
Since $\tilde B_{j,m}$ are compact operators for $j=1,2$ and any $m\in\N^*$, the quantity $\mathcal{B}^{(m)}_{j}\to 0$ when $N \to\infty$ owing to result proved in \cite[Theorem 6.13]{AmNi1} and \cite[Corollary 6.14]{AmNi1}.

\bigskip
\noindent
{\bf The term $\mathcal{A}^{(m)}_{j}$}:
We consider only  $j=1$ since the case $j=2$ is quite similar. We write for any $z\in Q(A)$,
\begin{equation*}
B_1(z)-B_{1,m}(z)=\langle z, (1-\chi_m(H_1^0)) \tilde B_1 \, z^{\otimes 2}\rangle+
\langle z, \chi_m(H_1^0) \tilde B_1 (1-\chi_m(H_2^0))\,z^{\otimes 2}\rangle=:\mathcal U_1(z)
+\mathcal U_2(z)\,,
\end{equation*}
 and check  that $\mathcal U_1,\mathcal U_2\in \mathcal
 Q_{2,1}(A)$. Let  $\Phi^{(N-1)}=[\mathcal W(\sqrt{2}\pi\xi)\Psi^{(N)}]^{(N-1)}$ be the $(N-1)^{\text{th}}$
component of the vector $\mathcal W(\sqrt{2}\pi\xi)\Psi^{(N)}$ in the symmetric Fock space $\Gamma_s(\mathcal Z_0)$.~By Proposition \ref{pr.invsew} we see that $\Phi^{(N-1)}\in Q(H_{N-1}^0)$. So, one obtains
\begin{equation*}
\mathcal{A}^{(m)}_{1}=\underset{(1)}{\underbrace{\langle \Phi^{(N-1)},\mathcal U_1^{Wick} \,
\Psi^{(N)}\rangle}} +\underset{(2)}{\underbrace{\langle \Phi^{(N-1)},\mathcal U_2^{Wick} \,
\Psi^{(N)}\rangle}}.
\end{equation*}
Now estimate each term. Let denote $\overline{\chi}_m=1-\chi_m$ then for $\lambda>0$  and $\varepsilon=\frac{1}{N}$,
\begin{eqnarray*}
\bigg|(1)\bigg|&=&\bigg| \langle \Phi^{(N-1)}, \varepsilon^{3/2} \sqrt{N (N-1)^2}
\,\mathcal S_{N-1} \,\overline{\chi}_m(H^0_1)\tilde B_1 \otimes 1^{(N-2)} \,
\Psi^{(N)}\rangle\bigg| \\
 &\leq&
 \bigg| \langle \overline{\chi}_m(H^0_1)\otimes 1^{(N-2)}\Phi^{(N-1)},
\,\tilde B_1 \otimes 1^{(N-2)} \,
\Psi^{(N)}\rangle\bigg|\\
&\leq& \alpha(\lambda)
 \bigg\|(H_1^0+\lambda)^{1/2} \overline{\chi}_m(H^0_1)\otimes 1^{(N-2)}\Phi^{(N-1)}\bigg\|
 \;\;  \bigg\| (H_2^0+1)^{1/2} \otimes  1^{(N-2)}\Psi^{(N)}\bigg\|\,,
\end{eqnarray*}
where
$$
\alpha(\lambda)=\bigg\| (H_1^0+\lambda)^{-1/2} \tilde B_1 (H_2^0+1)^{-1/2} \bigg\|_{\mathcal{L}(\bigvee^2\mathcal Z_0,\mathcal Z_0)}\to 0, \quad \text{ when } \lambda\to\infty\,.
$$
Remark that the spectral theorem yields,
\begin{equation*}
\forall m \in \N^{*},\;\|\overline{\chi}_{m}(A)\;(A+1)^{-\frac{1}{2}}\|_{\mathcal L(\mathcal Z_{0})}^{2}\leq \frac{1}{m}.
\end{equation*}
So using the assumption \eqref{assN}, the symmetry of $\Phi^{(N-1)}$ and Proposition
\ref{pr.invsew}, one obtains
$$
\bigg\|(H_1^0+\lambda)^{1/2} \overline{\chi}_m(H^0_1)\otimes 1^{(N-2)}\Phi^{(N-1)}\bigg\|
\leq C_1 \sqrt{1+\frac{\lambda}{m}}\,,
$$
form some $C_1>0$ independent of $N$. Hence $|(1)|\lesssim \alpha(\lambda) \sqrt{1+\frac{\lambda}{m}}$ and if we choose $\lambda=m$ we see that $|(1)|\to 0$ when
$m\to\infty$. \\
Similar  computation yields for $\lambda$ large enough
\begin{eqnarray*}
\bigg|(2)\bigg|  &\leq& \beta(\lambda)
 \bigg\|(H_1^0+\lambda)^{1/2}\otimes 1^{(N-2)}\Phi^{(N-1)}\bigg\|
 \;\;  \bigg\|  \overline{\chi}_m(H^0_2)(H_2^0+\lambda)^{1/2} \otimes  1^{(N-2)}\Psi^{(N)}\bigg\|\,,
\end{eqnarray*}
where
$$
\beta(\lambda)=\bigg\| (H_1^0+1)^{-1/2} \tilde B_1 (H_2^0+\lambda)^{-1/2} \bigg\|_{\mathcal{L}(\vee^2\mathcal Z_0,\mathcal Z_0)}\to 0, \quad \text{ when } \lambda\to\infty\,.
$$
So by the same argument above we conclude that $|(2)|\lesssim \beta(\lambda) \sqrt{1+\frac{\lambda}{m}}$ and if we choose again  $\lambda=m$ we get $|(2)|\to 0$ when
$m\to\infty$. \\
This proves the claimed limit \eqref{eq.compactness} for any $\xi \in D\subset \mathcal Z_0$.
So we extend this result to any $\xi\in Q(A)$ by an approximation argument.
In fact take for any $\xi\in Q(A)$ a sequence $(\xi_m)_{m\in\N}$ such that $ \xi_m\to \xi$ in $Q(A)$. Write
\begin{eqnarray*}
\left|\langle\Psi^{(N)},\mathcal W(\sqrt{2}\pi\xi)\,[q_{1}(\xi,0)]^{Wick}\,\Psi^{(N)}\rangle-\int_{\mathcal
  Z_{0}}e^{2i\pi \Re
  \langle \xi,z\rangle}q_{1}(\xi,0)[z]\;d\mu(z)\right|&\leq&
  {\mathcal{A}^{(m)}+\mathcal{B}^{(m)}+\mathcal{C}^{(m)}},
\end{eqnarray*}
with
$$
\mathcal{A}^{(m)}=\left|\langle \Psi^{(N)},\bigg(\mathcal W(\sqrt{2}\pi\xi)-\mathcal W(\sqrt{2}\pi\xi_m)
\bigg) q_{1}(\xi,0)^{Wick}\Psi^{(N)}\rangle\right|,
$$
$$
\mathcal{B}^{(m)}=\left|\langle\Psi^{(N)},\mathcal W(\sqrt{2}\pi\xi_m) \,q_{1}(\xi,0)^{Wick}\Psi^{(N)}\rangle-\mu(e^{2i\pi \Re \langle\xi_m,z\rangle} q_{1}(\xi,0)[z])\right|,
$$
and
$$
\mathcal{C}^{(m)}=\left|\mu(e^{2i\pi\Re \langle \xi_m,z \rangle} q_{1}(\xi,0)[z])-\mu(e^{2i\pi\Re\langle\xi,z\rangle} q_{1}(\xi,0)[z])\right|.
$$
So using Number-Weyl  estimates in \cite[Lemma 3.1]{AmNi1}, one shows that $\mathcal{A}^{(m)}
\lesssim ||\xi-\xi_m||_{\mathcal Z_0}$  and hence $\mathcal{A}^{(m)}\to 0$. Now,
 $\mathcal{B}^{(m)}\to 0$ by the result proved above and $\mathcal{C}^{(m)}\to 0$  by
\eqref{bound1}-\eqref{bound2} and the dominated convergence theorem.
\end{proof}

\subsection{Existence of Wigner measures for all times}
Wigner measures and their properties  were studied in infinite dimensional spaces in \cite{AmNi1}. A result proved in \cite[Theorem 6.2]{AmNi1} says that for any sequence of normal states $\{\widetilde{\varrho}_N(t)\}_{N\in\N}$ as in \eqref{varrhotilde} we can extract a subsequence $(N_k)_{k\in\N}$  such that $\widetilde{\varrho}_{N_{k}}(t)$
has a unique Wigner measure $\tilde\mu_t$ according to Definition \ref{de.wigmeas}. However, the subsequence may depend in the time $t\in\R$. So, in order to carry on the limit on the integral equation \eqref{eq.derivation} we need to extract a subsequence $(N_k)_{k\in\N}$  for all
$t\in\R$ that gives $\mathcal{M}(\widetilde{\varrho}_{N_{k}}(t), k\in \N)=\{\tilde\mu_t\}$.

\begin{prop}
\label{pr.integral}
Let $\{\varrho_{N}=|\Psi^{(N)}\rangle \langle \Psi^{(N)}|\}_{N \in \N}$ be a sequence of normal states on $\vee^{N}\mathcal Z_{0}$ such that
\begin{equation*}
\exists  C>0, \forall N\in\N, \;\langle \Psi^{(N)},H_{N}^{0}\Psi^{(N)}\rangle \leq{CN}\,,
\end{equation*}
and $\mathcal M(\varrho_{N}, N \in \N)=\{ \mu_{0} \}$.
Then for any $\xi \in Q(A)$ and for any subsequence $(N_{k})_{k \in \N}$ there exist a family of probability measures $(\mu_{t})_{t \in \R}$ on $\mathcal{Z}_{0}$ and a subsequence $(N_{k_{l}})_{l \in \N}$ such that for all $t \in \R$,
$$\mathcal{M}\bigg(\big|e^{-itH_{N_{k_{l}}}^{0}}e^{itH_{N_{k_{l}}}}\Psi^{(N_{k_{l}})}\big\rangle \big\langle e^{-itH_{N_{k_{l}}}^{0}}e^{itH_{N_{k_{l}}}}\Psi^{(N_{k_{l}})}\big|,l \in \N\bigg)=\{ \tilde{\mu}_{t}\},$$
and the following  Liouville equation is satisfied for any $\xi \in Q(A)$,
\begin{equation}
\label{eq.liouville}
\begin{aligned}
\tilde{\mu}_{t}(e^{2i\pi \Re \langle \xi,z \rangle})=&\tilde{\mu}_{0}(e^{2i\pi \Re \langle \xi,z \rangle})+i\int_{0}^{t}\tilde{\mu}_{s}(e^{2i\pi\Re \langle\xi,z\rangle}q_{1}(\xi,s)[z])\,ds\\
\displaystyle
=&\tilde{\mu}_{0}(e^{2i\pi \Re  \langle \xi,z
  \rangle})+i\int_{0}^{t}\tilde{\mu}_{s}\big(\big\{q_s(z);e^{2i\pi \Re \langle \xi,z \rangle}\big\}\big)ds,
\end{aligned}
\end{equation}
with $z_s=e^{-isA} z$, $\xi_s=e^{-isA} \xi$, $q_1(\xi,s)=-\pi\, \Im \,q(z_s^{\otimes 2},\mathcal{S}_2\,\xi_s \otimes z_s)$, $q_s(z)=\frac{1}{2}q(z_{s}^{\otimes
  2},z_{s}^{\otimes 2})$  and the bracket  $\{b_{1};b_{2}\}(z)$ equals to $\partial_{z}b_{1}(z)\cdot\partial_{\bar
z}b_{2}(z)-\partial_{z}b_{2}(z)\cdot\partial_{\bar z}b_{1}(z)$.
\end{prop}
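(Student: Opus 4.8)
The plan is to combine the Duhamel identity \eqref{eq.derivation} of Proposition~\ref{pr.ipp}, the uniform propagation bound of Proposition~\ref{pr.invhn}, an Arzel\`a--Ascoli extraction that produces a subsequence working simultaneously for all times, and the convergence statement of Proposition~\ref{pr.compactness} to pass to the limit inside the time integral.

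\textbf{Step 1 (uniform bounds and equicontinuity).} Fix $\xi\in D(A)$ and a compact interval $J\subset\R$. First I would show that $\{\mathcal J_N(\cdot)\}_N$ is bounded and equi-Lipschitz on $J$. Indeed, Proposition~\ref{pr.invhn} gives $\langle\widetilde\Psi_s^{(N)},H_N^0\widetilde\Psi_s^{(N)}\rangle\le C_{a,b}N$ for all $s\in\R$; since $e^{-isA}$ is unitary on $Q(A)$, the monomials $q_j(\xi,s)$ lie in finitely many symbol classes $\mathcal Q_{p,q}(A)$ (Lemma~\ref{lm.symb}) with symbol norms uniform in $s$, so the number--Weyl estimates of Appendix~\ref{se.app} bound $\big|\langle\widetilde\Psi_s^{(N)},\mathcal W(\sqrt2\pi\xi)\,q_j(\xi,s)^{Wick}\widetilde\Psi_s^{(N)}\rangle\big|$ by a constant depending only on $\xi,J$ and the above energy constant, uniformly in $N$ and $s\in J$; the terms $j\ge 2$ moreover carry the prefactor $\varepsilon^{j-1}=N^{-(j-1)}$. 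Inserting this in \eqref{eq.derivation} yields $|\mathcal J_N(t)-\mathcal J_N(t')|\le C_J\,|t-t'|$ uniformly in $N$.

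\textbf{Step 2 (diagonal extraction and identification of the limits).} Fix a countable set $\mathcal D\subset D(A)$ dense in $Q(A)$ (hence dense in $\mathcal Z_0$). By Step~1 and Arzel\`a--Ascoli, from any given subsequence one can extract, for each $\xi\in\mathcal D$, a further subsequence along which $\mathcal J_N^\xi(\cdot)$ converges locally uniformly; a diagonal procedure over $\mathcal D$ and over the exhaustion $\R=\bigcup_n[-n,n]$ produces one subsequence $(N_{k_l})_l$ with $\mathcal J^\xi_{N_{k_l}}(t)\to G_\xi(t)$ locally uniformly, for every $\xi\in\mathcal D$. For each fixed $t$, the uniform bound $\langle\widetilde\Psi_t^{(N)},H_N^0\widetilde\Psi_t^{(N)}\rangle\le C_{a,b}N$ and the existence theorem for Wigner measures (\cite[Theorem~6.2]{AmNi1}) give a Wigner measure of $\{\widetilde\varrho_{N_{k_l}}(t)\}_l$ along some sub-subsequence; since all such measures share the characteristic function $G_\xi(t)$ on $\mathcal D$, hence on $\mathcal Z_0$ by density, the Wigner measure is unique, so that $\mathcal M(\widetilde\varrho_{N_{k_l}}(t),l\in\N)=\{\widetilde\mu_t\}$ with $G_\xi(t)=\widetilde\mu_t(e^{2i\pi\Re\langle\xi,z\rangle})$; the a priori estimate $\int\|z\|_{Q(A)}^2\|z\|_{\mathcal Z_0}\,d\widetilde\mu_t(z)\le C$ holds by Proposition~\ref{pr.transportapriori}. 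At $t=0$, $\widetilde\varrho_N(0)=\varrho_N$, so $\widetilde\mu_0=\mu_0$.

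\textbf{Step 3 (passing to the limit in Duhamel).} Fix $\xi\in\mathcal D$ and let $l\to\infty$ in \eqref{eq.derivation} written along $N_{k_l}$. The left-hand side and $\mathcal J_{N_{k_l}}(0)$ tend to $\widetilde\mu_t(e^{2i\pi\Re\langle\xi,z\rangle})$ and $\widetilde\mu_0(e^{2i\pi\Re\langle\xi,z\rangle})$; the integrands with $j\ge 2$ vanish because of the prefactor $\varepsilon^{j-1}\to0$ and the uniform-in-$s$ bound of Step~1, so dominated convergence kills those integrals. For $j=1$, since $\{\widetilde\varrho_{N_{k_l}}(s)\}_l$ has unique Wigner measure $\widetilde\mu_s$ and satisfies the energy bound, Proposition~\ref{pr.compactness} applies with $\Psi^{(N)}\leftrightarrow\widetilde\Psi_s^{(N_{k_l})}$, $\mu\leftrightarrow\widetilde\mu_s$ and gives, for each fixed $s$,
\[
\langle\widetilde\Psi_s^{(N_{k_l})},\mathcal W(\sqrt2\pi\xi)\,q_1(\xi,s)^{Wick}\widetilde\Psi_s^{(N_{k_l})}\rangle\ \xrightarrow[l\to\infty]{}\ \widetilde\mu_s\big(e^{2i\pi\Re\langle\xi,z\rangle}q_1(\xi,s)[z]\big),
\]
and the uniform-in-$s$ bound of Step~1 again allows dominated convergence in $s\in[0,t]$. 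This yields the first form of \eqref{eq.liouville} for $\xi\in\mathcal D$; the second form follows from the pointwise identity $e^{2i\pi\Re\langle\xi,z\rangle}q_1(\xi,s)[z]=\{q_s(z);e^{2i\pi\Re\langle\xi,z\rangle}\}$, a direct Poisson-bracket computation. Finally both sides of \eqref{eq.liouville} are continuous in $\xi$ for the $Q(A)$-norm (the left side by dominated convergence, the integral term by the linear dependence of $q_1(\xi,s)$ on $\xi$, the $\mathcal Q_{p,q}(A)$-estimates and the bound $\int\|z\|_{Q(A)}^2\|z\|_{\mathcal Z_0}\,d\widetilde\mu_s(z)\le C$), so \eqref{eq.liouville} extends from $\mathcal D$ to all $\xi\in Q(A)$.

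I expect the main obstacle to be Step~1 together with the bookkeeping in Step~3: controlling all four Wick monomials $q_j(\xi,s)^{Wick}$ uniformly in $N$ and in $s$ on compact time-intervals, which is precisely where the propagation estimate of Proposition~\ref{pr.invhn} and the $\mathcal Q_{p,q}(A)$-calculus of Appendix~\ref{se.app} are indispensable. A secondary, more cosmetic point is the apparent circularity in applying Proposition~\ref{pr.compactness} to the shifted states $\widetilde\varrho_{N_{k_l}}(s)$, which is harmless because uniqueness of their Wigner measure is already secured in Step~2 before Step~3 is carried out.
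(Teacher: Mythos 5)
Your proposal is correct and follows essentially the same route as the paper's proof: the Duhamel identity of Proposition \ref{pr.ipp}, the uniform propagation bound of Proposition \ref{pr.invhn} giving equi-Lipschitz continuity of $\mathcal J_N$ in time, an Ascoli-type extraction producing one subsequence valid for all times, and Proposition \ref{pr.compactness} to pass to the limit in the $j=1$ term while the $j\geq 2$ terms vanish through the prefactor $\varepsilon^{j-1}$. The only cosmetic difference is that you organize the extraction as a diagonal argument over a countable dense set of $\xi$'s and identify the limiting measures through the norm-continuity of their characteristic functions, whereas the paper additionally records the Lipschitz dependence of $G_N(t,\xi)$ on $\xi$ (via the number--Weyl estimate) and invokes the Ascoli-type statement of \cite[Proposition 3.3]{AmNi3}; the two implementations are equivalent.
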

\begin{proof}
The extraction of such subsequence $(N_{k_{l}})_{l\in\N}$ for all times follows by an Ascoli type  argument proved in \cite[Proposition 3.3]{AmNi3}. Here we briefly check the main points.  Wigner measures are identified through \eqref{eq.wignercv}. Hence we consider the quantities:
$$
G_{N}(t,\xi)=\langle \widetilde{\Psi}_{t}^{(N)},\mathcal W(\sqrt{2}\pi \xi)\,\widetilde{\Psi}_{t}^{(N)}\rangle.
$$
We wish to prove the existence of a subsequence $(N_{k_{l}})_{l\in\N}$
such that  $G_{N_{k_{l}}}(t,\xi)$ converges
for all $t\in\R$ and $\xi\in \mathcal Z_0$. For this, we exploit the regularity of the functions
$G_{N}(t,\xi)$ with respect to $t$ and $\xi$. In some sense we have to prove that the family
$(G_N)_{N\in\N}$ is equi-continuous on bounded sets of $\R\times\mathcal Z_0$.
By using Lemma 3.1 in \cite{AmNi1} we get for $\xi,\eta \in Q(A)$,
\begin{equation*}
\|\big[\mathcal W(\sqrt{2}\pi \xi)-\mathcal W(\sqrt{2}\pi
\eta)\big](\mathbf{N}+1)^{-\frac{1}{2}}\|_{\mathcal{L}(\Gamma_{s}(\mathcal
  Z_{0}))}\lesssim\|\xi-\eta\|_{\mathcal Z_{0}}\;\sqrt{\|\xi\|^{2}_{\mathcal Z_{0}}+\|\eta\|^{2}_{\mathcal Z_{0}}+1}\,.
\end{equation*}
Therefore, the following estimate holds
\begin{equation}
\label{eq.gn1}
|G_{N}(t,\xi)-G_{N}(t,\eta)|\lesssim\|\xi-\eta\|_{\mathcal Z_{0}}\;\sqrt{\|\xi\|^{2}_{\mathcal Z_{0}}+\|\eta\|^{2}_{\mathcal Z_{0}}+1}\,.
\end{equation}
On the other hand by using Proposition \ref{pr.ipp}, Proposition \ref{c-prop-wickA} (iii) and
Proposition \ref{pr.invsew}, we get for any $s,t \in \R,\xi \in Q(A)$ and $\varepsilon=\frac{1}{N}$,
\begin{align*}
|G_{N}(s,\xi)-G_{N}(t,\xi)|\leq &\bigg|\int_{s}^{t}\langle \widetilde{\Psi}_{r}^{(N)},\mathcal W(\sqrt{2}\pi \xi)\sum_{j=1}^{4} \varepsilon^{j-1} \,q_{j}(\xi,r)^{Wick} \;\widetilde{\Psi}^{(N)}_{r}\rangle \, dr\bigg|&\\
\lesssim & (1+\|\xi\|_{Q(A)}^{4}) \; |s-t|\; \sup_{s\leq r\leq t} \|(A_1+1)^{\frac{1}{2}}\widetilde{\Psi}_{r}^{(N)}
\|^{2}_{\bigvee^{N}\mathcal Z_{0}} \;\lesssim \;(1+\|\xi\|_{Q(A)}^{4})\, |s-t|\,.&\\
\end{align*}
Hence combining \eqref{eq.gn1} with the latter inequality one gets for any  $\eta,\xi \in Q(A)$ and $s,t \in \R$,
\begin{align*}
|G_{N}(t,\xi)-G_{N}(s,\eta)|\lesssim |s-t|(1+\|\xi\|^{4}_{Q(A)})+\|\xi-\eta\|_{\mathcal Z_{0}}\;\sqrt{\|\xi\|^{2}_{\mathcal Z_{0}}+\|\eta\|^{2}_{\mathcal Z_{0}}+1}\,.
\end{align*}
Furthermore the uniform estimate $|G_{N}(t,\xi)|\leq 1$ holds true. By an Ascoli type argument as in \cite[Proposition 3.3]{AmNi3} and \cite[Proposition 3.9]{AmNi4}, we see that  for any  sequence $(N_{k})_{k \in \N}$, there exists a subsequence $(N_{k_{l}})_{l \in \N}$ and a family of Borel probability measures $(\tilde{\mu}_{t})_{t \in \R}$ on $\mathcal Z_{0}$ satisfying for any $t \in \R$,
\begin{equation*}
\mathcal{M}\left(\big|\widetilde{\Psi}_{t}^{(N_{k_{l}})}\big\rangle \big\langle \widetilde{\Psi}_{t}^{(N_{k_{l}})}\big|,\,l \in \N\right)=\{ \tilde{\mu}_{t}\}\,.
\end{equation*}
Now to prove the integral equation \eqref{eq.liouville}, we use Proposition \ref{pr.ipp} with $\varepsilon=\frac{1}{N_{k_{l}}}$,
\begin{equation}
\mathcal J_{N_{k_{l}}}(t)=\mathcal J_{N_{k_{l}}}(0)+i\int_{0}^{t}\langle \widetilde{\Psi}_{s}^{(N_{k_{l}})},\mathcal W(\sqrt{2}\pi\xi)\big[\sum_{j=1}^{4}(\varepsilon^{j-1}
q_{j}(\xi,s)^{Wick}\big]\,\widetilde{\Psi}_{s}^{(N_{k_{l}})}\rangle \; ds,
\end{equation}
with the monomials $(q_{j}(\xi,s))_{j=1,2,3,4}$ given by \eqref{mon-qjs}.
The  estimates provided by Proposition \ref{c-prop-wickA} (iii) and Proposition \ref{pr.invsew} give the convergence towards $0$ of the terms involving $q_{j}(\xi,s)^{Wick}$, $j=2,3,4$ when $l\to\infty$. Applying the Proposition \ref{pr.compactness} to the subsequence $\big|\widetilde{\Psi}_{s}^{(N_{k_{l}})}\big\rangle \big\langle \widetilde{\Psi}_{s}^{(N_{k_{l}})}\big|$, we obtain the claimed equation \eqref{eq.liouville}. Remark that in order to check the hypothesis \eqref{assN} of  Proposition \ref{pr.compactness} we have used Proposition \ref{pr.invhn}.
\end{proof}

\section{The Liouville equation}
\label{se.liouville}
Once Proposition \ref{pr.compactness} is proved we are led to the problem of solving
a Liouville (or continuity) equation in infinite dimension which already admits measure-valued solutions. So the point is to prove uniqueness. The method we use for uniqueness here is introduced in \cite{AmNi4}, and uses some techniques from optimal transport theory initiated in the book \cite{AGS}. Here we rely in the recent improvement in \cite{W1}, briefly recalled in Appendix B.

\subsection{Properties of measure-valued solutions to Liouville equation}
 We need some preliminaries. The sets of all Borel probability measures on $Q'(A)$ and $Q(A)$ are denoted by  $\mathfrak{P}(Q'(A))$ and $\mathfrak{P}(Q(A))$ respectively.   We introduce some classes of cylindrical functions on $Q'(A)$. Denote $\P$ the space of finite rank orthogonal  projections on $Q'(A)$.
 We say that a function $f$ is in the cylindrical Schwartz space $\mathcal S_{cyl}(Q'(A))$ (resp.
$\mathcal C_{0,cyl}^{\infty}(Q'(A))$) if:
 \begin{equation*}
\exists \mathfrak{p} \in \P, \; \exists g \in  \mathcal S(\mathfrak{p} Q'(A))\;\;\big(\textit{resp. $\mathcal C_{0,cyl}^{\infty}(\mathfrak{p} Q'(A))$}\big),\; \forall z \in Q'(A), f(z)=g(\mathfrak{p} z).
\end{equation*}
The space $\mathcal C_{0,cyl}^{\infty}(\R \times Q'(A))$ of smooth cylindrical
functions with compact support on $\R \times Q'(A)$ will be useful too and it is defined in the same way.
Denote $L_{\mathfrak p}(dz)$ the Lebesgue measure on the finite
dimensional subspace $\mathfrak p Q'(A)$. The Fourier transform of functions in $\mathcal S_{cyl}(Q'(A))$ are given by
\begin{equation*}
\mathcal F[f](\xi)=\int_{\mathfrak p Q'(A)} f(z) e^{-2i\pi \Re
  \langle z,\xi \rangle_{Q'(A)}}L_{\mathfrak p}(dz),
\end{equation*}
After fixing a Hilbert basis $(e_{n})_{n \in \N^{*}}$, the space
$Q'(A)$ as a real Hilbert space, can be equipped with a  useful norm,
$$
 \|z\|_{Q'(A),w}=\sqrt{\sum_{n \in \N^{*}}\frac{|\Re \langle z,e_{n}\rangle_{Q'(A)}|^{2}}{n^{2}}}.
$$
The norms $\|.\|_{Q'(A)}$ and $\|.\|_{Q'(A),w}$ lead to  two distinct notions
of narrow convergence  of probability measures. On the one hand, a sequence $(\mu_{n})_{n \in \N}$ is narrowly convergent to $\mu \in \mathfrak{P}(Q'(A))$ if
\begin{equation}
\label{narrow1}
\lim_{n \to
  +\infty}\int_{Q'(A)}f(z)d\mu_{n}(z)=\int_{Q'(A)}f(z)d\mu(z),
\end{equation}
for every function $f \in
\mathcal{C}^{0}_{b}(Q'(A),\|.\|_{Q'(A)})$, the space of continuous and bounded
real functions defined on $(Q'(A),\,\|\cdot\|_{Q'(A)})$. On the other hand, a sequence $(\mu_{n})_{n \in \N}$ is
weakly narrowly convergent if  the limit \eqref{narrow1} holds for
all $f \in \mathcal{C}^{0}_{b}(Q'(A),\|.\|_{Q'(A),w})$. The family of probability measures $(\tilde\mu_t)_{t\in\R}$ provided by Proposition \ref{pr.integral}  have uniformly bounded moments $\int_{Q'(A)}\|z\|_{\mathcal Z_{0}}^{2k}d\tilde\mu_{t}(z)\leq 1$ for all $k\in\N$ thanks to Proposition \ref{pr.transportapriori}. We refer to \cite[Chapter V]{AGS} or \cite{W1} for a more complete presentation of those notions.
\begin{prop}
\label{pr.liouville}
Let  $\{|\Psi^{(N)}\rangle \langle \Psi^{(N)}|\}_{N \in \N}$ a sequence of normal states in $\bigvee^{N}\mathcal Z_0$ satisfying the uniform estimate:
$$
\exists C>0\,,\forall N\in\N, \;  \langle\Psi^{(N)},H_{N}^{0}\Psi^{(N)}\rangle \leq CN\,.
$$
Consider an extracted subsequence $(N_{k})_{k \in \N}$ according to Proposition
\ref{pr.integral} such that  for any $t \in \R$,
$$
\mathcal{M}(|\widetilde{\Psi}_{t}^{(N_{k})}\rangle \langle \widetilde{\Psi}_{t}^{(N_{k})}|,k \in \N)=\{ \tilde{\mu}_{t} \}\,,
$$
where $\widetilde{\Psi}_{t}^{(N_{k})}$ is given by \eqref{varrhotilde}. Then the Borel probability measures $\tilde{\mu}_{t}$ on $\mathcal Z_0$ satisfy:
\begin{itemize}
\item [(i)] $\tilde{\mu}_{t}$ are Borel probability measures on $Q(A)$ carried on $Q(A)$, i.e. $\tilde{\mu}_{t}(Q(A))=1$.
\item [(ii)] The map $t \mapsto \tilde{\mu}_{t}$ is weakly narrowly continuous in $\mathfrak{P}(Q'(A)).$
\item [(iii)] The measure $\tilde{\mu}_{t}$ is a weak solution of the Liouville
equation
$$
\partial_{t}\tilde{\mu}_{t}+i\{ q_t(z); \tilde{\mu}_{t} \}=0,
$$
\end{itemize}
i.e: for all $f \in \mathcal C^{\infty}_{0,cyl}(\R \times Q'(A))$
\begin{equation}
\label{liouv1}
\int_{\R}\int_{Q(A)}\partial_{t}f(t,z)+i\{q_t(.),f(t,.)\}(z)\;d\tilde{\mu}_{t}(z)\;dt=0\,,
\end{equation}
where  $z_{t}=e^{-itA}z$ and  $q_t(z)=\frac{1}{2}q(z_{t}^{\otimes
  2},z_{t}^{\otimes 2})$.
\end{prop}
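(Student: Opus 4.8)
The plan is to establish the three assertions in turn, essentially all of the work being in (iii).

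\textbf{Part (i).} I would start from Proposition \ref{pr.invhn}: since $e^{itH_N^0}$ preserves $Q(H_N^0)$, the interaction-picture vectors $\widetilde\Psi_t^{(N)}$ carry the same free energy as $\Psi_t^{(N)}$, so $\langle\widetilde\Psi_t^{(N)},H_N^0\widetilde\Psi_t^{(N)}\rangle\le C_{a,b}N$ uniformly in $N$ and $t$. Feeding this uniform bound into the a priori transport estimate of Proposition \ref{pr.transportapriori} gives $\int_{\mathcal Z_0}\|z\|_{Q(A)}^2\,d\tilde\mu_t(z)\le C_{a,b}$ for every $t$, hence $\tilde\mu_t(Q(A))=\tilde\mu_t(\{z:\|z\|_{Q(A)}<\infty\})=1$; as $Q(A)$ is a Borel subset of $\mathcal Z_0$ whose trace Borel structure agrees with its own, $\tilde\mu_t$ is a Borel probability measure carried on $Q(A)$. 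The same estimate also supplies the bounds $\int\|z\|_{\mathcal Z_0}^{2k}\,d\tilde\mu_t\le 1$ and $\int\|z\|_{Q(A)}^2\|z\|_{\mathcal Z_0}\,d\tilde\mu_t\le C$ that I will use below.

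\textbf{Part (ii).} Fix $\xi\in Q(A)$. The integral identity \eqref{eq.liouville} already shows that $t\mapsto\tilde\mu_t(e^{2i\pi\Re\langle\xi,z\rangle})$ is continuous, indeed Lipschitz on compact time intervals, since by (i) the integrand is bounded by $C(1+\|\xi\|_{Q(A)}^4)$. Thus the characteristic functions of $(\tilde\mu_t)$ converge pointwise on the dense subspace $Q(A)\subset\mathcal Z_0$ as $s\to t$, while the uniform moment bound makes $\{\tilde\mu_t\}$ tight for the weak topology of $Q'(A)$ (metrized by $\|\cdot\|_{Q'(A),w}$). By the standard characterisation of narrow convergence through characteristic functions (see \cite[Chapter V]{AGS}, \cite{W1}) this forces $\tilde\mu_s\to\tilde\mu_t$ weakly narrowly in $\mathfrak P(Q'(A))$; continuity of the characteristic function in $\xi$ on all of $Q'(A)$ follows from the same Number--Weyl estimate used in Proposition \ref{pr.integral} to pass from $D$ to $Q(A)$.

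\textbf{Part (iii).} It suffices to verify \eqref{liouv1} for $f(t,z)=\phi(t)\,g(\mathfrak p z)$ with $\phi\in\mathcal C_0^\infty(\R)$, $\mathfrak p\in\P$ a finite-rank projection whose range is spanned by vectors of a dense set $D\subset Q(A)$, and $g\in\mathcal C_0^\infty(\mathfrak p Q'(A))$; the general case follows by linearity and density. I would write $g$ through its finite-dimensional Fourier transform, $g(\mathfrak p z)=\int\mathcal F[g](\xi)\,e^{2i\pi\Re\langle z,\xi\rangle_{Q'(A)}}\,L_{\mathfrak p}(d\xi)$, the frequencies $\xi$ lying in $Q(A)$ by the choice of $\mathfrak p$. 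Since $\{q_s(\cdot),\cdot\}$ is a first-order differential operator in $z$ and $\mathcal F[g]\in\mathcal S$, differentiating under the integral sign gives, pointwise for $z\in Q(A)$, $\{q_s(\cdot),g(\mathfrak p\,\cdot)\}(z)=\int\mathcal F[g](\xi)\,e^{2i\pi\Re\langle\xi,z\rangle}\,q_1(\xi,s)[z]\,L_{\mathfrak p}(d\xi)$, using the identity $\{q_s(z);e^{2i\pi\Re\langle\xi,z\rangle}\}=e^{2i\pi\Re\langle\xi,z\rangle}\,q_1(\xi,s)[z]$ from Proposition \ref{pr.integral}. Plugging $f$ into \eqref{liouv1}, integrating $\tilde\mu_t$ against these representations, and exchanging the finitely many $L_{\mathfrak p}$-integrations with the $t$-integral and with $\tilde\mu_t$ --- licensed by the moment bounds of (i) together with the Schwartz decay of $\mathcal F[g]$ and the at most cubic growth in $z$ of $q_1(\xi,s)[z]$ --- reduces everything to the scalar identity obtained by inserting \eqref{eq.liouville} into $\int_\R\phi'(t)\,\tilde\mu_t(e^{2i\pi\Re\langle\xi,z\rangle})\,dt$ and integrating by parts in $t$: the boundary terms vanish because $\phi$ has compact support and $\int_\R\phi'=0$, and what remains cancels exactly the bracket term, yielding \eqref{liouv1}.

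\textbf{Main obstacle.} The hard part is (iii): organising the reduction of an arbitrary cylindrical test function to the pure exponentials $e^{2i\pi\Re\langle\xi,z\rangle}$ for which \eqref{eq.liouville} is available, keeping all frequencies $\xi$ inside $Q(A)$ (so that $q_1(\xi,s)[\cdot]$ makes sense and Propositions \ref{pr.compactness}--\ref{pr.integral} apply), and carefully controlling the polynomial-in-$z$ growth of the symbols so that differentiation under the integral sign and Fubini are legitimate against $\tilde\mu_t$. All of this rests on the a priori moment estimates $\int\|z\|_{Q(A)}^2\|z\|_{\mathcal Z_0}\,d\tilde\mu_t\le C$ and $\int\|z\|_{\mathcal Z_0}^{2k}\,d\tilde\mu_t\le 1$ extracted in (i); parts (i) and (ii) are by comparison routine consequences of the uniform free-energy bound and of the integral form \eqref{eq.liouville} of the Liouville equation.
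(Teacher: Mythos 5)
Your proposal is correct and follows essentially the same route as the paper: (i) via the uniform free-energy bound of Proposition \ref{pr.invhn} fed into Proposition \ref{pr.transportapriori}, (ii) via continuity of the characteristic functions in $t$ and $\xi$ combined with the uniform moment bound (the paper invokes \cite[Lemma 5.12-f)]{AGS} at this point), and (iii) by Fourier-representing cylindrical test functions, inserting the integral identity \eqref{eq.liouville}, integrating by parts in $t$, and using the density of $\mathcal C^{\infty}_{0}(\R)\otimes^{alg}\mathcal C^{\infty}_{0,cyl}(Q'(A))$. The only superfluous step is your restriction to projections with range in $Q(A)$ followed by a density argument: since $e^{2i\pi\Re\langle \xi,z\rangle_{Q'(A)}}=e^{2i\pi\Re\langle (A+1)^{-1}\xi,z\rangle_{\mathcal Z_0}}$ with $(A+1)^{-1}\xi\in Q(A)$ for every $\xi\in Q'(A)$, equation \eqref{eq.liouville} already covers all frequencies and no approximation of $\mathfrak p$ is needed.
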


\begin{proof}
The statement (i) is proved in \cite[Proposition 3.11]{AmNi4} when $A=-\Delta$ but the proof works without any change for a general operator $A$ satisfying \eqref{A1}. The proof of  the statements (ii)-(iii) are also essentially the same as in \cite[Proposition 3.14]{AmNi4}. We briefly sketch here the main arguments.\\
(ii) {\it{Weakly narrowly continuity:}} \\
The characteristic function of $\tilde\mu_t$ as a probability measure on $Q'(A)$ is given by
$$
G(t,\xi)=\tilde\mu_{t}(e^{2i\pi \Re \langle \xi,z\rangle_{Q'(A)}}):=\tilde\mu_{t}(e^{2i\pi \Re \langle \xi,(A+1)^{-1}z\rangle_{\mathcal Z_{0}}}).
 $$
The following inequality holds as in \cite[Proposition 3.11]{AmNi4} for any $\xi,\xi' \in Q'(A)$,
\begin{equation}
|G(t,\xi)-G(t,\xi')|\leq \pi \|\xi-\xi'\|_{Q'(A)}\int_{\mathcal Z_{0}} \|z\|^{2}_{Q(A)}\,d\tilde{\mu}_{t}(z).
\end{equation}
Since by Lemma \ref{pr.invhn} there exists a time independent  constant  $C'>0$ such that
$\langle \Psi_{t}^{(N)},H_{N}^{0}\Psi_{t}^{(N)}\rangle\leq C'N$, one obtains using
Proposition \ref{pr.transportapriori} the uniform estimate,
\begin{equation}
\label{tight}
\int_{\mathcal
  Z_{0}}\|z\|^{2}_{Q(A)}\,d\tilde{\mu}_{t}(z)\leq  C'\,.
\end{equation}
Subsequently for any $\xi,\xi' \in Q'(A)$,
\begin{equation}
\label{eq.glipschitz}
|G(t,\xi)-G(t,\xi')|\lesssim \|\xi-\xi'\|_{Q'(A)}.
\end{equation}
On the other hand for any $\xi \in Q'(A)$ and $t,t' \in \R$, the following estimate holds true
\begin{equation}
\label{eq.char}
|G(t',\xi)-G(t,\xi)|\leq \big|\int_{t'}^{t}\tilde{\mu}_{s}(e^{2i\pi\Re \langle\xi,(A+1)^{-1} \,z\rangle}q_{1}(\xi,s)[z])\,ds\big| \leq
(C'+1)\;|t-t'|\;\|\xi\|_{Q'(A)},
\end{equation}
owing to \eqref{C1} and Proposition \ref{pr.transportapriori}.
Now let $g\in \mathcal S_{cyl}(Q'(A))$ based on $\mathfrak p Q'(A)$ and
\begin{equation*}
I_{g}(t):=\int_{\mathfrak p Q'(A)} g(z)
\;d\tilde{\mu}_{t}(z)=\int_{\mathfrak p Q'(A)}\mathcal
F[g](\xi)\;G(\xi,t)\, L_{\mathfrak p}(d\xi).
\end{equation*}
Then we easily check:
\begin{itemize}
\item
$t \longrightarrow \mathcal F[g](\xi)\;G(t,\xi)$ is continuous
owing to \eqref{eq.char}.
\item
$\xi \longrightarrow \mathcal F[g](\xi)\;G(t,\xi)$ is bounded
by a $L_{\mathfrak p}(d\xi)$-integrable function.
\end{itemize}
\bigskip
Thus  $I_{g}(\cdot)$  is continuous for all $g \in \mathcal S_{cyl}(Q'(A))$ and the bound \eqref{tight} holds true. Hence we can apply Lemma 5.12-f) in \cite{AGS} and then conclude that the map $t \rightarrow \tilde{\mu}_{t}$ is weakly narrowly continuous in $Q'(A)$.

\medskip
\noindent
{\it{The Liouville equation:}}\\
Integrate the expression \eqref{eq.liouville} with $\mathcal
F[g](\xi)L_{\wp}(dz)$, hence  $\forall t \in \R$, $\forall g \in \mathcal S_{cyl}(Q'(A))$,
\begin{equation*}
\partial_{t}I_{g}(t)=i\int_{Q(A)}\{q_{t};g\}(z) d\tilde{\mu}_{t}(z),
\end{equation*}
with $q_{t}(z)=\frac{1}{2}q(z_{t}^{\otimes 2},z_{t}^{\otimes 2}).$
Multiplying this expression by $\phi \in \mathcal C^{\infty}_{0}(\R)$ and
integrating  by parts yields
\begin{equation*}
 \int_{\R}\int_{Q(A)}\partial_{t}f(t,z)+i \{ q_{t}(.),f(t,.)\}(z) \;d\tilde\mu_{t}(z)dt=0\,,
\end{equation*}
with $f(t,z)=g(z)\phi(t)$.
To conclude, we use the density of $\mathcal C^{\infty}_{0}(\R) \otimes^{alg}
\mathcal C^{\infty}_{0,cyl}(Q'(A))$ in $\mathcal C^{\infty}_{0,cyl}(\R \times Q'(A)).$
\end{proof}
\begin{remarks}
We will show in the following section that the Liouville equation \eqref{eq.liouville} is equivalent to the following one for every $f \in \mathcal{C}_{0,cyl}^{\infty}(\R \times Q'(A))$
\begin{equation}
\label{liouv2}
 \int_{\R}\int_{Q(A)}\partial_{t}f(t,z)+ \Re \,\langle v_t(z),\nabla f(t,z)\rangle_{Q'(A)} \;d\tilde\mu_{t}(z)dt=0\,,
\end{equation}
with $v_t(z)=-ie^{itA}[\partial_{\bar z}q_{0}](e^{-itA}z),$ and $\nabla$ is the real derivative in $Q'(A).$ 
\end{remarks}

\subsection{End of the Proof of Theorems \ref{thm.main}}
\label{se.ep}
\begin{proof}
\label{pr.end}
Assume the hypotheses of Theorem \ref{thm.main} and consider  for a given time  $t \in \R$ the family of normal states,
$$\tilde{\varrho}_{N}(t)=|\widetilde{\Psi}_{t}^{(N)}\rangle
\langle
\widetilde{\Psi}_{t}^{(N)}|=|e^{itH_{N}^{0}}e^{-itH_{N}}\Psi^{(N)}\rangle
\langle e^{itH_{N}^{0}}e^{-itH_{N}}\Psi^{(N)}|\,.
$$
Suppose that $\nu$ is any Wigner measure of $\tilde{\varrho}_{N}(t)$ then there exists a subsequence $(N_k)_{k\in\N}$ such that $\{\nu\}=\mathcal{M}(\varrho_{N_k}(t),k\in\N)$ according to Definition \ref{de.wigmeas}. By Proposition \ref{pr.integral} and  \ref{pr.liouville}, we can extract a subsequence $(N_{k_{l}})_{l\in\N}$ such that for all $s\in\R$,
$$
\mathcal{M}(\tilde\varrho_{N_{k_{l}}}(s),l\in\N)=\{\tilde\mu_s\}\,\quad \text{ with in particular  } \quad
\tilde\mu_t=\nu\,.
$$
We know by Proposition \ref{pr.liouville} that $s\in\R\to\tilde\mu_s$ solves the Liouville equation \eqref{eq.liouville} and by setting $\xi=(1+A)^{-1}\eta,\,\eta \in Q'(A),$ we get
\begin{equation}
\tilde{\mu}_{t}(e^{2i\pi \Re \langle \eta,z \rangle_{Q'(A)}})=\tilde{\mu}_{0}(e^{2i\pi \Re \langle \eta,z \rangle_{Q'(A)}})+i\int_{0}^{t}\tilde{\mu}_{s}(e^{2i\pi\Re \langle \eta,z\rangle_{Q'(A)}}q_{1}(\eta,s)[z])\,ds,\\
\end{equation}
 Then, we have by integrating on $\R \times Q'(A)$
\begin{equation*}
 \int_{\R}\int_{Q(A)}\partial_{t}f(t,z)+ \Re \, \langle v_t(z),\nabla f(t,z) \rangle_{Q'(A)} \;d\tilde\mu_{t}(z)\,dt=0\,,
\end{equation*}
for any $f \in \mathcal C^{\infty}_{0,cyl}(\R\times Q'(A))$, owing to the following equalities
\begin{align*}
i \{ q_{s}(.),f(s,.)\}(z)&=i \langle [\partial_{\bar z}q_0](e^{-isA}z),e^{-isA}\partial_{\bar z}f(s,z)\rangle-i\langle e^{-isA}\partial_{\bar z}f(s,z),[\partial_{\bar z}q_0](e^{-isA}z)\rangle\\
                         &=2\,\Re \langle v_s(z), \partial_{\bar z}f(s,z)\rangle=2\, \Re \langle v_s(z), \nabla_{\bar z}f(s,z)\rangle_{Q'(A)}\\
&=\Re \langle v_{s}(z),\nabla f(s,z)\rangle_{Q'(A)},
\end{align*}
where $v_s(z)=-ie^{isA}[\partial_{\bar z}q_0](e^{-isA}z)$, $q_{s}(z)=\frac{1}{2} q(z_{s}^{\otimes 2},z_{s}^{\otimes 2}),\;z_{s}=e^{-isA}z$. Here $v_s$ has the interpretation of a velocity vector field and $\nabla $ is the real derivative in $Q'(A)$, see \cite[Lemma C.7]{AmNi4} for more details. By Proposition \ref{pr.invhn}, we see that for any $s \in \R$,
\begin{equation*}
\langle \widetilde{\Psi}_{s}^{(N)},H_{N}^{0}\widetilde{\Psi}_{s}^{(N)}\rangle \leq C'N\,,
\end{equation*}
for some time independent constant $C'>0$. Thus, Proposition \ref{pr.transportapriori} gives
for any $s \in \R$, and for every $k \in \N$
\begin{equation*}
\int_{Q(A)} \|z\|_{Q(A)}^{2}\;\|z\|_{\mathcal{Z}_{0}}^{2k}\; d\tilde{\mu}_{s}(z) \leq C'\,.
\end{equation*}
Subsequently, for any time $t \in \R, \tilde\mu_t(B_{\mathcal Z_0}(0,1))=1.$ Now the abstract mean-field  equation
\begin{equation*}
i\partial_{t} z =Az+[\partial_{\bar z}q_0](z),
\end{equation*}
can be written in the interaction representation as follows:
\begin{equation}
\label{eq.hartree2}
\left\{
      \begin{aligned}
        &\partial_{t}z=v_{t}(z)= -ie^{itA}[\partial_{\bar{z}}q_0](e^{-itA}z),&\\
        &z_{|t=0}=z_{0}.\\
         \end{aligned}
    \right.
\end{equation}
So the above equation \eqref{eq.hartree2} is locally well-posed in $Q(A)$ thanks to the assumption \eqref{C1}.
Remember that Proposition \ref{pr.liouville} says that the map $s \to
\tilde{\mu}_{s}$ is weakly narrowly continuous in $Q'(A).$
Subsequently, the measures $(\tilde{\mu}_{s})_{s\in\mathbb{R}}$ are satisfying all the assumptions of Theorem \ref{pr.D5}. Then we get
$$
\forall s \in I,\,\tilde{\mu}_{s}=\tilde{\Phi}(s,0)_\sharp\mu_{0},
$$
where $\tilde{\Phi}(s,0)$ denotes the well defined flow of the equation \eqref{eq.hartree2} and $I$ is the interval provided by \eqref{C1}.
In particular one gets the equality $\nu=\tilde{\Phi}(t,0)_\sharp\mu_{0}$. Since $\nu$ is any Wigner measure of $(\tilde \varrho_N(t))_{N\in\N}$, one obtains
$$
\mathcal{M}(\tilde\varrho_N(t),N\in\N)=\{\tilde{\Phi}(t,0)_\sharp\mu_{0}\}\,.
$$
Back to the family of normal states,
$$
\varrho_N(t)=e^{-itH_{N}^{0}}\,\tilde\varrho_N(t) \,e^{itH_{N}^{0}}\,.
$$
We notice that $e^{itH_{N}^{0}} \mathcal{W}(\xi) e^{-itH_{N}^{0}}= \mathcal{W}(e^{itA}\xi)$ hence a simple computation yields for any $t \in I$,
\begin{equation*}
\mathcal{M}(e^{-itH_{N}^{0}}\,\tilde\varrho_N(t) \,e^{itH_{N}^{0}}
,\, N \in \N) =\{(e^{-itA})_\sharp\nu, \nu\in\mathcal{M}(\tilde\varrho_N(t),\, N \in \N)\}=
\{(e^{-itA})_\sharp(\tilde\Phi(t,0)_{\sharp}\mu_{0})\}.
\end{equation*}
Finally, remark that $\Phi(t,0)=e^{-itA}\circ\tilde\Phi(t,0)$. So, the main Theorem \ref{thm.main} is now proved.
\end{proof}

\appendix
\section{The Wick quantization and Wigner measures}
\label{se.app}
Although the mean-field problem considered in this paper deals with many-body Schr\"odinger Hamiltonians of the form of $H_N$ given in \eqref{eq.hn}, it is conceptually important to see $H_N$ as a second quantization of the classical energy \eqref{cl-enr}. This  in particular allows to understand the phase-space analysis hidden in the mean-field approximation and provides convenient tools to analyze  phase-space distributions of states as well as their evolutions.

\subsection{Wick quantization}
So, for reader's convenience  we briefly recall in this appendix the $\varepsilon$-dependent Wick quantization in the Fock spaces and provide some general  properties. Let $\mathcal{Z}_0$ be a complex Hilbert space and consider the symmetric Fock space
\begin{equation*}
\Gamma_{s}(\mathcal{Z}_{0})=\bigoplus_{n=0}^{\infty}\mathcal{S}_{n}(\mathcal{Z}_{0}^{\otimes n})=\bigoplus_{n=0}^{\infty}\vee^{n}
\mathcal{Z}_{0}\,,
\end{equation*}
where $\mathcal{S}_{n}$ denotes the symmetrization operator given by \eqref{sym}. On this Fock space there exists a realization of the following $\varepsilon$-dependant canonical commutation relations (CCR):
\begin{equation*}
\big[a(z_{1}), a^{*}(z_{2})\big]=\varepsilon \langle z_{1}\,,\,
z_{2}\rangle \;\Id,\qquad  \big[a^{*}(z_{1}), a^{*}(z_{2})\big]=\big[a(z_{1}),a(z_{2})\big]=0\,, \;\;\varepsilon>0\,,
\end{equation*}
given by the $\varepsilon$-dependent annihilation and creation operators,
\begin{eqnarray*}
a(z_{1})_{|\vee^{N}\mathcal Z_{0}}&=&\sqrt{\varepsilon N}\langle z_{1}| \otimes
\Id_{|\vee^{N-1}\mathcal Z_{0}},\\
a^{*}(z_{2})_{|\vee^{N}\mathcal Z_{0}}&=&\sqrt{\varepsilon (N+1)}\,\mathcal{S}_{N+1}(|z_{2}\rangle \otimes
\Id_{|\vee^{N}\mathcal Z_{0}})\,.
\end{eqnarray*}
The  Weyl operator is also $\varepsilon$-dependent and it is defined for any $\xi\in \mathcal Z_{0}$ by the formula:
\begin{equation}
\label{weyl}
\mathcal{W}(\xi)=e^{i\frac{a(\xi)+a^{*}(\xi)}{\sqrt{2}}}\,.
\end{equation}
For $i=1,\cdots,n$ and $C$ an operator on $\mathcal{Z}_0$, we denote
$$
C_i=1^{\otimes (i-1)} \otimes
C\otimes 1^{\otimes (n-i)},
$$
where the operator $C$ in the right hand side acts on the $i^{th}$ component.
The second quantization  $d\Gamma(C)$ is the $\varepsilon$-dependent operator  defined by
\begin{equation*}
d\Gamma(C)_{|\vee^{n}\mathcal Z_{0}}=\varepsilon\sum_{i=0}^{n} C_i\,.
\end{equation*}
In particular, the $\varepsilon$-dependent number operator is
\begin{equation}
\label{dfn.number}
\mathbf{N}:=d\Gamma(\Id)\,.
\end{equation}
The Wick quantization is a map that corresponds to a monomial $z\in\mathcal{Z}_0
\mapsto b(z)$  an operator on the Fock space (the function $b(z)$ is called a symbol in connection with the pseudo-differential calculus). It is related to the normal ordering of  products of creation-annihilation operators which is a well treated subject in standard textbooks (see for instance \cite{Ber,DeGeBook}). Here we follow the  presentation in \cite{AmNi1} which stresses the symbol-operator correspondence and which is more convenient for our purpose. We  introduce below two type of classes of symbols $\mathcal P_{p,q}(\mathcal Z_0)$ and $\mathcal Q_{p,q}(A)$ and stress their main properties.

For all $p,q \in \N$, we denote $\mathcal P_{p,q}(\mathcal Z_0),$ resp. $\mathcal P^{\infty}_{p,q}(\mathcal Z_0),$ the space of complex-valued monomials on $\mathcal Z_{0}$, defined according to the conditions:
\begin{equation}
\label{def.symb}
\begin{aligned}
b \in \mathcal P_{p,q}(\mathcal Z_0)& \Leftrightarrow \exists! \,\tilde{b} \in
\mathcal{L}(\vee^{p}\mathcal Z_{0},\vee^{q}\mathcal Z_{0}),\;
b(z)=\langle z^{\otimes q},\tilde{b} z^{\otimes p}\rangle\,,\\
b \in \mathcal P_{p,q}^{\infty}(\mathcal Z_0)& \Leftrightarrow \exists! \,\tilde{b} \in
\mathcal{L}^{\infty}(\vee^{p}\mathcal Z_{0},\vee^{q}\mathcal Z_{0}),\;
b(z)=\langle z^{\otimes q},\tilde{b} z^{\otimes p}\rangle.
\end{aligned}
\end{equation}
Here $\mathcal{L}$ and $\mathcal{L}^{\infty}$  refer to the space of
bounded operators and the space of compact operators respectively.
\begin{dfn}
\label{dfn.wick} For $\varepsilon>0$ and for each symbol $b\in \mathcal P_{p,q}(\mathcal Z_0)$, with $\tilde b$ as in \eqref{def.symb},  we associate an operator $b^{Wick}$:
$\bigoplus_{n\geq 0}^{alg}\vee^{n}\mathcal
Z_{0} \longrightarrow \bigoplus_{n\geq 0}^{alg}\vee^{n}\mathcal
Z_{0}$, given by
\begin{equation}
\label{def.wicko}
b_{|\vee^{n}\mathcal Z_{0}}^{Wick}=1_{[p,+\infty)}(n)
\frac{\sqrt{n!(n+q-p)!}}{(n-p)!}\varepsilon^{\frac{p+q}{2}}\mathcal{S}_{n-p+q}(\tilde{b}\otimes
\Id^{\otimes (n-p)})\in \mathcal{L}(\vee^{n}\mathcal
Z_{0},\vee^{n+q-p}\mathcal Z_{0})\;.
\end{equation}
\end{dfn}
The  Wick quantization map depends in the parameter $\varepsilon>0$, however for simplicity we omit this dependence in the notation of $b^{Wick}$. By linearity one can extend this quantization to any finite sum in $\mathcal{P}_{alg}(\mathcal Z_0):=\oplus_{p,q\geq 0}^{alg}\mathcal{P}_{p,q}(\mathcal Z_0)$. Remark however that the classical energy functional $h(z)=\langle z,Az \rangle+ \frac{1}{2}q(z^{\otimes 2},z^{\otimes 2})$ (given in \eqref{cl-enr}) is not in $\mathcal{P}_{alg}(\mathcal Z_0)$ unless $A$ and $q$ are bounded. So, in order to extend the above quantization procedure to more interesting symbols, we introduce in the sequel another class of monomials $\mathcal{Q}_{p,q}(A)$.

Let $A$ be a given non-negative self-adjoint operator on $\mathcal Z_0$. Let $H_n^0$ denotes, for  each $n\in\N$, the operator on $\vee^{n}\mathcal{Z}_0$ defined according to \eqref{ham0}, i.e.:
$$
H^0_{n_{|\vee^{n}\mathcal{Z}_0}}=\sum_{i=1}^n A_i\,.
$$
For simplicity we denote
$$
\mathfrak{Q}_n:=Q(H_n^0)\subset \vee^{n}\mathcal{Z}_0 \quad \text{ and }
\quad Q_n:=Q(\sum_{i=1}^n A_i)\subset \otimes^{n}\mathcal{Z}_0\,,
$$
with $Q_n$ is a subspace possessing non symmetric vectors satisfying $\mathfrak{Q}_n\subset Q_n$, $\mathcal{S}_n Q_n=\mathfrak{Q}_n$ and $Q_n, \mathfrak{Q}_n$ are respectively dense in
 $\otimes^n\mathcal Z_0, \vee^n\mathcal Z_0$. Remember that $Q_n$ and $\mathfrak{Q}_n$ are Hilbert spaces when they are equipped with the graph norm
\begin{equation}
\label{inner}
\|u\|_{Q_n}=\|u\|_{\mathfrak{Q}_n}=\sqrt{\langle u, \sum_{i=1}^n [A_i+1]\, u\rangle}\,,\quad \forall
u\in Q_n.
\end{equation}
We denote by $Q_n'$ and $\mathfrak{Q}_n'$ respectively the dual spaces of $Q_n$ and $\mathfrak{Q}_n$ with respect to the scalar product of $\otimes^n\mathcal{Z}_0$.\\
For all $p,q \in \N$, we define the class of symbols $\mathcal Q_{p,q}(A)$ as the space of complex-valued monomials on $Q(A)$ verifying
\begin{equation}
\label{def.symbq}
b \in \mathcal Q_{p,q}(A) \Leftrightarrow \exists! \,\tilde{b} \in
\mathcal{L}(\mathfrak{Q}_{p},\mathfrak{Q}_{q}'),\; \forall z\in Q(A), \;
b(z)=\langle z^{\otimes q},\tilde{b} z^{\otimes p} \rangle_{\otimes^q\mathcal Z_0}\,.
\end{equation}
Let $b\in  \mathcal Q_{p,q}(A)$ and $\tilde b$ as in \eqref{def.symbq}, then the map defined for any $\varphi_1,\cdots,\varphi_n\in Q(A)$ by
\begin{equation}
\label{ap.btilde}
\tilde b\otimes 1^{(n-p)} \; \,\mathcal{S}_p\otimes 1^{(n-p)} \,\varphi_1\otimes \cdots\otimes\varphi_n=
\left(\tilde b\,\mathcal{S}_p \,(\varphi_1\otimes \cdots\varphi_p)\right)\otimes\varphi_{p+1}\cdots\otimes
\varphi_n\,,
\end{equation}
extends by linearity and continuity  to a bounded operator from
$Q_{n}$ into $Q_{n-p+q}'$ since for any $\Phi^{(n)}\in \otimes^{alg,n} Q(A)$
 \begin{eqnarray*}
\|\tilde b\otimes 1^{(n-p)} \; \,\mathcal{S}_p\otimes 1^{(n-p)}\, \Phi^{(n)}\|_{Q_{n-p+q}'} &=&
\|(\sum_{i=1}^{n-p+q}A_i+1)^{-\frac{1}{2}} \mathcal{S}_q\tilde b \mathcal{S}_p (\sum_{i=1}^{p}A_i+1)^{-\frac{1}{2}} (\sum_{i=1}^{p}A_i+1)^{\frac{1}{2}} \Phi^{(n)}\|\\
&\leq& \|\tilde b\|_{\mathcal{L}(\mathfrak{Q}_{p},\mathfrak{Q}_{q}')} \| \Phi^{(n)}\|_{Q_n}\,,
\end{eqnarray*}
and the subspace $\otimes^{alg,n} Q(A)$ is a form core for $\sum_{i=1}^n A_i$.
As a consequence, we see that
$$
\mathcal{S}_{n-p+q} \,\tilde b\otimes 1^{(n-p)}\,\mathcal{S}_n= \mathcal{S}_{n-p+q} \,\tilde b\otimes 1^{(n-p)} \;\mathcal{S}_p\otimes 1^{(n-p)} \,\mathcal{S}_n\in \mathcal{L}(\mathfrak{Q}_n,\mathfrak{Q}_{n-p+q}')\,.
$$
\begin{dfn}
\label{dfn.wickq}
For each symbol $b\in \mathcal Q_{p,q}(A)$, with $\tilde b$ as in \eqref{def.symbq},  we associate an operator $b^{Wick}$:
$\bigoplus_{n\geq 0}^{alg}\mathfrak{Q}_n \longrightarrow \bigoplus_{n\geq 0}^{alg}\mathfrak{Q}_n'$, given by
\begin{equation}
\label{def.wickoq}
b_{|\mathfrak{Q}_n}^{Wick}=1_{[p,+\infty)}(n)
\frac{\sqrt{n!(n+q-p)!}}{(n-p)!}\varepsilon^{\frac{p+q}{2}}\mathcal{S}_{n-p+q}(\tilde{b}\otimes
1^{\otimes (n-p)})\in \mathcal{L}(\mathfrak{Q}_n,\mathfrak{Q}_{n-p+q}')\;.
\end{equation}
\end{dfn}
Actually $b_{|\mathfrak{Q}_n}^{Wick}$ can also be understood as a bounded sesquilinear form on $\mathfrak{Q}_n\times \mathfrak{Q}_{n-p+q}$. Remark that we have always the inclusion $\mathcal P_{p,q}(\mathcal Z_0)\subset \mathcal Q_{p,q}(A)$. Furthermore, the class $\mathcal Q_{p,q}(A)$ depends on the operator $A$ and if $A$ is bounded on $\mathcal Z_0$ then  $\mathcal Q_{p,q}(A)$ coincides with $\mathcal P_{p,q}(\mathcal Z_0)$.

\bigskip
\noindent
\textit{Examples}:
Let $q$ be a quadratic form on $Q_2$ satisfying the assumption \eqref{A2} and
$\tilde q$ defined according to \eqref{qtilde}. The main examples of interest here are
\begin{align*}
&b_0(z)=\langle z, A z\rangle\in \mathcal Q_{1,1}(A) &\quad& \text{ with } \quad \tilde b_0=A\,,\\
&b(z)=q(z^{\otimes 2},z^{\otimes 2})\in \mathcal Q_{2,2}(A) &\quad& \text{ with } \quad \tilde b=\mathcal S_2 \tilde q\mathcal S_2\,,
\end{align*}
and
\begin{equation}
h(z)=\langle z, A z\rangle+\frac{1}{2}q(z^{\otimes 2},z^{\otimes 2})\in  \mathcal Q_{1,1}(A)+\mathcal Q_{2,2}(A)\,.
\end{equation}
So using the Wick quantization given in Definition \ref{dfn.wickq}, one obtains the following
equality in the sense of quadratic forms for any $\Psi^{(N)},\Phi^{(N)}\in \mathfrak{Q}_N$,
$$
\langle \Psi^{(N)}, H_N\Phi^{(N)}\rangle = \langle \Psi^{(N)},\varepsilon^{-1}
h^{Wick}\Phi^{(N)}\rangle\,, \quad \text{ when } \quad \varepsilon=\frac 1 N\,.
$$
This identity shows the relationship between the  Hamiltonian of many-boson systems in the  mean-field scaling and the Wick quantization of symbols in $\mathcal Q_{p,q}(A)$ with the
semiclassical parameter $\varepsilon$. In fact most of the information we need in the analysis
of the mean-field approximation comes from general properties of the classes  $\mathcal Q_{p,q}(A)$ stated in Proposition \ref{c-prop-wickA} below.

The linear space $\mathcal Q_{p,q}(A)$ is a subset of the space of continuous functions
on $Q(A)$ and can be equipped with a convenient convergence topology. We say that a sequence
$(c_m)_{m\in N}$ in  $\mathcal Q_{p,q}(A)$ is ${\it b}$-convergent to a function $c(z)$ iff
$$
c_m\overset{\it b}{\to} c \Leftrightarrow \forall z\in Q(A), \,c_m(z)\to c(z) \text{ and }
(||\tilde c_m||_{\mathcal{L}(\mathfrak{Q}_p,\mathfrak{Q}_q')})_{m\in\N} \;\text{ is bounded }\,.
$$
\begin{prop}
\label{c-prop-wickA}
For any $b\in \mathcal Q_{p,q}(A)$ and
$(c_m)_{m\in N}$ a sequence in  $\mathcal Q_{p,q}(A)$, we have:\\
(i)  $\bar{b}\in \mathcal Q_{q,p}(A)$ and
$$
\big(b^{Wick}_{|\mathfrak{Q}_n}\big)^*=\bar b^{Wick}_{|\mathfrak{Q}_{n-p+q}}\,.
$$
(ii) For any $t\in\mathbb{R}$, $b_t(z):=b(e^{-i t A } z)\in \mathcal Q_{p,q}(A)$ with
$$
e^{i \frac{t}{\varepsilon} d\Gamma(A) } b^{Wick} e^{-i \frac{t}{\varepsilon} d\Gamma(A) }
=b_t^{Wick}\,.
$$
(iii) There exists a constant $C_{p,q}>0$ such that for any $\Psi^{(n)}\in \mathfrak{Q}_{n}$, $\Phi^{(m)}\in\mathfrak{Q}_{m}$
with $m=n-p+q$ and  $\varepsilon=\frac{1}{n}$,
$$
\left|\langle \Phi^{(m)}, b^{Wick} \,\Psi^{(n)}\rangle\right|\leq
C_{p,q} \, \left\| \tilde b\right\|_{\mathcal{L}(\mathfrak Q_n,\mathfrak Q_m')} \;
\left\|(A_1+1)^{\frac{1}{2}}\Phi^{(m)}\right\| \,
\left\|(A_1+1)^{\frac{1}{2}}\Psi^{(n)}\right\|\,.
$$
(iv) If $c_m\overset{b}{\to} c$ then $c\in \mathcal Q_{p,q}(A)$ and $c_m^{Wick}$ converges weakly to  $c^{Wick}$ in $\mathcal{L}(\mathfrak{Q}_{n},\mathfrak{Q}_{n-p+q}')$.\\
(v) For any $\xi\in Q(A)$ the symbol $b(\cdot +\xi)$ belongs to $\oplus_{p,q\in\mathbb{N}}^{alg} \mathcal Q_{p,q}(A )$ and the identity
\begin{equation}
\label{wicktrans}
b^{Wick} \mathcal W(\frac{\sqrt{2}}{i\varepsilon} \xi )
=\mathcal W(\frac{\sqrt{2}}{i\varepsilon} \xi )  b(z + \xi)^{Wick} \,,
\end{equation}
holds in the sense of sesquilinear forms on $\mathfrak{Q}_{n_1}\times\mathfrak{Q}_{n_2}$ for any
$n_1,n_2\in \N$.
\end{prop}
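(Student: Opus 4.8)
The five assertions are the transcription to the form--bounded class $\mathcal Q_{p,q}(A)$ of the corresponding facts for the bounded class $\mathcal P_{p,q}(\mathcal Z_0)$ established in \cite{AmNi1}. The plan is therefore to prove the genuinely new quantitative estimate (iii) directly, and to obtain (i), (ii), (iv) and (v) either by a one--line computation from Definition \ref{dfn.wickq} or by an approximation argument reducing them to the already known bounded case.

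\textbf{Items (i)--(ii).} For (i) I would note that $\bar b(z)=\overline{\langle z^{\otimes q},\tilde b\,z^{\otimes p}\rangle}=\langle z^{\otimes p},\tilde b^{*}z^{\otimes q}\rangle$ with $\tilde b^{*}\in\mathcal L(\mathfrak Q_q,\mathfrak Q_p')$, hence $\bar b\in\mathcal Q_{q,p}(A)$ and $\widetilde{\bar b}=\tilde b^{*}$; the adjoint identity then follows by taking the adjoint of \eqref{def.wickoq}, using $\big(\mathcal S_{n-p+q}(\tilde b\otimes 1^{(n-p)})\big)^{*}=(\tilde b^{*}\otimes 1^{(n-p)})\,\mathcal S_{n-p+q}$ and observing that the combinatorial prefactor is invariant under $(n,p,q)\mapsto(n-p+q,q,p)$. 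For (ii) I would compute $b_t(z)=\langle z^{\otimes q},(e^{itA})^{\otimes q}\,\tilde b\,(e^{-itA})^{\otimes p}z^{\otimes p}\rangle$; since $e^{\pm itA}$ commutes with $H^0_k$ it is a graph--norm isometry of $\mathfrak Q_k$, so $\widetilde{b_t}=(e^{itA})^{\otimes q}\,\tilde b\,(e^{-itA})^{\otimes p}\in\mathcal L(\mathfrak Q_p,\mathfrak Q_q')$ and $b_t\in\mathcal Q_{p,q}(A)$. The conjugation identity is then immediate because $e^{i\frac t\varepsilon d\Gamma(A)}$ acts on $\vee^k\mathcal Z_0$ as $(e^{itA})^{\otimes k}$, which commutes with $\mathcal S_k$ and with the factors $1^{(n-p)}$.

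\textbf{Item (iii).} Put $m=n-p+q$ and $\varepsilon=1/n$. Using $\mathcal S_m\Phi^{(m)}=\Phi^{(m)}$ in \eqref{def.wickoq} I would first reduce to controlling $\big|\langle\Phi^{(m)},(\tilde b\otimes 1^{(n-p)})\Psi^{(n)}\rangle\big|$ times the prefactor $\frac{\sqrt{n!\,m!}}{(n-p)!}\,n^{-\frac{p+q}{2}}$. The latter is $\le C_q$ since $\frac{n!}{(n-p)!}\le n^{p}$ and $\frac{m!}{(n-p)!}=\prod_{j=1}^{q}(n-p+j)\le (n+q)^{q}$, whence it is bounded by $(1+q/n)^{q/2}$. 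For the matrix element I would insert $\big(\sum_{i=1}^{q}A_i+1\big)^{\pm1/2}$ on the range side and $\big(\sum_{i=1}^{p}A_i+1\big)^{\pm1/2}$ on the domain side of $\tilde b\otimes 1^{(n-p)}$; the middle factor becomes $\big[(\sum_{i=1}^{q}A_i+1)^{-1/2}\tilde b\,(\sum_{i=1}^{p}A_i+1)^{-1/2}\big]\otimes1^{(n-p)}$, of operator norm $\|\tilde b\|_{\mathcal L(\mathfrak Q_p,\mathfrak Q_q')}$, which is the quantity denoted $\|\tilde b\|_{\mathcal L(\mathfrak Q_n,\mathfrak Q_m')}$ in the statement up to the identification discussed before Definition \ref{dfn.wickq}. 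Finally, for a symmetric $\Phi^{(m)}$ one has $\langle\Phi^{(m)},\sum_{i=1}^{q}A_i\,\Phi^{(m)}\rangle=q\,\langle\Phi^{(m)},A_1\Phi^{(m)}\rangle$, so $\|(\sum_{i=1}^{q}A_i+1)^{1/2}\otimes1^{(m-q)}\Phi^{(m)}\|\le\sqrt q\,\|(A_1+1)^{1/2}\Phi^{(m)}\|$, and similarly on the $\Psi^{(n)}$ side; assembling these three bounds gives (iii) with $C_{p,q}=C_q\sqrt{pq}$.

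\textbf{Items (iv)--(v) and the main difficulty.} For (iv): the pointwise convergence $c_m(z)\to c(z)$ together with the polarization identity for monomials and the uniform bound on $\|\tilde c_m\|_{\mathcal L(\mathfrak Q_p,\mathfrak Q_q')}$ forces $\tilde c_m$ to converge weakly--$*$ in $\mathcal L(\mathfrak Q_p,\mathfrak Q_q')$ to a limit representing $c$, so $c\in\mathcal Q_{p,q}(A)$; testing \eqref{def.wickoq} against the product vectors of the core $\otimes^{alg,n}Q(A)$ and using the uniform bound of (iii) to handle the remainder then yields $c_m^{Wick}\rightharpoonup c^{Wick}$ in $\mathcal L(\mathfrak Q_n,\mathfrak Q_m')$. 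For (v): expanding $(z+\xi)^{\otimes p}$ and $(z+\xi)^{\otimes q}$ and using $\xi\in Q(A)$ shows $b(\,\cdot+\xi)\in\oplus^{alg}_{p',q'\in\N}\mathcal Q_{p',q'}(A)$; I would then approximate $\tilde b$ by $\tilde b_k=\chi_k(H^0_q)\,\tilde b\,\chi_k(H^0_p)\in\mathcal L(\vee^p\mathcal Z_0,\vee^q\mathcal Z_0)$, so that $b_k\in\mathcal P_{p,q}(\mathcal Z_0)$ and \eqref{wicktrans} holds for $b_k$ by the bounded theory of \cite{AmNi1}, and finally pass to the limit $k\to\infty$ using $b_k\overset{b}{\to}b$, the $b$--convergence of each component of $b_k(\,\cdot+\xi)$, item (iv), and Proposition \ref{pr.invsew} to keep the action of $\mathcal W(\frac{\sqrt2}{i\varepsilon}\xi)$ under control on the form domains. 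I expect this last limiting step of (v) — ensuring that every sesquilinear form on $\mathfrak Q_{n_1}\times\mathfrak Q_{n_2}$ remains well defined while the Weyl operator is commuted across the Wick monomial — to be the only real obstacle; all the rest is bookkeeping resting on the estimate (iii).
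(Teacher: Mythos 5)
Your proposal is correct and follows essentially the same route as the paper's proof: the direct identification $\widetilde{\bar b}=\tilde b^{*}$ and $\widetilde{b_t}$ for (i)--(ii), the insertion of $(H_q^0+1)^{\pm\frac12}$ and $(H_p^0+1)^{\pm\frac12}$ around $\tilde b\otimes 1^{(n-p)}$ together with the symmetry reduction to $A_1$ and the bounded combinatorial prefactor for (iii), polarization plus the uniform bound for (iv), and the spectral cutoffs $\chi_k(H_q^0)\,\tilde b\,\chi_k(H_p^0)$ reducing (v) to the bounded class treated in \cite{AmNi1}, with the limit handled by (iv) and Proposition \ref{pr.invsew}. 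No gaps to report.
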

\begin{proof}
(i)  According to   \eqref{def.symbq}, we have
$$
\bar b(z)=\overline{b(z)}=\langle \tilde{b} \,z^{\otimes q},   z^{\otimes p} \rangle=\langle z^{\otimes q}, \tilde{b}^* \,  z^{\otimes p} \rangle\,,
$$
where $\tilde{b}^*\in\mathcal{L}(\mathfrak{Q}_q, \mathfrak{Q}_p')$ is the adjoint of
$\tilde{b}\in\mathcal{L}(\mathfrak{Q}_p, \mathfrak{Q}_q')$.
Let $\Phi^{(n)}\in\vee^{alg,n}Q(A)$, $\Psi^{(m)}\in\vee^{alg,m}Q(A)$ with $m=n-p+q$, then we have
\begin{eqnarray*}
\langle b^{Wick} \Phi^{(n)}, \Psi^{(m)}\rangle&=&
1_{[p,+\infty)}(n)
\frac{\sqrt{n!m!}}{(n-p)!}\varepsilon^{\frac{p+q}{2}}  \; \langle \tilde{b}\otimes
1^{\otimes (n-p)} \, \Phi^{(n)}, \Psi^{(m)}\rangle\\
&=&
1_{[p,+\infty)}(n)
\frac{\sqrt{n!m!}}{(n-p)!}\varepsilon^{\frac{p+q}{2}}  \; \langle \Phi^{(n)},
\tilde{b}^*\otimes
1^{\otimes (n-p)} \Psi^{(m)}\rangle\\
&=&  \langle  \Phi^{(n)}, \bar b^{Wick}\Psi^{(m)}\rangle\,.
\end{eqnarray*}
Since $\vee^{alg,n}Q(A)$ is dense in the Hilbert space $(\mathfrak{Q}_n,||.||_{\mathfrak{Q}_n})$, the above identity extends to any $\Phi^{(n)}\in \mathfrak{Q}_n$ and  $\Psi^{(m)}\in \mathfrak{Q}_m$.\\
(ii) For any $t\in\mathbb{R}$ and $n\in\mathbb{N}$
 the  operator $(e^{-i t A })^{\otimes n}:\mathfrak{Q}_n \to \mathfrak{Q}_n$ is bounded and
  extends by duality to a bounded operator on $\mathfrak{Q}_n '$. Hence for any $z\in Q(A)$,
$$
b_t(z):=\langle  (e^{-i t A } z)^{\otimes q},  \tilde{b}\, (e^{-i t A } z)^{\otimes p} \rangle
=\langle z^{\otimes q},   (e^{-i t A })^{\otimes q} \;\tilde{b}\, (e^{-i t A })^{\otimes p}  \;z^{\otimes p} \rangle\,,
$$
and $\tilde b_t=(e^{-i t A })^{\otimes q} \;\tilde{b}\, (e^{-i t A })^{\otimes p}$ belongs to
$\mathcal{L}(\mathfrak{Q}_p, \mathfrak{Q}_q')$.  Let $\Phi^{(n)}\in\mathfrak{Q}_n$, $\Psi^{(m)}\in\mathfrak{Q}_m$ with $m=n-p+q$, then we have
\begin{eqnarray*}
\langle \Psi^{(m)},  e^{i \frac{t}{\varepsilon} d\Gamma(A) } b^{Wick} e^{-i \frac{t}{\varepsilon} d\Gamma(A) }\Phi^{(n)} \rangle&=&
1_{[p,+\infty)}(n)
\frac{\sqrt{n!m!}}{(n-p)!}\varepsilon^{\frac{p+q}{2}}  \;   \langle\Psi^{(m) } ,
 (e^{-i t A })^{\otimes m} \; \tilde{b}\otimes
1^{\otimes (n-p)} \, (e^{-i t A })^{\otimes n} \Phi^{(n)}\rangle\\ \displaystyle
&=&
1_{[p,+\infty)}(n)
\frac{\sqrt{n!m!}}{(n-p)!}\varepsilon^{\frac{p+q}{2}}  \; \langle \Psi^{(m)} ,
\tilde{b}_t\otimes
1^{\otimes (n-p)} \Phi^{(n)} \rangle\\
&=&  \langle  \Psi^{(m)} ,  b_t^{Wick}\Phi^{(n)} \rangle\,.
\end{eqnarray*}
(iii) A simple estimate gives
\begin{eqnarray*}
\left|\langle \Phi^{(m)}, b^{Wick} \,\Psi^{(n)}\rangle\right|&\leq&
\frac{\sqrt{n!(n+q-p)!}}{(n-p)!}
\varepsilon^{\frac{p+q}{2}} \,
\left|\bigg\langle (H_q^0+1)^{\frac{1}{2}}\otimes 1^{\otimes(m-q)}\Phi^{(m)}; \right.\\ && \hspace{.1in} \left.\bigg((H_q^0+1)^{-\frac{1}{2}}  \tilde b(H_p^0+1)^{-\frac{1}{2}}\bigg)\otimes 1^{\otimes(n-p)} \,(H_p^0+1)^{\frac{1}{2}}\otimes 1^{\otimes(n-p)}\Psi^{(n)}\bigg\rangle\right|\\
&\leq&
 \, \left\| \tilde b\right\|_{\mathcal{L}(\mathfrak Q_n,\mathfrak Q_m')} \;
\left\|(H_q^0+1)^{\frac{1}{2}}\otimes 1^{\otimes(m-q)}\Phi^{(m)}\right\| \,
\left\|(H_p^0+1)^{\frac{1}{2}}\otimes 1^{\otimes(n-p)}\Psi^{(n)}\right\|\,.
\end{eqnarray*}
Using the symmetry of the  vectors $\Phi^{(m)}$ (resp.~$\Psi^{(n)}$), we remark
$$
\left\|(H_q^0+1)^{\frac{1}{2}}\otimes 1^{\otimes(m-q)}\Phi^{(m)}\right\|^2=
\langle \Phi^{(m)}, (\sum_{i=1}^q A_i+1) \, \Phi^{(m)}\rangle=\langle \Phi^{(m)}, (q A_1+1) \, \Phi^{(m)}\rangle\,.
$$
(iv) Thanks to a polarization formula the monomial $c_m$ determines uniquely the operator $\tilde c_m\in\mathcal{L}(\mathfrak{Q}_{p},\mathfrak{Q}_{q}')$. In fact for any $\Phi^{(q)}\in\vee^{alg,q}Q(A)$ and $\Psi^{(p)}\in \vee^{alg,q}Q(A)$  the quantity $\langle \Phi^{(q)}, \tilde c_m \,\Psi^{(p)}\rangle$ can be written as a linear combination of $(c_m(z_i))_{i\in I}$ where
$I$ is a finite set and $z_i$ are given points in $Q(A)$. Therefore, for any $\Phi^{(q)}\in\vee^{alg,q}Q(A)$ and $\Psi^{(p)}\in \vee^{alg,p}Q(A)$ the sequence $(\langle \Phi^{(q)}, \tilde c_m \,\Psi^{(p)}\rangle)_{m\in\N}$ is convergent.
 Since $(||\tilde c_m||_{\mathcal{L}(\mathfrak{Q}_p,\mathfrak{Q}_q')})_{m\in\N}$  is bounded, one can prove  by  an $\eta/3$-argument that $\tilde c_m$ converges weakly to an operator $\tilde c\in \mathcal{L}(\mathfrak{Q}_{p},\mathfrak{Q}_{q}')$, i.e.:
 \begin{equation}
 \label{convsy}
 \langle \Phi^{(q)}, \tilde c_m \,\Psi^{(p)}\rangle \underset{{m\to\infty}}{\to} \langle \Phi^{(q)}, \tilde c \,\Psi^{(p)}\rangle\,,\quad  \forall \Phi^{(q)}\in \mathfrak{Q}_q,\forall\Psi^{(p)}\in \mathfrak{Q}_p\,.
 \end{equation}
Hence, $c(z)=\langle z^{\otimes q}, \tilde c z^{\otimes p}\rangle$ and belongs to
$\mathcal Q_{p,q}(A)$. As a consequence of \eqref{convsy}, the operator
$\tilde c_m\otimes 1^{(n-p)}$ converges also weakly to $\tilde c\otimes 1^{(n-p)}$ in
$\mathcal{L}(\mathfrak{Q}_{n},\mathfrak{Q}_{n-p+q}')$ and the convergence of
$c_m^{Wick}$ towards $c^{Wick}$ follows.\\
(v)  The relation \eqref{wicktrans} is already proved in \cite[Proposition 2.10]{AmNi1} for
symbols $b\in \mathcal P_{p,q}(\mathcal Z_0)$. In order to extend it to the class
$\mathcal Q_{p,q}(A)$ it is enough to use the approximation argument provided by (iv). Let  $\chi \in \mathcal C_{0}^{\infty}(\R)$ such that $\chi(x)=1$ if $\|x\|\leq{1},$ $\chi(x)=0$ if $\|x\|\geq{2}$ and $0\leq{\chi}\leq{1}$. We denote for $m \in \N$, $\chi_{m}(x)=\chi(\frac{x}{m})$. Let $b\in \mathcal Q_{p,q}(A)$ and consider the sequence of symbols
$$
c_m(z)= \langle z^{\otimes q}, \chi_m(H_q^0) \,\tilde b\, \chi_m(H_p^0)\, z^{\otimes p}\rangle
\in \mathcal P_{p,q}(\mathcal Z_0)\subset \mathcal Q_{p,q}(A)\,.
$$
The use of \cite[Proposition 2.10]{AmNi1} yields for any $\Phi^{(n_1)}\in\mathfrak{Q}_{n_1}$ and $\Psi^{(n_2)}\in\mathfrak{Q}_{n_2}$,
\begin{equation}
\label{eq.aptrans}
\langle \Phi^{(n_1)}, c_m^{Wick} \mathcal W(\frac{\sqrt{2}}{i\varepsilon} \xi ) \Psi^{(n_2)}\rangle
=\langle \Phi^{(n_1)}, \mathcal W(\frac{\sqrt{2}}{i\varepsilon} \xi ) \, c_m(z + \xi)^{Wick} \, \Psi^{(n_2)}\rangle\,.
\end{equation}
Now, it is easy to check that
$$
c_m\overset{b}{\to}b\,, \quad \text{ in } \mathcal Q_{p,q}(A).
$$
Moreover $c_m(\cdot+\xi)\in \oplus^{alg}_{k,l\geq 0}\mathcal P_{k,l}(\mathcal Z_0)$,
\begin{eqnarray*}
c_m(z+\xi)&=&\langle (z+\xi)^{\otimes q}, \tilde c_m\,(z+\xi)^{\otimes p}\rangle\\
&=& \displaystyle\sum_{\underset{0\leq j\leq p}{0\leq i\leq q}}
C_q^i C_p^j \, \langle z^{\otimes (q-i)}\otimes \xi^{\otimes i}, \mathcal S_{q}\,\tilde c_m\,\mathcal S_p \, z^{\otimes (p-j)}\otimes \xi^{\otimes j}\rangle\\
&=:& \displaystyle\sum_{\underset{0\leq j\leq p}{0\leq i\leq q}}
C_q^i C_p^j \, c_m^{(i,j)}(z)\,.
\end{eqnarray*}
So, it is clear that each monomial $ c_m^{(i,j)}$ in the above sum $b$-converges to
$ b^{(i,j)}=\langle z^{\otimes (q-i)}\otimes \xi^{\otimes i}, \mathcal S_{q}\,\tilde b\,\mathcal S_p \, z^{\otimes (p-j)}\otimes \xi^{\otimes j}\rangle$ since $\tilde c_m$ converges weakly to $\tilde b$ in $\mathcal{L}(\mathfrak{Q}_{p},\mathfrak{Q}_{q}')$. Remark also that  Proposition \ref{pr.invsew} shows for any $r\in\N$ that the $r^{th}$ components of the following coherent vectors satisfy
$$
\big[\mathcal W(\frac{\sqrt{2}}{i\varepsilon} \xi ) \Psi^{(n_2)}\big]^{(r)}\in \mathfrak{Q}_{r}\,\quad \text{ and } \quad
 \big[\mathcal W(\frac{\sqrt{2}}{i\varepsilon} \xi )^* \Phi^{(n_1)}\big]^{(r)}\in \mathfrak{Q}_{r}\,.
$$
Therefore using (iv) and taking the limit $m\to\infty$ in \eqref{eq.aptrans} proves the claimed identity.
\end{proof}

\bigskip
\noindent
\textit{A regularity property of Weyl operators}:
It is convenient to recall the following regularity property for the Weyl operators. Remember that the operator $d\Gamma(A)+\mathbf{N}$ is non-negative and self-adjoint on the symmetric Fock space satisfying
$$
d\Gamma(A)+\mathbf{N}_{|\vee^N\mathcal Z_0}=H_N^0+1\,, \quad \text{ when } \varepsilon=\frac{1}{N}\,.
$$
 Moreover, $d\Gamma(A)+\mathbf{N}$  has  an invariant form domain with respect to the Weyl operator $\mathcal W(\xi)$ when $\xi\in Q(A)$. This propriety can be proved using the Faris-Lavine argument \cite{FL} and it is  proved for instance in \cite{AmBr}.
\begin{prop}
\label{pr.invsew}
For any $\xi \in Q(A)$ the form domain $Q(d\Gamma(A)+\mathbf{N})$ is invariant with respect to the Weyl operator $\mathcal W(\xi)$. Moreover,  there exists uniformly in $\varepsilon\in(0,\bar \varepsilon)$  a constant $C:=C(\xi)>0$ such that
\begin{equation}
\label{eq.sewe}
\|(d\Gamma(A)+\mathbf{N})^{\frac{1}{2}}\mathcal W(
\xi)(d\Gamma(A)+\mathbf{N}+1)^{-\frac{1}{2}}\|_{\Gamma_s(\mathcal Z_{0})}\leq C\,,
\end{equation}
and in particular for any $\Psi^{(N)}\in Q(H_N^0)$, $\varepsilon=\frac{1}{N}$,
$$
\bigg\|(H_{N-1}^0+1)^{1/2}[\mathcal W(\xi)\Psi^{(N)}]^{(N-1)}\bigg\|\leq C \bigg\|
(H_N^0+1)^{1/2}\Psi^{(N)} \bigg\|\,,
$$
where $[\mathcal W(\xi)\Psi^{(N)}]^{(N-1)}$ denotes the $(N-1)^{th}$ component of $\mathcal W(\xi)\Psi^{(N)}\in
\Gamma_s(\mathcal Z_0)$.
\end{prop}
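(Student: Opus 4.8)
The plan is to reduce the statement to the quantitative bound \eqref{eq.sewe} and to prove the latter by a Faris--Lavine commutator/Gronwall argument on the Fock space. Write $M:=d\Gamma(A)+\mathbf N$, a non-negative self-adjoint operator on $\Gamma_s(\mathcal Z_0)$ with $M_{|\vee^k\mathcal Z_0}=H_k^0+1$ (for $\varepsilon=\frac1N$), and set $\Phi(\xi):=\frac{1}{\sqrt2}\big(a(\xi)+a^*(\xi)\big)$, which is essentially self-adjoint (being $\mathbf N^{1/2}$-bounded) and generates the Weyl operator, $\mathcal W(\xi)=e^{i\Phi(\xi)}$. Once \eqref{eq.sewe} is known, invariance of the form domain $Q(M)=Q(d\Gamma(A)+\mathbf N)$ under $\mathcal W(\xi)$ follows by applying the bound to $\xi$ and to $-\xi$, using $\mathcal W(-\xi)=\mathcal W(\xi)^{-1}$ and $\|{-\xi}\|_{Q(A)}=\|\xi\|_{Q(A)}$; and the last displayed inequality follows from \eqref{eq.sewe} together with unitarity of $\mathcal W(\xi)$ by restricting to the sectors $\vee^{N-1}\mathcal Z_0$ and $\vee^{N}\mathcal Z_0$ and using $M_{|\vee^k\mathcal Z_0}=H_k^0+1$, so I will not dwell on it.

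The heart of the matter is a commutator estimate. For $\xi\in D(A)$ the $\varepsilon$-CCR give, on the algebraic Fock space of finite-particle-number vectors with components in $D(A)$ (a core for $M$), the relations $[d\Gamma(B),a^\#(\xi)]=\pm\varepsilon\,a^\#(B\xi)$ for $B=A$ and $B=\mathrm{Id}$, whence
\begin{equation*}
[M,\Phi(\xi)]=\frac{\varepsilon}{\sqrt2}\big(a^*((A+1)\xi)-a((A+1)\xi)\big).
\end{equation*}
Since $\langle u,i[M,\Phi(\xi)]u\rangle=\sqrt2\,\varepsilon\,\Im\langle u,a((A+1)\xi)u\rangle$, only the annihilation operator enters; writing $(A+1)\xi=(A+1)^{1/2}\big((A+1)^{1/2}\xi\big)$ and using the standard bound $\|a\big((A+1)^{1/2}\zeta\big)u\|\le\|\zeta\|_{\mathcal Z_0}\,\|M^{1/2}u\|$ (which follows on each symmetric $n$-particle sector from $\langle u,(A+1)_1u\rangle=\frac{1}{n\varepsilon}\langle u,Mu\rangle$ and Cauchy--Schwarz), one gets the form bound
\begin{equation*}
\pm\, i[M,\Phi(\xi)]\ \le\ c(\xi)\,(M+1),\qquad c(\xi):=\frac{\varepsilon}{\sqrt2}\,\|\xi\|_{Q(A)}\ \le\ \frac{\bar\varepsilon}{\sqrt2}\,\|\xi\|_{Q(A)},
\end{equation*}
uniformly in $\varepsilon\in(0,\bar\varepsilon)$; this is the hypothesis of the Faris--Lavine theorem, and it morally expresses that $\mathcal W(\xi)^*M\mathcal W(\xi)=M+(\text{a field term with coefficient }(A+1)\xi)+(\text{a scalar})$.

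To pass from this to \eqref{eq.sewe} I would regularize, following \cite{FL}. Put $M_\delta:=M(1+\delta M)^{-1}$ (bounded, $M_\delta\uparrow M$ as $\delta\downarrow0$), so that $[M_\delta,\Phi(\xi)]=(1+\delta M)^{-1}[M,\Phi(\xi)](1+\delta M)^{-1}$ still satisfies $\pm\, i[M_\delta,\Phi(\xi)]\le c(\xi)(M_\delta+1)$ uniformly in $\delta$. For $u$ in the core above, $g_\delta(t):=\langle\mathcal W(t\xi)u,(M_\delta+1)\mathcal W(t\xi)u\rangle$ is $C^1$ (because $M_\delta$ is bounded and $t\mapsto\mathcal W(t\xi)u$ is norm-$C^1$ with derivative $i\Phi(\xi)\mathcal W(t\xi)u$), with $g_\delta'(t)=\langle\mathcal W(t\xi)u,i[M_\delta,\Phi(\xi)]\mathcal W(t\xi)u\rangle$, hence $|g_\delta'(t)|\le c(\xi)g_\delta(t)$ and $g_\delta(1)\le e^{c(\xi)}g_\delta(0)$. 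Letting $\delta\downarrow0$ by monotone convergence yields $\langle\mathcal W(\xi)u,(M+1)\mathcal W(\xi)u\rangle\le e^{c(\xi)}\langle u,(M+1)u\rangle$ for $u$ in the core, hence for all $u\in Q(M)$ by density and unitarity of $\mathcal W(\xi)$; this is \eqref{eq.sewe}, with constant depending only on $\|\xi\|_{Q(A)}$ and uniform in $\varepsilon$. It then remains to remove the restriction $\xi\in D(A)$: for $\xi\in Q(A)$ pick $\xi_n\in D(A)$ with $\xi_n\to\xi$ in $Q(A)$; by \cite[Lemma 3.1]{AmNi1}, $\mathcal W(\xi_n)u\to\mathcal W(\xi)u$ in $\Gamma_s(\mathcal Z_0)$ while $\|(M+1)^{1/2}\mathcal W(\xi_n)u\|\le e^{c(\xi_n)/2}\|(M+1)^{1/2}u\|$ stays bounded, so $\mathcal W(\xi)u\in Q(M)$ with the same bound by weak lower semicontinuity.

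The step I expect to be the main obstacle is the legitimacy of the Gronwall differentiation: a priori $[M,\Phi(\xi)]$ is only a quadratic form on a core, and one cannot differentiate $t\mapsto\langle\mathcal W(t\xi)u,M\mathcal W(t\xi)u\rangle$ directly, precisely because whether $\mathcal W(t\xi)u$ lies in $Q(M)$ is what one is trying to prove. The regularization by $M_\delta$ together with the $\delta$-uniform relative form bound is what resolves this; everything else---the CCR identities, the number-operator estimate, the density/semicontinuity argument, and the sector restriction for the last inequality---is routine, and the whole scheme is that of \cite{FL}, carried out for these Fock-space operators as in \cite{AmBr}.
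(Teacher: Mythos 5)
Your proposal is correct and is essentially the argument the paper has in mind: the paper does not write out a proof but simply invokes the Faris--Lavine commutator argument as carried out in \cite{AmBr}, and your commutator bound $\pm i[M,\Phi(\xi)]\leq c(\xi)(M+1)$ with $c(\xi)\lesssim \varepsilon\|\xi\|_{Q(A)}$, the regularization $M_\delta=M(1+\delta M)^{-1}$, the Gronwall step, and the density/lower-semicontinuity extension to $\xi\in Q(A)$ reproduce exactly that scheme, with the uniformity in $\varepsilon\in(0,\bar\varepsilon)$ correctly tracked. The reduction of the sector inequality and of the form-domain invariance to \eqref{eq.sewe} follows the paper's own normalization identity for $d\Gamma(A)+\mathbf{N}$ on $\bigvee^{N}\mathcal Z_0$, so nothing essential is missing.
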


\subsection{Relationship with Wigner measures}
\label{sub.app}
Wigner measures are defined through Weyl operators nevertheless it is important  for the mean-field problem to draw the link with Wick quantization. Their relationship is clarified by the following Proposition proved in \cite[Theorem 6.13]{AmNi1} and \cite[Corollary 6.14]{AmNi1}.

\begin{prop}
\label{eq.wigcompact}
Let $\{|\Psi^{(N)}\rangle \langle \Psi^{(N)}|\}_{N \in \N}$ be a sequence of normal states on $\vee^{N}\mathcal Z_{0}$ satisfying:
\begin{equation*}
\exists  C>0, \forall N\in\N, \;\langle \Psi^{(N)},H_N^0\Psi^{(N)}\rangle \leq{CN}\,,
\end{equation*}
and
$$
\mathcal{M}(|\Psi^{(N)}\rangle \langle \Psi^{(N)}|,N \in \N)=\{ \mu \}.
$$
 Then, for any $ b \in \oplus_{p,q\geq 0}^{alg}\mathcal
P^{\infty}_{p,q}(\mathcal Z_{0})$,
\begin{eqnarray*}
&&\lim_{\underset{\varepsilon N=1}{N \to +\infty}}\langle \Psi^{(N)},b^{Wick}\Psi^{(N)}\rangle=\int_{\mathcal Z_{0}}b(z)\;d\mu(z)\,,\\
&&  \lim_{\underset{\varepsilon N=1}{N \to +\infty}}\langle \Psi^{(N)}, \mathcal W(\xi)\, b^{Wick}\Psi^{(N)}\rangle=\int_{\mathcal Z_{0}} e^{i\Re\langle z, \xi\rangle} \,b(z)\;d\mu(z)\;,
\end{eqnarray*}
\end{prop}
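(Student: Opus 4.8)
This Proposition is precisely \cite[Theorem 6.13]{AmNi1} together with \cite[Corollary 6.14]{AmNi1}; what must be checked is only that the hypothesis made here implies the moment bounds required there. With $\varepsilon=\frac1N$ one has $d\Gamma(A)_{|\vee^{N}\mathcal Z_{0}}=\varepsilon H_{N}^{0}$ and $\mathbf N_{|\vee^{N}\mathcal Z_{0}}=\varepsilon N=1$, so the assumption $\langle\Psi^{(N)},H_{N}^{0}\Psi^{(N)}\rangle\leq CN$ yields $\langle\Psi^{(N)},d\Gamma(A)\Psi^{(N)}\rangle\leq C$ and $\langle\Psi^{(N)},\mathbf N^{k}\Psi^{(N)}\rangle=1$ for every $k\in\N$; these are exactly the uniform bounds used in those references. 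We outline the argument for completeness.

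By linearity it suffices to treat a single monomial $b\in\mathcal P^{\infty}_{p,q}(\mathcal Z_{0})$. The first ingredient is a uniform Wick estimate: the combinatorial coefficient in \eqref{def.wicko} behaves like $(n\varepsilon)^{(p+q)/2}$, which equals $1$ when $\varepsilon=\frac1n$, so that $\|b^{Wick}\Psi^{(N)}\|\leq C_{p,q}\,\|\tilde b\|_{\mathcal L(\vee^{p}\mathcal Z_{0},\vee^{q}\mathcal Z_{0})}$ uniformly in $N$ (equivalently $\|b^{Wick}(\mathbf N+1)^{-(p+q)/2}\|\leq C_{p,q}\|\tilde b\|$ uniformly in $\varepsilon$). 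The second ingredient is that, by \cite[Theorem 6.2]{AmNi1}, the uniform number bound already forces $\mu$ to be a genuine Borel probability measure with $\int_{\mathcal Z_{0}}\|z\|_{\mathcal Z_{0}}^{2k}\,d\mu(z)\leq 1$ for all $k$; in particular $\int_{\mathcal Z_{0}}|b(z)|\,d\mu(z)<\infty$ (the kinetic-energy bound then refines this to $\int\|z\|_{Q(A)}^{2}\,d\mu\leq C$, cf.\ Proposition \ref{pr.transportapriori}, which is what is used elsewhere in the paper).

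The heart of the proof is a truncation to finitely many degrees of freedom, made legitimate by the compactness of $\tilde b$. Choose finite rank operators $\tilde b_{m}\to\tilde b$ in operator norm and set $b_{m}(z)=\langle z^{\otimes q},\tilde b_{m}z^{\otimes p}\rangle$, a monomial depending only on a fixed finite-dimensional subspace $\mathfrak p\mathcal Z_{0}$. The Wick estimate bounds $|\langle\Psi^{(N)},(b-b_{m})^{Wick}\Psi^{(N)}\rangle|$ by $C_{p,q}\|\tilde b-\tilde b_{m}\|$ uniformly in $N$, and the moment bound together with Cauchy--Schwarz bounds $|\int_{\mathcal Z_{0}}(b-b_{m})\,d\mu|$ by $C\|\tilde b-\tilde b_{m}\|$; inserting $\mathcal W(\xi)$, with $\|\mathcal W(\xi)\|=1$, changes nothing. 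For the cylindrical monomials $b_{m}$ one pushes everything down to $\mathfrak p\mathcal Z_{0}$: the projections of $|\Psi^{(N)}\rangle\langle\Psi^{(N)}|$ form a tight family of $\tfrac1N$-quantized states on the finite-dimensional space $\mathfrak p\mathcal Z_{0}$ whose characteristic functions converge, by Definition \ref{de.wigmeas}, to that of $\mathfrak p_{\sharp}\mu$, and finite-dimensional semiclassical calculus (passing between Wick, Weyl and anti-Wick quantizations with an $O(\varepsilon)$ remainder) gives $\langle\Psi^{(N)},b_{m}^{Wick}\Psi^{(N)}\rangle\to\int b_{m}\,d\mu$ and $\langle\Psi^{(N)},\mathcal W(\xi)b_{m}^{Wick}\Psi^{(N)}\rangle\to\int e^{i\Re\langle z,\xi\rangle}b_{m}(z)\,d\mu(z)$. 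A $3\eta$-argument then combines the two approximation errors with this finite-dimensional limit.

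The step I expect to be the main obstacle is precisely the interchange of the limit $N\to\infty$ with the finite rank truncation: one needs the Wick estimate to hold uniformly in $N$ so that the truncation error is controlled before passing to the limit, and one needs the compactness of $\tilde b$ (together with the tightness of the projected states coming from the number and energy bounds) for the finite-dimensional reduction to produce the correct limit; once those are in place, the remaining finite-dimensional semiclassical analysis is classical.
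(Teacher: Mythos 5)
Your proposal is correct and takes essentially the same route as the paper: the paper offers no independent proof here, but simply invokes \cite[Theorem 6.13]{AmNi1} and \cite[Corollary 6.14]{AmNi1}, and your observation that on the $N$-particle sector with $\varepsilon=\frac1N$ one has $\mathbf N_{|\vee^{N}\mathcal Z_{0}}=1$ (so all number moments are trivially uniform) together with $\langle\Psi^{(N)},d\Gamma(A)\Psi^{(N)}\rangle\leq C$ is precisely the hypothesis check that justifies the citation. Your sketch of the compact-kernel truncation and finite-dimensional semiclassical limit is consistent with the argument of the cited results.
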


The following a priori estimate is a consequence of \cite[Proposition 3.11]{AmNi4},
\cite[Lemma 3.13]{AmNi4}, \cite[Lemma 2.14]{AmNi3} and \cite[Lemma 3.12]{AmNi4}.
\begin{prop}
\label{pr.transportapriori}
Let $\{|\Psi^{(N)}\rangle \langle \Psi^{(N)}|\}_{N\in \N}$ a sequence  of normal states  on $\vee^{n}\mathcal Z_{0}$ satisfying:
\begin{equation*}
\exists  C>0, \forall N\in\N, \;\langle \Psi^{(N)},H_N^0\Psi^{(N)}\rangle \leq{CN}\,,
\end{equation*}
and
$$
\mathcal{M}(|\Psi^{(N)}\rangle \langle \Psi^{(N)}|,N \in \N)=\{ \mu \}.
$$
Then the Wigner measure $\mu$ is carried by $Q(A)$ (i.e.: $\mu(Q(A))=1$) and its restriction to $Q(A)$ is a Borel probability measure on $(Q(A),||\cdot||_{Q(A)})$ fulfilling
\begin{eqnarray*}
&&\int_{\mathcal{Z}_{0}} \|z\|^{2}_{Q(A)}\,d\mu(z)
\leq C\,, \\
\text{and } && \mu(B_{\mathcal Z_0}(0,1))=1\,,
\end{eqnarray*}
where $B_{\mathcal Z_0}(0,1)$ is the unit ball of $\mathcal Z_0$.
\end{prop}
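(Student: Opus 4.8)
The plan is to probe the (unique) Wigner measure $\mu$ with the Wick observables $\langle z,Tz\rangle$ for $T=\mathrm{Id}$ and $T=A$, together with the powers $\|z\|_{\mathcal Z_0}^{2k}$, and to convert the uniform quantum bounds into $\mu$-integrability by a Fatou-type lower semicontinuity. The starting observation is a change of scale: with $\varepsilon=\frac1N$ one has $d\Gamma(A)_{|\vee^N\mathcal Z_0}=\varepsilon H_N^0=\frac1N H_N^0$ and $\mathbf N_{|\vee^N\mathcal Z_0}=\mathrm{Id}$, so the hypothesis reads $\langle\Psi^{(N)},d\Gamma(A)\Psi^{(N)}\rangle\le C$ uniformly in $N$ while $\langle\Psi^{(N)},\mathbf N^k\Psi^{(N)}\rangle=1$ for all $k$. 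In the symbol classes of Appendix~\ref{se.app}, $\langle z,Az\rangle\in\mathcal Q_{1,1}(A)$ has $\langle z,Az\rangle^{Wick}=d\Gamma(A)$, $\|z\|_{\mathcal Z_0}^2\in\mathcal P_{1,1}(\mathcal Z_0)$ has $(\|z\|^2)^{Wick}=\mathbf N$, and $\|z\|_{\mathcal Z_0}^{2k}\in\mathcal P_{k,k}(\mathcal Z_0)$ (with $\tilde b=\mathrm{Id}_{\vee^k}$) has, by \eqref{def.wicko}, a Wick quantization acting on $\vee^N\mathcal Z_0$ as the scalar $\prod_{j=0}^{k-1}(1-\tfrac jN)\le1$.

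The heart of the argument is the inequality
\[
\int_{\mathcal Z_0}\langle z,Tz\rangle\,d\mu(z)\ \le\ \liminf_{N\to\infty}\langle\Psi^{(N)},d\Gamma(T)\Psi^{(N)}\rangle ,
\]
valid for every non-negative self-adjoint $T$ on $\mathcal Z_0$ (and the analogue for $\|z\|^{2k}$ against its Wick quantization). I would prove it by a two-step approximation. First, reduce to bounded $T$ through the spectral truncations $T_m=\int_{[0,m]}\lambda\,dE^T_\lambda$: since $T_m\le T$ one has $d\Gamma(T_m)\le d\Gamma(T)$, and $\int\langle z,T_mz\rangle\,d\mu\uparrow\int\langle z,Tz\rangle\,d\mu$ by monotone convergence, the integrand being the $[0,+\infty]$-valued Borel function $\|T^{1/2}z\|^2$. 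Second, for bounded $T\ge0$ approximate by the finite-rank operators $T_n:=T^{1/2}P_nT^{1/2}$, where $(P_n)$ is an increasing sequence of finite-rank orthogonal projections converging strongly to $\mathrm{Id}$; here $T-T_n=T^{1/2}(\mathrm{Id}-P_n)T^{1/2}\ge0$, so $d\Gamma(T_n)\le d\Gamma(T)$, while $\langle z,T_nz\rangle=\|P_nT^{1/2}z\|^2$ increases to $\langle z,Tz\rangle$. Since $T_n$ is finite rank, $\langle z,T_nz\rangle$ is a compact symbol, so Proposition~\ref{eq.wigcompact} gives $\lim_N\langle\Psi^{(N)},d\Gamma(T_n)\Psi^{(N)}\rangle=\int\langle z,T_nz\rangle\,d\mu$; bounding $d\Gamma(T_n)\le d\Gamma(T)$ and then letting $n\to\infty$ by monotone convergence yields the displayed inequality. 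This is exactly the mechanism distributed over \cite[Prop.~3.11, Lemma~3.12, Lemma~3.13]{AmNi4} and \cite[Lemma~2.14]{AmNi3}; the same squeezing with the projections $P_n^{\otimes k}\le\mathrm{Id}_{\vee^k}$ on $\vee^k\mathcal Z_0$ handles the powers $\|z\|^{2k}$.

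Applying this with $T=A$ and using the bound propagated by Proposition~\ref{pr.invhn} gives $\int\langle z,Az\rangle\,d\mu\le C$; applied to the powers it gives $\int\|z\|_{\mathcal Z_0}^{2k}\,d\mu\le1$ for every $k$. By Markov's inequality $\mu(\{\|z\|_{\mathcal Z_0}>r\})\le r^{-2k}$ for all $r>0$ and $k\in\N$, so $\mu$ is carried by the unit ball $B_{\mathcal Z_0}(0,1)$ (see also \cite[Thm.~6.2]{AmNi1}), and in particular $\int\|z\|_{\mathcal Z_0}^2\,d\mu\le1$; hence $\int\|z\|_{Q(A)}^2\,d\mu=\int\langle z,(A+1)z\rangle\,d\mu\le C+1$, which is the asserted estimate after an inessential relabeling of the constant. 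Since $\|z\|_{Q(A)}^2=\langle z,(A+1)z\rangle$ equals $+\infty$ precisely when $z\notin Q(A)$, finiteness of this integral forces $\mu(Q(A))=1$, and because the inclusion $(Q(A),\|\cdot\|_{Q(A)})\hookrightarrow\mathcal Z_0$ is continuous the restriction of $\mu$ to $Q(A)$ is a Borel probability measure on $(Q(A),\|\cdot\|_{Q(A)})$ carried by $B_{\mathcal Z_0}(0,1)\cap Q(A)$.

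I expect the lower-semicontinuity step to be the main obstacle: the observables that actually carry the energy and normalization data — $A$ itself, $\mathbf N$, and its powers — are neither bounded nor compact symbols, so Proposition~\ref{eq.wigcompact} cannot be invoked directly, and one must interpose the double approximation (spectral truncation of $A$, then the finite-rank squeeze $T^{1/2}P_nT^{1/2}\le T$) and justify the monotone passages to the limit together with the operator inequalities $d\Gamma(T_n)\le d\Gamma(T)$ on the relevant form domains; everything else is bookkeeping with the uniform bound $\langle\Psi^{(N)},d\Gamma(A)\Psi^{(N)}\rangle\le C$.
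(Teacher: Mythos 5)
Your proof is correct in substance, and it is worth noting that the paper itself offers no argument for Proposition \ref{pr.transportapriori}: it simply delegates to \cite[Prop.~3.11, Lemmas~3.12--3.13]{AmNi4} and \cite[Lemma~2.14]{AmNi3}. What you reconstruct is essentially the content of those references, and your key Fatou-type inequality $\int\langle z,Tz\rangle\,d\mu\le\liminf_N\langle\Psi^{(N)},d\Gamma(T)\Psi^{(N)}\rangle$ is the paper's own Proposition \ref{eq.wigorder} specialized to $T=A$ and to the symbols $\|z\|_{\mathcal Z_0}^{2k}$. Your route is in fact slightly cleaner: you pass to the limit by monotone convergence along increasing compact truncations $T^{1/2}P_nT^{1/2}$ (and $\mathcal S_kP_n^{\otimes k}\mathcal S_k\le\mathrm{Id}_{\vee^k}$), whereas the paper's proof of Proposition \ref{eq.wigorder} uses dominated convergence and for that \emph{invokes} Proposition \ref{pr.transportapriori}; by not using \ref{eq.wigorder} you avoid that circularity. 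The combination with Proposition \ref{eq.wigcompact} for the finite-rank pieces, the scalar value $\prod_{j=0}^{k-1}(1-j/N)\le 1$ of $(\|z\|^{2k})^{Wick}$ on $\vee^N\mathcal Z_0$, and Markov's inequality is exactly the intended mechanism. Your remark that the constant comes out as $C+1$ rather than $C$ is right and unavoidable (the literal bound $\le C$ cannot hold for arbitrarily small $C$, e.g.\ for $\Psi^{(N)}=z_0^{\otimes N}$ with $Az_0=0$), and the reference to Proposition \ref{pr.invhn} is superfluous here since the hypothesis already supplies the uniform kinetic bound.

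Two small points deserve sharper wording, though neither is a fatal gap. First, your Markov argument yields $\mu(\{\|z\|_{\mathcal Z_0}>1\})=0$, i.e.\ concentration on the \emph{closed} unit ball; that is the only reading under which the statement can be true (for $\Psi^{(N)}=z_0^{\otimes N}$, $\|z_0\|=1$, the measure lives on the unit sphere), so despite the paper's earlier convention that $B_{\mathcal Z_0}(0,1)$ is open, your conclusion matches what is actually provable. Second, the assertion that the restriction of $\mu$ is a Borel probability measure on $(Q(A),\|\cdot\|_{Q(A)})$ does not follow from continuity of the embedding $Q(A)\hookrightarrow\mathcal Z_0$ alone: continuity only shows that traces of $\mathcal Z_0$-Borel sets are $Q(A)$-Borel, while what is needed is the converse inclusion $\mathcal B_{Q(A)}\subset\mathcal B_{\mathcal Z_0}\cap Q(A)$. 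This holds because $(Q(A),\|\cdot\|_{Q(A)})$ is a separable Hilbert space and its closed balls $\{z:\|z-z_0\|_{Q(A)}\le r\}$ are Borel in $\mathcal Z_0$ (the squared graph norm being a countable supremum of $\mathcal Z_0$-continuous functions via the spectral truncations you already use), or alternatively by the Lusin--Souslin theorem applied to the continuous injection; adding one sentence to this effect closes the argument.
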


Some kind of a Fatou's lemma for Wigner measures holds true.
\begin{prop}
\label{eq.wigorder}
Let $\{|\Psi^{(N)}\rangle \langle \Psi^{(N)}|\}_{N \in \N}$ be a sequence of normal states on $\vee^{N}\mathcal Z_{0}$ satisfying:
\begin{equation*}
\exists  C>0, \forall N\in\N, \;\langle \Psi^{(N)},H_N^0\Psi^{(N)}\rangle \leq{CN}\,,
\end{equation*}
and
$$
\mathcal{M}(|\Psi^{(N)}\rangle \langle \Psi^{(N)}|,N \in \N)=\{ \mu \}.
$$
Then  for any $b\in\mathcal Q_{p,p}(A)$ such that  $\tilde b\geq 0$,
$$
\liminf_{\underset{\varepsilon N=1}{N \to +\infty}}\; \langle \Psi^{(N)},b^{Wick}\Psi^{(N)}\rangle\geq
\int_{\mathcal Z_{0}}b(z)\;d\mu(z)\,.
$$
\end{prop}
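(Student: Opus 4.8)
The plan is to approximate $b$ from below by \emph{compact} symbols in $\mathcal P^{\infty}_{p,p}(\mathcal Z_0)$ — for which Proposition \ref{eq.wigcompact} gives exact convergence of $\langle\Psi^{(N)},\cdot^{Wick}\Psi^{(N)}\rangle$ towards $\mu$ — and then to let the approximation increase to $b$ and invoke monotone convergence. The first ingredient is the elementary fact that Wick quantization is \emph{order preserving} on nonnegative symbols: if $c\in\mathcal Q_{p,p}(A)$ has $\tilde c\geq 0$ as a sesquilinear form on $\mathfrak Q_p$, then $c^{Wick}\geq 0$ as a quadratic form on $\mathfrak Q_N$ for every $N$. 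Indeed, by Definition \ref{dfn.wickq}, $c^{Wick}_{|\mathfrak Q_N}$ is a nonnegative multiple of $\mathcal S_N(\tilde c\otimes 1^{\otimes(N-p)})\mathcal S_N$, and $\tilde c\otimes 1^{\otimes(N-p)}\geq 0$ on $\otimes^{alg,N}Q(A)$ by the orthonormal-basis computation already used in the proof of Lemma \ref{ses-qij} (keeping the positivity in place of the bound \eqref{A2}); conjugation by the orthogonal projection $\mathcal S_N$ then preserves positivity. Applying this to $c'-c$, one gets $c^{Wick}\leq (c')^{Wick}$ on $\mathfrak Q_N$ whenever $\tilde c\leq\tilde c'$ in the form sense.

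Next I would realize $\tilde b$ as a genuine self-adjoint operator on $\vee^p\mathcal Z_0$ and truncate it. The form $\psi\mapsto\langle\psi,\tilde b\,\psi\rangle$ on $\mathfrak Q_p$ is symmetric, nonnegative and bounded for the $\mathfrak Q_p$-norm, but it need not be closable on $\vee^p\mathcal Z_0$, so I would factor it: the operator $T:=(H_p^0+1)^{-1/2}\tilde b\,(H_p^0+1)^{-1/2}$ is bounded, self-adjoint and $\geq 0$ on $\vee^p\mathcal Z_0$ (since $\tilde b\geq 0$), and $S:=T^{1/2}(H_p^0+1)^{1/2}$ is a closed operator with $D(S)=\mathfrak Q_p$. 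By von Neumann's theorem $B:=S^*S$ is nonnegative self-adjoint with $D(B^{1/2})=D(S)=\mathfrak Q_p$ and $\|B^{1/2}\psi\|=\|S\psi\|$, and a direct computation gives $b(z)=\langle z^{\otimes p},\tilde b\,z^{\otimes p}\rangle=\|S z^{\otimes p}\|^2=\langle z^{\otimes p},B z^{\otimes p}\rangle$ for all $z\in Q(A)$. Now pick an orthonormal basis $(u_i)_{i\geq1}$ of $\vee^p\mathcal Z_0$ contained in $\mathfrak Q_p$ (Gram--Schmidt on a sequence dense in $\mathfrak Q_p$), set $Q_j:=\sum_{i\leq j}|u_i\rangle\langle u_i|\uparrow\Id$ strongly, and $\tilde b_j:=B^{1/2}Q_jB^{1/2}$. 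Since $u_i\in D(B^{1/2})$, $\tilde b_j$ is a finite-rank — hence compact — nonnegative operator on $\vee^p\mathcal Z_0$, so $b_j(z):=\langle z^{\otimes p},\tilde b_j z^{\otimes p}\rangle\in\mathcal P^{\infty}_{p,p}(\mathcal Z_0)$; moreover $\langle\psi,\tilde b_j\psi\rangle=\|Q_jB^{1/2}\psi\|^2\leq\|B^{1/2}\psi\|^2=\langle\psi,\tilde b\,\psi\rangle$ for $\psi\in\mathfrak Q_p$, with increase to $\langle\psi,\tilde b\,\psi\rangle$ as $j\to\infty$. Thus $0\leq\tilde b_j\leq\tilde b$ in the form sense and $0\leq b_j(z)\uparrow b(z)$ for every $z\in Q(A)$.

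Assembling: by the order-preservation step $b_j^{Wick}\leq b^{Wick}$ on $\mathfrak Q_N$, hence $\langle\Psi^{(N)},b^{Wick}\Psi^{(N)}\rangle\geq\langle\Psi^{(N)},b_j^{Wick}\Psi^{(N)}\rangle$; taking $\liminf$ and using Proposition \ref{eq.wigcompact} for the compact symbol $b_j$,
\begin{equation*}
\liminf_{\underset{\varepsilon N=1}{N\to+\infty}}\langle\Psi^{(N)},b^{Wick}\Psi^{(N)}\rangle\;\geq\;\lim_{\underset{\varepsilon N=1}{N\to+\infty}}\langle\Psi^{(N)},b_j^{Wick}\Psi^{(N)}\rangle\;=\;\int_{\mathcal Z_0}b_j(z)\,d\mu(z).
\end{equation*}
Since $\mu(Q(A))=1$ by Proposition \ref{pr.transportapriori} and $0\leq b_j\uparrow b$ pointwise on $Q(A)$, the monotone convergence theorem lets $j\to\infty$ on the right-hand side and yields the claim (with the convention that $\int_{\mathcal Z_0}b\,d\mu$ may be $+\infty$).

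The step I expect to be the main obstacle is exactly this approximation of $b$ from below by compact symbols: a general $\tilde b\in\mathcal Q_{p,p}(A)$ with $\tilde b\geq 0$ is typically neither bounded nor compact on $\vee^p\mathcal Z_0$, and the obvious cut-offs (spectral cut-offs of $H_p^0$ sandwiching $\tilde b$) are not dominated by $\tilde b$, so they do not give $b_m^{Wick}\leq b^{Wick}$. Passing through the self-adjoint operator $B=S^*S$ with $D(B^{1/2})=\mathfrak Q_p$ and truncating via $B^{1/2}Q_jB^{1/2}$ — rather than $Q_j\tilde b\,Q_j$ — is what simultaneously achieves compactness, monotonicity in $j$, and domination by $\tilde b$.
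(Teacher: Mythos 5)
Your strategy is essentially the paper's: represent the nonnegative symbol by a self-adjoint operator $B\geq 0$, approximate it from below by compact operators, apply Proposition \ref{eq.wigcompact} to the compact approximants, and pass to the limit using the a priori information on $\mu$. The differences are cosmetic: you truncate by the finite-rank operators $B^{1/2}Q_jB^{1/2}$ and conclude by monotone convergence, while the paper first cuts off with $\chi_m(B)B$, then takes compact $C_k$ with $0\leq C_k\leq \chi_m(B)B$ converging strongly, and concludes by dominated convergence via Proposition \ref{pr.transportapriori}; and you make the order-preservation of the Wick quantization explicit, whereas the paper uses the corresponding monotonicity $\langle \Psi^{(N)},C\otimes 1^{(N-p)}\Psi^{(N)}\rangle\leq\langle \Psi^{(N)},B\otimes 1^{(N-p)}\Psi^{(N)}\rangle$ implicitly (your reduction to symmetric vectors via $\mathcal S_N$ and the basis expansion of Lemma \ref{ses-qij} is the right justification).

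There is, however, a flaw in the step where you manufacture $B$. The operator $S=T^{1/2}(H_p^0+1)^{1/2}$ with domain $\mathfrak Q_p$ is in general \emph{not} closed: already for $\tilde b=\Id$ one finds $S=\Id_{|\mathfrak Q_p}$, which is not closed, and then $D(B^{1/2})=\vee^p\mathcal Z_0\neq\mathfrak Q_p$ (this last over-claim is harmless, since you only use $\mathfrak Q_p\subset D(B^{1/2})$ and $\|B^{1/2}\psi\|=\|S\psi\|$ there, which do hold for $B=\bar S^{\,*}\bar S$ once $\bar S$ exists). The genuine issue is that $S$ need not even be closable: closability of $S$ is equivalent to closability of the form $\psi\mapsto\langle\psi,\tilde b\,\psi\rangle$ on $\vee^p\mathcal Z_0$, which is not guaranteed by $\tilde b\in\mathcal{L}(\mathfrak Q_p,\mathfrak Q_p')$, $\tilde b\geq 0$. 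For instance with $p=1$, $\mathcal Z_0=L^2(\R)$, $A=-\Delta$ and $b(z)=|z(0)|^2$ the form is not closable, no such $B$ exists, and in fact every bounded nonnegative operator dominated by this $\tilde b$ vanishes, so any approximation of $b$ from below by bounded (a fortiori compact) symbols can only produce the trivial lower bound $0$. In other words, your factorization does not circumvent the non-closability you flagged at the outset; it requires it. To be fair, the paper's own proof rests on exactly the same unproved assertion (it declares the form $\langle\cdot,\tilde b\,\cdot\rangle$ on $Q(H_p^0)$ to be closed before invoking the representation theorem), so you have not introduced a gap that is absent from the paper; under that implicit closability hypothesis your argument is correct and parallels the paper's, but without it both proofs break at the same point.
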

\begin{proof}
Since $b\in\mathcal Q_{p,p}(A)$,  $b(z)=\langle z^{\otimes p}, \tilde b \, z^{\otimes p}\rangle $ with $\tilde b\in \mathcal{L}(\mathfrak{Q}_p,\mathfrak{Q}_p')$, $\tilde b\geq 0$, then the quadratic form
$$
(\Psi,\Phi)\in Q(H_p^0)\times Q(H_p^0)\to \langle \Psi, \tilde b \,\Phi\rangle\,,
$$
is closed and non-negative. Hence by \cite[Theorem VIII]{RS1} there exists a unique self-adjoint operator on $\vee^p\mathcal Z_0$, denoted by $B$, such that
$\langle \Psi, \tilde b \,\Phi\rangle=\langle \Psi, B \,\Phi\rangle$ for any $\Psi,\Phi\in
D(B)$ and $D(B)$ is dense in $Q(H_p^0)$. Moreover, the inequality $0\leq B\leq c \,H_p^0$ holds  in the sense of quadratic forms on $Q(H_p^0)\subset Q(B)$. So, when $\varepsilon=\frac{1}{N}$,
$$
\langle \Psi^{(N)},  b^{Wick} \Psi^{(N)}\rangle
=\frac{N!}{N^p (N-p)!}\langle \Psi^{(N)}, B\otimes 1^{(N-p)} \Psi^{(N)}\rangle \geq
 \frac{N!}{N^p (N-p)!} \langle\Psi^{(N)}, \chi_m(B)B\otimes 1^{(N-p)} \Psi^{(N)}\rangle\,,
$$
where $\chi_m$ is a suitable cutoff function such that $0\leq \chi_m \leq 1$ and $\chi_m\to 1$ when $m\to\infty$. For any compact operator $C$ on $\vee^p\mathcal Z_0$ satisfying $0\leq C \leq \chi_m(B)B$, one get
$$
\langle \Psi^{(N)},  b^{Wick} \Psi^{(N)}\rangle \geq \frac{N!}{N^p (N-p)!}
\langle \Psi^{(N)}, C\otimes 1^{(N-p)} \Psi^{(N)}\rangle\,.
$$
So using Proposition \ref{eq.wigcompact} one obtains
$$
\liminf_{\underset{\varepsilon N=1}{N\to\infty}}\langle \Psi^{(N)},  b^{Wick} \Psi^{(N)}\rangle \geq \int_{\mathcal Z_0} \langle z^{\otimes p}, C \, z^{\otimes p}\rangle\,d\mu,
$$
for any  non-negative compact operator $C$ such that $C\leq \chi_m(B) B$. Remark that there exists a sequence of such operators $C_k$ which converges strongly to $\chi_m(B) B$.
Therefore using Proposition  \ref{pr.transportapriori} and dominated convergence one obtains
 $$
\liminf_{\underset{\varepsilon N=1}{N\to\infty}}\langle \Psi^{(N)},  b^{Wick} \Psi^{(N)}\rangle \geq \int_{\mathcal Z_0} \langle z^{\otimes p}, B \, z^{\otimes p}\rangle\,d\mu=
\int_{Q(A)} b(z)\,d\mu\,.
$$
\end{proof}
\section{Measure valued solutions to continuity equation}
 In this appendix we recall the key result from \cite{W1} which provides uniqueness of solutions satisfying the Liouville equation  \eqref{liouv1}-\eqref{liouv2}. We adapt  
 \cite[Theorem 2.4]{W1} to the framework of Section \ref{se.prem-resut}.  
 Remember  that $A$ is a non-negative self-adjoint operator with $Q(A)$ its form domain and $Q'(A)$ its dual. We consider the following Liouville's equation on $\R$,
\begin{equation}
\label{eq.transport}
      \displaystyle\int_{\R}\int_{Q'(A)}\partial_{t}\varphi(t,x)+\Re \langle
  v_t(x),\nabla\varphi(t,x)\rangle_{Q'(A)} \; d\mu_{t}(x)\,dt=0, \quad \forall
    \varphi \in \mathcal{C}_{0,cyl}^{\infty}(\R\times Q'(A))\,,
\end{equation}
where $\mu_t$ belongs to $\mathfrak{P}(Q(A))$ and $v_t$ is the vector field, 
$$v_{t}(z):=-ie^{itA}\partial_{\bar z}q_{0}(e^{-itA}z): \R \times Q(A) \to \mathcal Z_{0},$$
with $\partial_{\bar z}q_{0}$ defined in \eqref{q0}. We shall also consider  the following Cauchy problem in $Q(A)$,
\begin{equation}
\label{eq.cauchyapp}
\partial_{t}\gamma(t)=v_{t}(\gamma(t)),\quad \gamma(0)=z \in Q(A)\,.
\end{equation}
Observe that \eqref{eq.cauchyapp} is equivalent to the mean-field equation \eqref{field-eq2} in the sense that  $\gamma\in\mathcal C(I,Q(A)) \cap \mathcal C^{1}(I,Q'(A))$ is a strong solution of \eqref{field-eq2} if and only if $\tilde \gamma(t)=e^{it A}\gamma(t)$ is a strong solution of 
 \eqref{eq.cauchyapp}. In particular, assumption \eqref{C1} provides the local well-posedness of the Cauchy problem \eqref{eq.cauchyapp} and we have a local flow denoted by $\tilde\Phi(t,0)$ and given as  $$\tilde\Phi(t,0)= e^{itA}\circ\Phi(t,0),$$ with $\Phi(t,0)$ the flow of the mean-field equation defined in \eqref{flow}.

\begin{thm}
\label{pr.D5}
Consider $A$ and $q_0$ as in Section \ref{se.prem-resut} and assume \eqref{A1} and \eqref{C1} are true. Let $t\in \R \to\tilde\mu_{t}\in \mathfrak{P}(Q(A))$ be a weakly narrowly continuous solution in $\mathfrak{P}(Q'(A))$ of the Liouville equation \eqref{eq.transport}. Let $I$ be the open interval  provided by \eqref{C1} and assume additionally that:
\begin{enumerate}
  \item [(i)] There exists  $C>0$ such that $ \displaystyle\int_I \int_{Q(A)} ||x||^2_{Q(A)} d\tilde\mu_t(x)dt\leq C$.
  \item [(ii)]  For all $t\in I$, $\tilde\mu_t(B_{\mathcal Z_0}(0,1))=1$.
\end{enumerate}
Then $\tilde\mu_t=\tilde\Phi(t,0)_\sharp\mu_s$ for all $t\in I$ with $\tilde\Phi(t,0)$ is the local flow of the initial value problem \eqref{eq.cauchyapp}.
\end{thm}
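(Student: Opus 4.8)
The plan is to obtain the statement from the abstract transport result \cite[Theorem 2.4]{W1} by checking that its structural hypotheses hold in the present framework; the work is essentially a translation of the data $A$, $q_0$, $v_t$, $(\tilde\mu_t)_t$ into the language of \cite{W1}, together with the verification of the relevant growth and regularity conditions. First I would record the elementary reduction already noted above: the unitary substitution $\gamma(t)\mapsto e^{-itA}\gamma(t)$ is a bijection between strong solutions of \eqref{eq.cauchyapp} and those of \eqref{field-eq2}, so assumption \eqref{C1} provides local existence, uniqueness and continuous dependence on initial data for \eqref{eq.cauchyapp} on the interval $I$, hence a well-defined continuous local flow $\tilde\Phi(t,0)=e^{itA}\circ\Phi(t,0)$ on $B_{\mathcal Z_0}(0,1)\cap Q(A)$. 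Conservation of the $\mathcal Z_0$-norm along \eqref{field-eq2}, and hence along \eqref{eq.cauchyapp} since $e^{-itA}$ is unitary, keeps the flow inside $B_{\mathcal Z_0}(0,1)$; this is precisely what allows \eqref{C1} to be applied with a single Lipschitz constant, independent of the curve.

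The core of the argument follows the strategy of \cite[Ch.~8]{AGS} as refined in \cite{W1}, namely a \emph{superposition (probabilistic representation) principle}. One shows that any weakly narrowly continuous family $(\tilde\mu_t)_{t\in I}$ in $\mathfrak{P}(Q(A))$ solving \eqref{eq.transport} and satisfying the moment bound (i) admits a representation $\tilde\mu_t=(e_t)_\sharp\eta$, where $e_t$ denotes evaluation at time $t$ and $\eta$ is a Borel probability measure on the space of continuous curves from $I$ to $Q'(A)$ that is concentrated on absolutely continuous integral curves of the field, i.e. on curves $\gamma$ with $\dot\gamma(t)=v_t(\gamma(t))$ for a.e.\ $t$. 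The infinite-dimensional step is handled through the cylindrical structure of Section~\ref{se.liouville}: one projects onto the finite-dimensional spaces $\mathfrak{p}Q'(A)$, applies the classical finite-dimensional superposition theorem to the marginals $\mathfrak{p}_\sharp\tilde\mu_t$ (which solve a finite-dimensional continuity equation with the averaged vector field), and passes to the projective limit using tightness, the latter following from the moment bound (i) together with the weak-narrow topology attached to $\|\cdot\|_{Q'(A),w}$. Condition (ii) then guarantees that $\eta$ is carried by curves with values in $B_{\mathcal Z_0}(0,1)\cap Q(A)$.

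Finally I would invoke ODE uniqueness: on $B_{\mathcal Z_0}(0,1)\cap Q(A)$ the field $v_t$ is locally Lipschitz from $Q(A)$ into $\mathcal Z_0$ with the uniform estimate furnished by \eqref{C1}, so two integral curves sharing an initial datum coincide. Consequently $\eta$ is concentrated on the graph of the flow, i.e.\ $\eta=\big(z\mapsto\tilde\Phi(\cdot,0)(z)\big)_\sharp\tilde\mu_0$, and therefore $\tilde\mu_t=(e_t)_\sharp\eta=\tilde\Phi(t,0)_\sharp\tilde\mu_0$ for all $t\in I$.

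The step I expect to be the main obstacle is the superposition principle itself in the infinite-dimensional weak-topology setting: one has to ensure that the integral curves produced actually take values in $Q(A)$ rather than merely in $Q'(A)$ — which is where the a priori bound (i) is used, given that $v_t$ maps $Q(A)$ only into the intermediate space $\mathcal Z_0$ — and that the cylindrical/projective approximation is compatible with the non-smoothing nonlinearity $\partial_{\bar z}q_0$, which is only continuous from $Q(A)$ into $\mathcal Z_0$. These are exactly the points settled in \cite{W1}, so in practice the proof reduces to verifying that the hypotheses of \cite[Theorem 2.4]{W1} hold verbatim in the framework of Section~\ref{se.prem-resut}.
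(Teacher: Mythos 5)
Your proposal is correct and follows essentially the same route as the paper, which gives no independent argument for this statement but obtains it as an adaptation of \cite[Theorem 2.4]{W1}: the unitary change of variables $\gamma\mapsto e^{itA}\gamma$ reducing \eqref{eq.cauchyapp} to \eqref{field-eq2}, charge conservation keeping the flow in $B_{\mathcal Z_0}(0,1)$, the Gr\"onwall-type uniqueness of characteristics from \eqref{C1}(ii), and the moment bound (i) and concentration (ii) are exactly the hypotheses the paper checks before invoking that theorem. Your sketch of the internal mechanism of \cite{W1} (cylindrical projections with averaged vector fields, superposition principle, then uniqueness of integral curves) is an accurate description of the argument being cited, so there is nothing to object to.
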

 \subsection*{Acknowledgement}
I would like to thank Zied Ammari and Francis Nier for their precious help and support during the preparation of this paper.

\end{document}